\newtheorem{thm}{Theorem}[section]
\newtheorem{lem}{Lemma}[section]
\title{On the Capacity of a Class of MIMO\\ Cognitive Radios}
\author{Sriram Sridharan,~\IEEEmembership{Student Member,~IEEE} and Sriram Vishwanath,~\IEEEmembership{Member,~IEEE}%
%\authorblockA{\\Laboratory for Informatics, Networks and Communications \\Department of Electrical and Computer %Engineering,
%University of Texas Austin\\ 1 University Station C0803, Austin TX
%78712. USA \\ \{sridhara, sriram\}@ece.utexas.edu}}
\thanks{Manuscript received May 18, 2007; revised September 16, 2007; revised October 26, 2007. This research was supported in part by National Science Foundation
grants NSF CCF-0448181, NSF CCF-0552741, NSF CNS-0615061, and NSF
CNS-0626903, THECB ARP and the Army Research Office YIP. The material in this paper was presented in part at the IEEE Information Theory Workshop, Lake Tahoe, CA, September 2007 \cite{SridharanVishwanath2007}.}%
\thanks{The authors are with the Wireless Networking and Communications Group, Department of Electrical and Computer Engineering, University of Texas at Austin, Austin, TX - 78712 (email: sridhara@ece.utexas.edu; sriram@ece.utexas.edu).}%
}
\begin{document}
\maketitle
\begin{abstract}
Cognitive radios have been studied recently as a means to utilize
spectrum in a more efficient manner. This paper focuses on the
fundamental limits of operation of a MIMO cognitive radio network
with a single licensed user and a single cognitive user. The channel
setting is equivalent to an interference channel with degraded
message sets (with the cognitive user having access to the licensed
user's message). An achievable region and an outer bound is derived
for such a network setting. It is shown that under certain conditions,
the achievable region is optimal for a portion of the capacity region that includes sum
capacity.
\end{abstract}
\section{Introduction}
The design of radios to be ``cognitive"  has been identified by the
Federal Communications Commission (FCC) as the next big step in
better radio resource utilization \cite{FCC}. The term ``cognitive"
has many different connotations both in analysis and in practice,
but with two underlying common themes: {\it intelligence} built into
the radio architecture coupled with {\it adaptivity}.

Cognitive radios have been studied under different model settings. The first models studied cognitive radios as a spectrum sensing problem \cite{Mitola2000}\cite{Ghasemi2005}\cite{Haykin2005}\cite{Srinivasa2006}. Under this setting, the cognitive radio opportunistically uses licensed spectrum when the licensed users are sensed to be absent in that band. Problems encountered in this setup are threefold :
\begin{enumerate}
\item Sensing must be highly accurate to guarantee non interference with the licensed radio.
\item Control and coordination between the cognitive transmitter receiver pair is required to ensure the same spectrum is used, and finally
\item There are no QoS guarantees for the cognitive transmitter receiver pair.
\end{enumerate}
Other models with different side information at the cognitive users have been studied. In \cite{Jafar2007} and \cite{Jafar2007a}, the authors study frequency coding by the cognitive transmitter by assuming non causal knowledge of the frequency use of the primary transmitter.

In this paper, we study cognition from an information theoretic setting where we assume that the cognitive transmitter knows the message of the licensed transmitter apriori. Such a model is interesting for two reasons : 1)  It provides an
upper limit, or equivalently a benchmark on the performance of systems where the cognitive radio gains a partial understanding of the licensed transmitter and 2) It allows us to understand the ultimate limits on the cognitive transmitter by giving it maximum information and allowing it to change its transmission and coding strategy based on all the information available at the licensed user. In essence, it enlarges the possible schemes that can be implemented at the cognitive radio, and 3) It lends itself to information theoretic analysis, being a setting where such tools can be applied to determine the performance limits of the system. Many other configurations, including the interference channel setting when the cognitive transmitter does not know the message of the licensed transmitter are multi-decade long open problems.

The goal of this paper is to study the fundamental limits of performance of cognitive radios. Along the lines of
\cite{Devroye2006a}, we consider the model depicted in Figure
$1$. In this setting, we have an interference
channel \cite{Carleial1978}\cite{Costa1985}\cite{Sato1981}\cite{Han1981},
but with degraded message sets, where the transmitter with a single
message is called ``legacy," ``primary" or ``dumb" and the
transmitter with both messages termed the ``cognitive" transmitter. Prior
work on this model for the single antenna case is in
\cite{Devroye2006a}\cite{Maric2005a}\cite{Jovicic2006}\cite{Wu2006c}.

In this paper, we study the performance of the cognitive radio model under a multiple antenna (MIMO) setting. Both the licensed and cognitive transmitter and receiver may have multiple antennas. MIMO is fast becoming the most common feature of wireless systems due to its performance benefits. Thus, it is important to study the capacity of cognitive radios under a MIMO setting. There are some instances where the methods used in this paper bears similarities with the methods used for the SISO setting. However, most of the proofs and techniques used here are distinct and considerably more involved than those used in \cite{Wu2006c}. In the SISO setting, it is possible to analyze the model for specific magnitudes of channels. This is not possible for the MIMO setting. We list some of the crucial differences between the methods used in this paper and the methods that have been used under the SISO setting.
\begin{enumerate}
\item In \cite{Wu2006c}, the authors obtain the outer bound using conditional entropy inequality. This method cannot be extended to the MIMO setting.
\item We obtain the outer bound through a series of channel transformations. Although the channel transformations are similar in spirit to those in \cite{Jovicic2006}, the actual transformations used are significantly different both in nature and in the mathematical proofs that accompany them. In \cite{Jovicic2006}, the authors reduce the channel to a broadcast channel where the combined transmitters have individual power constraints and the cognitive receiver has the message of the licensed user provided to it by a genie. The capacity region for such a variation of broadcast channel is not known in general. The authors solve for the capacity region of the broadcast channel using aligned channel techniques. On the other hand, we reduce the MIMO cognitive channel to a broadcast channel with sum power constraint and whose capacity region is now known \cite{Weingarten2006}\cite{Mohseni2006}\cite{TieLiuSubmitted}. We then use optimization techniques to compare the achievable scheme with the outer bound.
\end{enumerate}

\subsection{Main Contributions} In this paper, our main contributions include:

1. We find an achievable region for the Gaussian MIMO cognitive
channel (MCC) in a fashion analogous to
\cite{Devroye2006a}\cite{Jovicic2006}\cite{Wu2006c}.

2. We find an outer bound on the capacity region of the MCC.

3. We show that, under certain conditions (that depend on the channel parameters), the outer bound is tight for a portion of the capacity region boundary, including points corresponding to the sum-capacity of the
channel. Combining the two above, we characterize the sum capacity of this
channel and a portion of its entire capacity region under certain conditions.

\subsection{Organization}
The rest of the paper is organized as follows. We describe the
notations and system model in Section \ref{sec : system model}. The
main results are presented in Section \ref{sec : main results}. In
Section \ref{sec : achievability}, we present an achievable region
for the Gaussian MIMO cognitive channel (MCC). An outer bound on the
capacity region is shown in Section \ref{sec : converse}. The
optimality of the achievable region for a portion of the capacity
region (under certain conditions) is shown in Section \ref{sec : optimality}. Numerical results are provided in Section \ref{sec : numerical results}. We conclude in
Section \ref{sec : conclusions}.

\section{System Model and Notation}\label{sec : system model}
Throughout the paper, we use boldface letters to denote vectors and
matrices. $|\mathbf{A}|$ denotes the determinant of matrix
$\mathbf{A}$, while $\mathrm{Tr}(\mathbf{A})$ denotes its trace. For any
general matrix or vector $\mathbf{X}$, $\mathbf{X}^{\dag}$ denotes
its conjugate transpose. $\mathbf{I}_n$ denotes the $n \times n$
identity matrix. $\mathbf{X}^n$ denotes the row vector $(X(1), X(2),
\ldots, X(n))$, where $X(i), i = 1,2,\ldots,n$ can be vectors or
scalars. The notation $\mathbf{H} \succeq \mathbf{0}$ is used to
denote that a square matrix $\mathbf{H}$ is positive semidefinite. Finally, if $\mathbf{S}$ is a set, then $\mathrm{Cl}(\mathbf{S})$ and $\mathrm{Co}(\mathbf{S})$ denote the closure and convex hull of $\mathbf{S}$ respectively.

We consider a MIMO cognitive channel shown in Figure 1. Let $n_{p,t}$ and $n_{p,r}$ denote the number of
transmitter and receiver antennas respectively for the licensed
user. Similarly, $n_{c,t}$ and $n_{c,r}$ denotes the number of
transmitter and receiver antennas for the cognitive user.

\begin{figure}[hbtp]\label{fig_system_model}
\centering
\includegraphics[width = 3in]{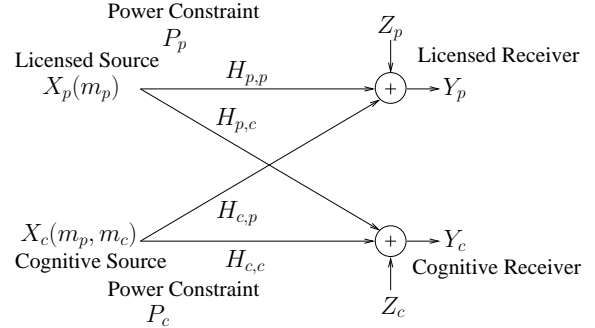}
\caption{MIMO Cognitive Radio System Model}
\end{figure}

The licensed user has message $m_p  \in \{1,2,\ldots,2^{nR_p}\}$
intended for the licensed receiver. The cognitive user has message
$m_c \in \{1,2,\ldots, 2^{nR_c}\}$ intended for the cognitive
receiver as well as the message $m_p$ of the licensed user.

The primary user encodes the message $m_p$ into $\mathbf{X_p}^n$.
Here, $\mathbf{X_p}(i)$ is a $n_{p,t}$ length complex vector. The
cognitive transmitter determines its codeword $\mathbf{X_c}^n$ as a
function of both $m_p$ and $m_c$. Note that the cognitive
transmitter wishes to communicate both $m_p$ (to the licensed
receiver) and $m_c$ (to the cognitive receiver). The channel gain
matrices are given by $\mathbf{H_{p,p}}, \mathbf{H_{p,c}},
\mathbf{H_{c,p}}$ and $\mathbf{H_{c,c}}$, and are assumed to be
static. It is assumed that the licensed receiver knows
$\mathbf{H_{p,p}}, \mathbf{H_{c,p}}$, the licensed transmitter knows
$\mathbf{H_{p,p}}$. It is also assumed that the cognitive
transmitter knows $\mathbf{H_{c,p}}, \mathbf{H_{p,c}},
\mathbf{H_{c,c}}$ and the cognitive receiver knows
$\mathbf{H_{p,c}}, \mathbf{H_{c,c}}$. The received vectors of the
licensed and cognitive users are denoted by $\mathbf{Y_p}^n$ and
$\mathbf{Y_c}^n$ respectively.

With the above model and notations, we can describe the system at
time slot $i$ by
\begin{eqnarray}
\begin{array}{c}
\mathbf{Y_p}(i) = \mathbf{H_{p,p}}\mathbf{ X_p}(i) + \mathbf{H_{c,p}}\mathbf{ X_c}(i) + \mathbf{Z_p}(i) \\
\mathbf{Y_c}(i) = \mathbf{H_{p,c}}\mathbf{ X_p}(i) + \mathbf{H_{c,c}}\mathbf{ X_c}(i) + \mathbf{Z_c}(i).
\end{array}
\end{eqnarray}

The additive noise at the primary and secondary receivers is denoted
by $\mathbf{Z_p}^n$ and $\mathbf{Z_c}^n$ respectively. The noise
vectors $\mathbf{Z_p}^n$ and $\mathbf{Z_c}^n$ are Gaussian and are
assumed to be i.i.d. across symbol times and distributed according
to $\mathcal{N}(0, \mathbf{I_{n_{p,r}}})$ and $\mathcal{N}(0,
\mathbf{I_{n_{c,r}}})$ respectively. The correlation between
$\mathbf{Z_p}^n$ and $\mathbf{Z_c}^n$ is assumed to be arbitrary.
This correlation does not impact the capacity region of the system
as the licensed and the cognitive decoders do not co-operate with
each other. \footnote{A proof of this can be obtained using steps almost exactly identical to those for the broadcast channel in \cite[Exercise 15.10]{ThomasCover}}

We denote the covariance of the codewords of the licensed and
cognitive transmitters at time $i$ by $\mathbf{\Sigma_p}(i)$ and
$\mathbf{\Sigma_c}(i)$ respectively. Then, the transmitters are
constrained by the following transmit power constraints.
\begin{eqnarray} \label{eqn : power constraint}
\begin{array}{c}
\sum_{i=1}^n \mathrm{Tr}(\mathbf{\Sigma_p}(i)) \leq n P_p \\
\sum_{i=1}^n \mathrm{Tr}(\mathbf{\Sigma_c}(i)) \leq n P_c.
\end{array}
\end{eqnarray}

A rate pair $(R_p, R_c)$ is said to be achievable if
\begin{enumerate}
\item there exists a sequence of encoding functions for the licensed
and cognitive users $E_p^n : \{1,\ldots,2^{nR_p}\} \rightarrow
\mathbf{X_p}^n$ and $E_c^n : \{1,\ldots,2^{nR_p}\}\times
\{1,\ldots,2^{nR_c}\} \rightarrow \mathbf{X_c}^n$ such that the
codewords satisfy the power constraints given by (\ref{eqn : power
constraint}),
\item there exists decoding rules $D_p^n : \mathbf{Y_p}^n \rightarrow
\{1,\ldots,2^{nR_p}\}$ and $D_c^n : \mathbf{Y_c}^n \rightarrow
\{1,\ldots,2^{nR_c}\}$ such that the average probability of decoding
error is arbitrarily small for suitably large values of $n$.
\end{enumerate}
The capacity region of the Gaussian MIMO cognitive channel is the
set of all achievable rate pairs $(R_p, R_c)$ and is denoted by $\mathcal{C}_{MCC}$.

\section{Main Results}\label{sec : main results}
In this section, we describe the main results of the paper. Let $\mathbf{G} =\left[ \mathbf{H_{p,p}}\ \ \mathbf{H_{c,p}}\right]$. Let
$\mathcal{R}_{ach}$ denote the set described by
\begin{eqnarray}
\left\{\begin{array}{l} \bigg((R_p, R_c), \mathbf{\Sigma_p}, \mathbf{\Sigma_{c,p}},  \mathbf{\Sigma_{c,c}}, \mathbf{Q}\bigg) : \vspace{0.15cm}\\
R_p \geq 0, R_c \geq 0, \mathbf{\Sigma_p} \succeq \mathbf{0},
\mathbf{\Sigma_{c,p}} \succeq \mathbf{0}, \mathbf{\Sigma_{c,c}} \succeq \mathbf{0} \vspace{0.15cm}\\
R_p \leq \log\left| \mathbf{I} + \mathbf{G} \mathbf{\Sigma_{p,net}} \mathbf{G^{\dagger}} + \mathbf{H_{c,p}} \mathbf{\Sigma_{c,c}} \mathbf{H_{c,p}^{\dagger}}\right|\\
\qquad\quad - \log\left| \mathbf{I} + \mathbf{H_{c,p}\Sigma_{c,c}H_{c,p}^{\dagger}}\right|\vspace{0.15cm}\\
R_c \leq \log\left| \mathbf{I} + \mathbf{H_{c,c} \Sigma_{c,c} H_{c,c}^{\dagger}}\right|\vspace{0.15cm}\\
\mathbf{\Sigma_{p,net}} = \left(\begin{array}{ll} \mathbf{\Sigma_p} & \mathbf{Q} \\\\ \mathbf{Q^{\dagger}}  & \mathbf{\Sigma_{c,p}}\end{array}\right) \succeq \mathbf{0}, \vspace{0.15cm}\\
\mathrm{Tr}(\mathbf{\Sigma_p}) \leq P_p,\ \  \mathrm{Tr}(\mathbf{\Sigma_{c,p}} + \mathbf{\Sigma_{c,c}}) \leq P_c\end{array}\right\}.
\end{eqnarray}
In this setting, $\mathbf{\Sigma_{p, net}}$ is a $(n_{p,t} + n_{c,t}) \times (n_{p,t} + n_{c,t})$ covariance matrix while $\mathbf{\Sigma_{c,c}}$ is a $n_{c,t} \times n_{c,t}$ covariance matrix. $\mathbf{\Sigma_p}$ and $\mathbf{\Sigma_{c,p}}$ represent principal submatrices of $\mathbf{\Sigma_{p, net}}$ of dimensions $n_{p,t} \times n_{p,t}$ and $n_{c,t} \times n_{c,t}$ respectively. The covariances matrices $\mathbf{\Sigma_p}$, $\mathbf{\Sigma_{c,p}}$ and $\mathbf{\Sigma_{c,c}}$ determine the power constraints of the system.

Let $\mathcal{R}_{in}$ denote the closure of the convex hull of the set of rate pairs described by
\begin{eqnarray}\label{eqn : achievable region}
\left\{\begin{array}{c}(R_p, R_c) : \exists\ \mathbf{\Sigma_p},
\mathbf{\Sigma_{c,p}}, \mathbf{\Sigma_{c,c}}, \mathbf{Q},
\textrm{ and }\qquad\\
\qquad\bigg((R_p, R_c), \mathbf{\Sigma_p},
\mathbf{\Sigma_{c,p}}, \mathbf{\Sigma_{c,c}}, \mathbf{Q}\bigg) \in
\mathcal{R}_{ach}\end{array}\right\}.
\end{eqnarray}
\begin{thm} \label{thm : achievability} The capacity region of the MCC, $\mathcal{C}_{MCC}$ satisfies
\begin{equation}
\mathcal{R}_{in} \subseteq \mathcal{C}_{MCC}.
\end{equation}
\end{thm}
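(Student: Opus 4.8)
The plan is to prove achievability by a random-coding argument that combines superposition coding (so that the cognitive transmitter can coherently assist the licensed user) with Gelfand--Pinsker / dirty-paper coding (so that the cognitive transmitter can pre-cancel the interference it already knows), in a manner analogous to the SISO constructions of \cite{Devroye2006a}\cite{Jovicic2006}\cite{Wu2006c}. Fix any tuple $\big((R_p,R_c),\mathbf{\Sigma_p},\mathbf{\Sigma_{c,p}},\mathbf{\Sigma_{c,c}},\mathbf{Q}\big)\in\mathcal{R}_{ach}$. I would split the cognitive codeword as $\mathbf{X_c}=\mathbf{X_{c,p}}+\mathbf{X_{c,c}}$, where $\mathbf{X_{c,p}}$ (covariance $\mathbf{\Sigma_{c,p}}$) carries no new information but is correlated with the licensed codeword $\mathbf{X_p}$ so as to beamform power toward the licensed receiver, while $\mathbf{X_{c,c}}$ (covariance $\mathbf{\Sigma_{c,c}}$, independent of $\mathbf{X_{c,p}}$ and $\mathbf{X_p}$) carries $m_c$. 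The constraint $\mathbf{\Sigma_{p,net}}\succeq\mathbf{0}$ guarantees that a jointly Gaussian pair $(\mathbf{X_p},\mathbf{X_{c,p}})$ with the prescribed cross-covariance $\mathbf{Q}$ exists, and the constraints $\mathrm{Tr}(\mathbf{\Sigma_p})\leq P_p$ and $\mathrm{Tr}(\mathbf{\Sigma_{c,p}}+\mathbf{\Sigma_{c,c}})\leq P_c$ are exactly the per-transmitter power constraints (\ref{eqn : power constraint}), since $\mathbf{X_{c,p}}$ and $\mathbf{X_{c,c}}$ are independent.

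For the codebook, I would generate, for each $m_p$, a codeword pair $(\mathbf{X_p}^n(m_p),\mathbf{X_{c,p}}^n(m_p))$ i.i.d.\ from $\mathcal{N}(\mathbf{0},\mathbf{\Sigma_{p,net}})$; the licensed transmitter emits the first block and the cognitive transmitter, which knows $m_p$, emits the second block as part of its signal. Because both blocks are deterministic functions of $m_p$, the quantity $\mathbf{S}=\mathbf{H_{p,c}}\mathbf{X_p}^n(m_p)+\mathbf{H_{c,c}}\mathbf{X_{c,p}}^n(m_p)$ seen at the cognitive receiver is a state sequence known non-causally at the cognitive transmitter. I would then encode $m_c$ into $\mathbf{X_{c,c}}^n$ by Costa-style dirty-paper coding against this known state, using a jointly Gaussian auxiliary $\mathbf{U}=\mathbf{X_{c,c}}+\mathbf{F}\,\mathbf{S}$ with $\mathbf{X_{c,c}}\sim\mathcal{N}(\mathbf{0},\mathbf{\Sigma_{c,c}})$ chosen independent of $\mathbf{S}$ and $\mathbf{F}$ the matrix dirty-paper coefficient.

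Decoding and rate analysis then proceed on the two links separately. At the licensed receiver, writing $\mathbf{Y_p}=\mathbf{G}\,\mathbf{X_{net}}+\mathbf{H_{c,p}}\mathbf{X_{c,c}}^n+\mathbf{Z_p}$, where $\mathbf{X_{net}}$ is the concatenation of $\mathbf{X_p}^n$ and $\mathbf{X_{c,p}}^n$ with covariance $\mathbf{\Sigma_{p,net}}$: since the dirty-paper signal $\mathbf{X_{c,c}}$ is, by the Costa construction, independent of the state $\mathbf{S}$ and hence (as $\mathbf{S}$ carries all of its dependence on the primary codewords) independent of $\mathbf{X_{net}}$, the receiver may treat $\mathbf{H_{c,p}}\mathbf{X_{c,c}}^n+\mathbf{Z_p}$ as Gaussian noise of covariance $\mathbf{I}+\mathbf{H_{c,p}}\mathbf{\Sigma_{c,c}}\mathbf{H_{c,p}^{\dagger}}$ and decode $m_p$ by joint typicality. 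The resulting mutual information $I(\mathbf{X_{net}};\mathbf{Y_p})$ equals the stated difference of log-determinants, so every $R_p$ below it is achievable. At the cognitive receiver, the Gelfand--Pinsker argument yields achievable rate $I(\mathbf{U};\mathbf{Y_c})-I(\mathbf{U};\mathbf{S})$, and the dirty-paper choice of $\mathbf{F}$ collapses this to $\log|\mathbf{I}+\mathbf{H_{c,c}}\mathbf{\Sigma_{c,c}}\mathbf{H_{c,c}^{\dagger}}|$, i.e.\ the known interference is completely removed. Finally, since $\mathcal{R}_{in}$ is the closure of the convex hull of the rate pairs in (\ref{eqn : achievable region}), a coded time-sharing argument over these individual points completes the proof.

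I expect the main obstacle to be the rigorous justification of the vector dirty-paper step: one must exhibit the matrix $\mathbf{F}$ for which $I(\mathbf{U};\mathbf{Y_c})-I(\mathbf{U};\mathbf{S})$ reduces exactly to $\log|\mathbf{I}+\mathbf{H_{c,c}}\mathbf{\Sigma_{c,c}}\mathbf{H_{c,c}^{\dagger}}|$ for an \emph{arbitrary} correlated Gaussian state, and simultaneously verify that the transmitted $\mathbf{X_{c,c}}$ stays independent of the state (and thus of $\mathbf{X_{net}}$), which is what validates the clean $R_p$ expression above. This is the matrix generalization of Costa's result, and care is needed because the state covariance here is a nontrivial function of $\mathbf{\Sigma_{p,net}}$ through both $\mathbf{H_{p,c}}$ and $\mathbf{H_{c,c}}$ rather than a scaled identity.
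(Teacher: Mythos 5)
Your proposal follows essentially the same route as the paper's proof: superposition at the cognitive transmitter of a cooperative component $\mathbf{X_{c,p}}$ (jointly Gaussian with $\mathbf{X_p}$, joint covariance $\mathbf{\Sigma_{p,net}}$) and a Costa dirty-paper-coded component $\mathbf{X_{c,c}}$ precoded against the known interference $\mathbf{H_{p,c}}\mathbf{X_p}^n + \mathbf{H_{c,c}}\mathbf{X_{c,p}}^n$, with the licensed receiver treating $\mathbf{H_{c,p}}\mathbf{X_{c,c}}^n + \mathbf{Z_p}^n$ as Gaussian noise, which yields exactly the two rate bounds defining $\mathcal{R}_{ach}$. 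The vector dirty-paper step you flag as the main remaining obstacle is precisely what the paper disposes of by citation to \cite{Costa1983} and its MIMO extension \cite{Yu2004}, and your identification of the state sequence agrees with the paper's decoding description (its encoding paragraph writes $\mathbf{H_{p,p}}\mathbf{X_p}^n$ where $\mathbf{H_{p,c}}\mathbf{X_p}^n$ is meant).
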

The proof of the theorem is given in Section \ref{sec :
achievability}. The coding strategy is based on Costa's dirty paper
coding \cite{Costa1983}\cite{Yu2004}.

We now describe an outer bound on the capacity region of the MIMO
cognitive channel. Let $\alpha > 0$, $\mathbf{G_{\alpha}} =
\left[\mathbf{H_{p,p}}\ \ \frac{\mathbf{H_{c,p}}}{\sqrt{\alpha}}\right]$ and $\overline{\mathbf{K}} = \left[\begin{array}{cc}\mathbf{H_{p,p}} &
\mathbf{H_{c,p}}/\sqrt{\alpha} \\ \mathbf{0} &
\mathbf{H_{c,c}}/\sqrt{\alpha}\end{array}\right]$. Let $\mathbf{\Sigma_z}$  be a covariance matrix of dimensions $(n_{p,r} + n_{c,r}) \times (n_{p,r} + n_{c,r})$ and of the form
\begin{equation}\label{eqn : sigma_z}
\mathbf{\Sigma_z} = \left[\begin{array}{cc} \mathbf{I}_{n_{p,r}} & \mathbf{Q_z} \\ \mathbf{Q_z}^{\dagger} & \mathbf{I}_{n_{c,r}} \end{array}\right].
\end{equation}
Here, $\mathbf{Q_z}$ is a $n_{p,r} \times n_{c,r}$ matrix that makes $\mathbf{\Sigma_z}$ positive semidefinite.
Let $\mathcal{R}_{conv}^{\alpha, \mathbf{\Sigma_z}}$ denote the set described by
\begin{align}\label{eqn : describe converse region}
\left\{\begin{array}{l}\bigg((R_p, R_c), \mathbf{Q_p}, \mathbf{Q_c}\bigg) : \vspace{0.15cm}\\
R_p \geq 0, R_c \geq 0, \mathbf{Q_p} \succeq \mathbf{0}, \mathbf{Q_c} \succeq \mathbf{0}\vspace{0.15cm}\\
R_p \leq \log\left| \mathbf{I} + \mathbf{G_{\alpha}} \mathbf{Q_p} \mathbf{G_{\alpha}^{\dagger}} + \mathbf{G_{\alpha}} \mathbf{Q_c} \mathbf{G_{\alpha}^{\dagger}}\right| \\
\qquad\quad - \log\left| \mathbf{I} + \mathbf{G_{\alpha}} \mathbf{Q_c} \mathbf{G_{\alpha}^{\dagger}}\right| \vspace{0.15cm}\\
R_c \leq \log\left| \mathbf{\Sigma_z} + \overline{\mathbf{K}} \mathbf{Q_c} \overline{\mathbf{K^{\dagger}}}\right| - \log\left|\mathbf{\Sigma_z}\right|\vspace{0.15cm}\\
\mathrm{Tr}(\mathbf{Q_p}) + \mathrm{Tr}(\mathbf{Q_c}) \leq P_p + \alpha P_c \end{array}\right\}.
\end{align}

Let $\mathcal{R}_{out}^{\alpha, \mathbf{\Sigma_z}}$ denote the closure of the convex hull of the set of rate pairs
described by
\begin{align}\label{eqn : converse region}
\left\{(R_p, R_c) : \exists
\mathbf{Q_p}, \mathbf{Q_c} \succeq \mathbf{0},\  ((R_p, R_c), \mathbf{Q_p}, \mathbf{Q_c}) \in
\mathcal{R}_{conv}^{\alpha, \mathbf{\Sigma_z}}\right\}.
\end{align}
Also, let $\mathcal{R}_{out}$ be represented as
\begin{equation}
\mathcal{R}_{out} = \bigcap_{\Sigma_z}\bigcap_{\alpha>0}\mathcal{R}_{out}^{\alpha}.
\end{equation}
Then, the next theorem describes an
outer bound on the capacity region of the MCC.
\begin{thm} \label{thm : converse}
The capacity region of the MCC, $\mathcal{C}_{MCC}$ satisfies
\begin{eqnarray}
\mathcal{C}_{MCC} &\subseteq & \mathcal{R}_{out}^{\alpha, \mathbf{\Sigma_z}}, \forall \alpha > 0 , \mathbf{\Sigma_z}\nonumber\\
\mathcal{C}_{MCC} &\subseteq & \mathcal{R}_{out}.
\end{eqnarray}
\end{thm}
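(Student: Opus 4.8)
The plan is to fix $\alpha > 0$ and an admissible $\mathbf{\Sigma_z}$, prove $\mathcal{C}_{MCC} \subseteq \mathcal{R}_{out}^{\alpha, \mathbf{\Sigma_z}}$, and obtain the second inclusion by intersecting over all $\alpha$ and $\mathbf{\Sigma_z}$. The central idea is to transform the MCC, via a genie together with a sequence of capacity-non-decreasing channel modifications, into a two-user MIMO Gaussian broadcast channel under a \emph{sum} power constraint, and then invoke the known capacity region of that channel \cite{Weingarten2006, Mohseni2006, TieLiuSubmitted}.

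First I would start from an arbitrary achievable pair $(R_p, R_c)$ and apply Fano's inequality at the two receivers. To the cognitive receiver I would hand, by a genie, the primary message $m_p$ (hence $\mathbf{X_p}^n$) and, in addition, the entire primary output $\mathbf{Y_p}^n$. Knowledge of $m_p$ lets the cognitive receiver subtract the interfering term $\mathbf{H_{p,c}}\mathbf{X_p}$, leaving the clean observation $\mathbf{H_{c,c}}\mathbf{X_c} + \mathbf{Z_c}$; appending $\mathbf{Y_p}^n$ then gives it a stacked observation whose noise covariance is exactly $\mathbf{\Sigma_z}$, with $\mathbf{Q_z}$ encoding the (arbitrary) $\mathbf{Z_p}$--$\mathbf{Z_c}$ correlation. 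Both operations only enlarge the region. Next I would rescale the cognitive input as $\widetilde{\mathbf{X}}_c = \sqrt{\alpha}\,\mathbf{X_c}$ and fold $1/\sqrt{\alpha}$ into the cognitive channel matrices $\mathbf{H_{c,p}}/\sqrt{\alpha}$ and $\mathbf{H_{c,c}}/\sqrt{\alpha}$; this invertible relabeling turns the individual constraints $\mathrm{Tr}(\mathbf{\Sigma_p}) \leq P_p$ and $\mathrm{Tr}(\mathbf{\Sigma_c}) \leq P_c$ into $\mathrm{Tr}(\mathrm{Cov}(\mathbf{X_p})) + \mathrm{Tr}(\mathrm{Cov}(\widetilde{\mathbf{X}}_c)) \leq P_p + \alpha P_c$, and relaxing to this single sum constraint again only enlarges the region.

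After these steps the primary receiver sees $\mathbf{Y_p} = \mathbf{G_{\alpha}}\,[\mathbf{X_p};\widetilde{\mathbf{X}}_c] + \mathbf{Z_p}$ and the enhanced cognitive receiver sees $\overline{\mathbf{K}}\,[\mathbf{X_p};\widetilde{\mathbf{X}}_c] + \mathbf{Z}$ with $\mathrm{Cov}(\mathbf{Z}) = \mathbf{\Sigma_z}$. Combining the two transmitters into one super-transmitter with input $[\mathbf{X_p};\widetilde{\mathbf{X}}_c]$ is legitimate for an outer bound, since the cognitive encoding $\mathbf{X_p}=f(m_p)$, $\mathbf{X_c}=g(m_p,m_c)$ is a special case of arbitrary joint encoding; the result is a two-user MIMO Gaussian broadcast channel with sum power $P_p + \alpha P_c$. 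Crucially, because the top $n_{p,r}$ coordinates of $\overline{\mathbf{K}}\,[\mathbf{X_p};\widetilde{\mathbf{X}}_c] + \mathbf{Z}$ coincide with $\mathbf{Y_p}$, the primary output is a deterministic function of the cognitive output, so this broadcast channel is \emph{degraded} with the primary as the weaker user. Its capacity region is therefore the single dirty-paper ordering in which the cognitive message is encoded interference-free of the primary, giving $R_c \leq \log|\mathbf{\Sigma_z} + \overline{\mathbf{K}}\mathbf{Q_c}\overline{\mathbf{K}}^{\dagger}| - \log|\mathbf{\Sigma_z}|$, while the primary treats the cognitive covariance as noise, giving $R_p \leq \log|\mathbf{I}+\mathbf{G_{\alpha}}(\mathbf{Q_p}+\mathbf{Q_c})\mathbf{G_{\alpha}}^{\dagger}| - \log|\mathbf{I}+\mathbf{G_{\alpha}}\mathbf{Q_c}\mathbf{G_{\alpha}}^{\dagger}|$, with $\mathrm{Tr}(\mathbf{Q_p})+\mathrm{Tr}(\mathbf{Q_c}) \leq P_p + \alpha P_c$. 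This is exactly $\mathcal{R}_{conv}^{\alpha,\mathbf{\Sigma_z}}$, and taking the closure of the convex hull yields $\mathcal{C}_{MCC} \subseteq \mathcal{R}_{out}^{\alpha,\mathbf{\Sigma_z}}$.

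Finally, since the true MCC capacity region does not depend on the $\mathbf{Z_p}$--$\mathbf{Z_c}$ correlation, the above holds for \emph{every} admissible $\mathbf{\Sigma_z}$ and every $\alpha>0$, so intersecting gives $\mathcal{C}_{MCC} \subseteq \bigcap_{\mathbf{\Sigma_z}}\bigcap_{\alpha>0}\mathcal{R}_{out}^{\alpha,\mathbf{\Sigma_z}} = \mathcal{R}_{out}$. I expect the main obstacle to be the rigorous justification that the enhanced channel is degraded in precisely the sense needed so that one dirty-paper ordering exhausts the outer bound, together with matching the sum-power MIMO broadcast capacity characterization of \cite{Weingarten2006, Mohseni2006, TieLiuSubmitted} term-by-term to the claimed determinant expressions; the remaining bookkeeping (the Fano terms, the per-symbol to average-covariance single-letterization, and the convexification) should be routine by comparison.
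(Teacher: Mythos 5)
Your overall architecture is the same as the paper's: stack a copy of the primary observation at the cognitive receiver, rescale the cognitive input by $\sqrt{\alpha}$ to convert the two individual power constraints into the sum constraint $P_p + \alpha P_c$, let the transmitters cooperate, recognize the result as a physically degraded MIMO broadcast channel, and invoke \cite{Weingarten2006}\cite{Mohseni2006}\cite{TieLiuSubmitted}. (Your choice to hand over the true $\mathbf{Y_p}^n$ and fix the $\mathbf{Z_p}$--$\mathbf{Z_c}$ correlation to $\mathbf{Q_z}$, rather than introduce a correlated replica $\mathbf{\hat{Y}_p}^n$ as the paper does, is an immaterial difference.) However, there is a genuine gap at the one step where you deviate: you remove the cross term $\mathbf{H_{p,c}}\mathbf{X_p}$ by having a genie hand $m_p$ to the cognitive receiver. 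After that step, the channel you end up with is \emph{not} a standard broadcast channel --- it is a broadcast channel in which the stronger receiver knows the weaker receiver's message as side information. The capacity region you must outer-bound is the capacity region of \emph{that} channel (since $\mathcal{C}_{MCC}$ is only contained in the region of the enhanced channel, side information included), and the theorem of \cite{Weingarten2006} says nothing about it as a black box: a code for the side-information channel is not a code for the standard BC, so its converse does not apply. This is precisely the pitfall the paper flags in its introduction when contrasting its proof with \cite{Jovicic2006}, where the reduction is to ``a broadcast channel ... [where] the cognitive receiver has the message of the licensed user provided to it by a genie,'' whose capacity region ``is not known in general.'' Your closing remark that the main obstacle is verifying degradedness misses this: the issue is not whether the channel is degraded, but that it carries residual message side information that must be removed before the BC capacity theorem can be cited.

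The gap is fixable, and the fix is exactly the paper's Transformation 3 (its Lemma on SMCCA $=$ SMCCB): no genie is needed, because the cognitive receiver's stacked observation already contains a copy of the primary output that is statistically identical to $\mathbf{Y_p}^n$; applying the primary user's own decoder $D_p^n$ to that component recovers $m_p$ with the same vanishing error probability, after which the receiver reconstructs $\mathbf{X_p}^n = E_p^n(m_p)$ and cancels $\mathbf{H_{p,c}}\mathbf{X_p}^n$, with a union bound controlling the total error. This decode-and-cancel argument shows the channel with the $\mathbf{H_{p,c}}$ link deleted has the \emph{same} capacity region as the stacked channel, with no side information left over; only then does transmitter cooperation produce a genuine degraded MIMO BC with sum power $P_p+\alpha P_c$ to which \cite{Weingarten2006} applies. (Equivalently, you could keep your genie and separately prove that, for a physically degraded BC, giving the weak user's message to the strong receiver does not enlarge the capacity region --- but the proof of that fact is the same decode-and-simulate argument, so the missing idea is identical.) As written, your proposal skips this step, and the inclusion $\mathcal{C}_{MCC} \subseteq \mathcal{R}_{out}^{\alpha, \mathbf{\Sigma_z}}$ does not follow from the results you invoke.
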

The proof is given in Section \ref{sec : converse} and proceeds by a
series of channel transformations. Each channel transformation results in
a new channel whose capacity region is in general a superset (outer bound) of
the capacity region of the preceding channel.

Let $\mathrm{BC}(\mathbf{H_1}, \mathbf{H_2}, P)$ denote a two user MIMO broadcast channel with channel matrices given by $\mathbf{H_1}$ and $\mathbf{H_2}$ and with a transmitter power constraint of $P$. Let $\mathcal{C}_{BC}^{H_1, H_2, P}$ denote the capacity region of $\mathrm{BC}(\mathbf{H_1}, \mathbf{H_2}, P)$.

Let $\mathcal{R}_{part,conv}^{\alpha}$ denote the set described by
\begin{align}\label{eqn : describe partial converse region}
\left\{\begin{array}{l}\bigg((R_p, R_c), \mathbf{Q_p}, \mathbf{\Sigma_{c,c}}\bigg) : \vspace{0.15cm}\\
R_p \geq 0, R_c \geq 0, \mathbf{Q_p} \succeq \mathbf{0}, \mathbf{\Sigma_{c,c}} \succeq \mathbf{0},\vspace{0.15cm}\\
R_p \leq \log\left| \mathbf{I} + \mathbf{G_{\alpha}} \mathbf{Q_p} \mathbf{G_{\alpha}^{\dagger}} + \frac{1}{\alpha}\mathbf{H_{c,p}} \mathbf{\Sigma_{c,c}} \mathbf{\Sigma_{c,p}^{\dagger}}\right|\vspace{0.05cm}\\
\qquad\quad - \log\left| \mathbf{I} + \frac{1}{\alpha}\mathbf{H_{c,p}} \mathbf{\Sigma_{c,c}} \mathbf{H_{c,p}^{\dagger}}\right| \vspace{0.15cm}\\
R_c \leq \log\left| \mathbf{I} + \frac{1}{\alpha}\mathbf{H_{c,c}} \mathbf{\Sigma_{c,c}} \mathbf{H_{c,c}^{\dagger}}\right| \vspace{0.15cm}\\
\mathrm{Tr}(\mathbf{Q_p}) + \mathrm{Tr}(\mathbf{\Sigma_{c,c}}) \leq P_p + \alpha P_c \end{array}\right\}.
\end{align}
We let $\mathcal{R}_{part, out}^{\alpha}$ to denote the closure of the convex hull of the set of rate pairs described by
\begin{align}\label{eqn : partial converse region}
\left\{\begin{array}{l}(R_p, R_c) : \exists
\mathbf{Q_p}, \mathbf{\Sigma_{c,c}} \succeq \mathbf{0} \textrm{ and }\\
\qquad\qquad ((R_p, R_c), \mathbf{Q_p}, \mathbf{\Sigma_{c,c}}) \in
\mathcal{R}_{part, conv}^{\alpha}\end{array}\right\}.
\end{align}

Let $\mathbf{K} = \left[\mathbf{0}\ \mathbf{H_{c,c}}/\sqrt{\alpha}\right]$. We show that if the boundary of the rate region described by $\mathcal{R}_{part, out}^{\alpha}$ partially meets the boundary of the capacity region of $BC(\mathbf{G}_{\alpha}, \mathbf{K}, P_p + \alpha P_c)$, then the boundary of $\mathcal{R}_{part, out}^{\alpha}$ partially meets the boundary of the rate region described by $\mathcal{R}_{out}^{\alpha, \mathbf{\Sigma_z}}$ in (\ref{eqn : converse region}) for some $\mathbf{\Sigma_z}$. We formally state the result in Theorem \ref{thm : partial converse}. For notational convenience, we will denote the capacity region of $BC(\mathbf{G}_{\alpha}, \mathbf{K}, P_p + \alpha P_c)$ by $\mathcal{C}_{BC}^{\alpha}$.
\begin{thm}\label{thm : partial converse}
Let $\mu \geq 1$ and $\alpha > 0$. If
\begin{equation}\label{eqn : condition}
\max_{(R_p, R_c) \in \mathcal{R}_{part, out}^{\alpha}} \mu R_p + R_c = \max_{(R_p, R_c) \in \mathcal{C}_{BC}^{\alpha}} \mu R_p + R_c,
\end{equation}
then, we have
\begin{equation}
\max_{(R_p, R_c) \in \mathcal{R}_{part, out}^{\alpha}} \mu R_p + R_c = \inf_{\mathbf{\Sigma_z}}\max_{(R_p, R_c) \in \mathcal{R}_{out}^{\alpha, \mathbf{\Sigma_z}}} \mu R_p + R_c.
\end{equation}
\end{thm}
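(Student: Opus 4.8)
The plan is to prove the asserted equality by sandwiching the middle quantity between the other two and then collapsing the sandwich with hypothesis~(\ref{eqn : condition}). Abbreviate $P \triangleq P_p + \alpha P_c$, write $f_{out}(\mu,\mathbf{\Sigma_z}) \triangleq \max_{(R_p,R_c)\in\mathcal{R}_{out}^{\alpha,\mathbf{\Sigma_z}}}\mu R_p + R_c$, $f_{part}\triangleq \max_{(R_p,R_c)\in\mathcal{R}_{part,out}^{\alpha}}\mu R_p+R_c$ and $f_{BC}\triangleq \max_{(R_p,R_c)\in\mathcal{C}_{BC}^{\alpha}}\mu R_p+R_c$, and note the structural fact $\overline{\mathbf{K}}=\left[\begin{array}{c}\mathbf{G_\alpha}\\ \mathbf{K}\end{array}\right]$: the enhanced cognitive receiver defining $\mathcal{R}_{out}^{\alpha,\mathbf{\Sigma_z}}$ observes, on its upper block, \emph{exactly} the primary channel $\mathbf{G_\alpha}$ with noise $\mathbf{I}_{n_{p,r}}$. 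I would establish
\[
f_{part} \ \le\ \inf_{\mathbf{\Sigma_z}} f_{out}(\mu,\mathbf{\Sigma_z}) \ \le\ f_{BC},
\]
and then invoke (\ref{eqn : condition}), i.e. $f_{part}=f_{BC}$, to force both inequalities to be equalities, which is precisely the claim.

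For the left inequality I would prove the stronger inclusion $\mathcal{R}_{part,out}^{\alpha}\subseteq \mathcal{R}_{out}^{\alpha,\mathbf{\Sigma_z}}$ for \emph{every} admissible $\mathbf{\Sigma_z}$. Given $((R_p,R_c),\mathbf{Q_p},\mathbf{\Sigma_{c,c}})\in\mathcal{R}_{part,conv}^{\alpha}$, take $\mathbf{Q_c}=\left[\begin{array}{cc}\mathbf{0}&\mathbf{0}\\ \mathbf{0}&\mathbf{\Sigma_{c,c}}\end{array}\right]\succeq\mathbf{0}$. Then $\mathbf{G_\alpha}\mathbf{Q_c}\mathbf{G_\alpha^\dagger}=\frac{1}{\alpha}\mathbf{H_{c,p}}\mathbf{\Sigma_{c,c}}\mathbf{H_{c,p}^\dagger}$, so the $R_p$ constraint of $\mathcal{R}_{conv}^{\alpha,\mathbf{\Sigma_z}}$ reduces verbatim to that of $\mathcal{R}_{part,conv}^{\alpha}$, and the trace budgets coincide. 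The cognitive bound $\log\!\left|\mathbf{\Sigma_z}+\overline{\mathbf{K}}\mathbf{Q_c}\overline{\mathbf{K}}^\dagger\right|-\log\!\left|\mathbf{\Sigma_z}\right|$ equals $I(X;\overline{\mathbf{K}}X+N)$ for $X\sim\mathbf{Q_c}$, $N\sim\mathbf{\Sigma_z}$; discarding the upper ($\mathbf{G_\alpha}$) block of the observation can only decrease mutual information, and what survives is $\log\!\left|\mathbf{I}+\mathbf{K}\mathbf{Q_c}\mathbf{K}^\dagger\right|=\log\!\left|\mathbf{I}+\frac{1}{\alpha}\mathbf{H_{c,c}}\mathbf{\Sigma_{c,c}}\mathbf{H_{c,c}^\dagger}\right|$, the cognitive bound of $\mathcal{R}_{part,conv}^{\alpha}$. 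Hence the $\mathcal{R}_{out}^{\alpha,\mathbf{\Sigma_z}}$ cognitive bound dominates the partial one, the inclusion survives taking convex hulls and closures, and $f_{part}\le f_{out}(\mu,\mathbf{\Sigma_z})$ for all $\mathbf{\Sigma_z}$, giving the left inequality.

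The right inequality is the crux, and is a worst-case-noise construction. First, $\mathcal{R}_{part,out}^{\alpha}\subseteq\mathcal{C}_{BC}^{\alpha}$ unconditionally, since its constraints are exactly the dirty-paper rates of $BC(\mathbf{G_\alpha},\mathbf{K},P)$ for the ordering in which the (clean) cognitive message is encoded last; thus $f_{part}\le f_{BC}$. To bound $\inf_{\mathbf{\Sigma_z}}$ I would let $(\mathbf{Q_p^\star},\mathbf{\Sigma_{c,c}^\star})$, with $\mathbf{Q_c^\star}=\left[\begin{array}{cc}\mathbf{0}&\mathbf{0}\\ \mathbf{0}&\mathbf{\Sigma_{c,c}^\star}\end{array}\right]$, attain the right-hand optimum of (\ref{eqn : condition}), and construct a single least-favourable $\mathbf{Q_z^\star}$ (hence $\mathbf{\Sigma_z^\star}$) making the upper observation \emph{redundant at this input}: choose $\mathbf{Q_z^\star}$ so that $X\to(\mathbf{K}X+N_2)\to(\mathbf{G_\alpha}X+N_1)$ is Markov under $X\sim\mathbf{Q_c^\star}$, i.e. $\mathbf{Q_c^\star}\mathbf{G_\alpha^\dagger}=\mathbf{Q_c^\star}\mathbf{K}^\dagger\left(\mathbf{I}+\mathbf{K}\mathbf{Q_c^\star}\mathbf{K}^\dagger\right)^{-1}\!\left(\mathbf{K}\mathbf{Q_c^\star}\mathbf{G_\alpha^\dagger}+\mathbf{Q_z^\star}^\dagger\right)$, which is linear in $\mathbf{Q_z^\star}$. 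At such $\mathbf{\Sigma_z^\star}$ the enhanced cognitive rate evaluated at $\mathbf{Q_c^\star}$ collapses to $\log\!\left|\mathbf{I}+\mathbf{K}\mathbf{Q_c^\star}\mathbf{K}^\dagger\right|$, so the weighted-sum objective of $\mathcal{R}_{conv}^{\alpha,\mathbf{\Sigma_z^\star}}$ at $(\mathbf{Q_p^\star},\mathbf{Q_c^\star})$ equals $f_{BC}$.

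It then remains to show $(\mathbf{Q_p^\star},\mathbf{Q_c^\star})$ \emph{maximizes} that objective over the trace budget, so that $f_{out}(\mu,\mathbf{\Sigma_z^\star})=f_{BC}$ and hence $\inf_{\mathbf{\Sigma_z}}f_{out}\le f_{BC}$. Since the enhancement gap $I(X;\mathbf{G_\alpha}X+N_1\mid\mathbf{K}X+N_2)\ge 0$ is a nonnegative function of $\mathbf{Q_c}$ that vanishes at $\mathbf{Q_c^\star}$, its first-order variation there vanishes as well, so the gradient of the enhanced cognitive rate agrees with that of $\log\!\left|\mathbf{I}+\mathbf{K}\mathbf{Q_c}\mathbf{K}^\dagger\right|$ at $\mathbf{Q_c^\star}$; the KKT stationarity and complementary-slackness inherited from the BC optimizer of (\ref{eqn : condition}) then transfer to the cooperative objective, and concavity upgrades the stationary point to a global maximum. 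The main obstacle I anticipate is exactly this coordination: exhibiting a single $\mathbf{Q_z^\star}$ that is simultaneously (i) a valid correlation, $\mathbf{\Sigma_z^\star}\succeq\mathbf{0}$ (equivalently $\mathbf{I}-\mathbf{Q_z^\star}^\dagger\mathbf{Q_z^\star}\succeq\mathbf{0}$), (ii) the redundancy-inducing solution above, and (iii) such that the cooperative objective $g(\cdot,\cdot,\mathbf{\Sigma_z^\star})$ is concave along the cognitive-last ordering so that the inherited stationary point is global---in particular ruling out that a non-block-diagonal $\mathbf{Q_c}$ (which the enhanced $\overline{\mathbf{K}}$, unlike $\mathbf{K}$, can exploit) strictly increases the objective. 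It is here that the stochastic degradedness forced by $\overline{\mathbf{K}}\supseteq\mathbf{G_\alpha}$ together with hypothesis~(\ref{eqn : condition}) must be used in an essential way.
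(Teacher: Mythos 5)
Your sandwich structure is sound, and your left inequality ($\mathcal{R}_{part,out}^{\alpha}\subseteq\mathcal{R}_{out}^{\alpha,\mathbf{\Sigma_z}}$ for every $\mathbf{\Sigma_z}$, via the block-diagonal $\mathbf{Q_c}$ and data processing) is correct. The gap is in the right inequality, $\inf_{\mathbf{\Sigma_z}} f_{out}(\mu,\mathbf{\Sigma_z}) \le f_{BC}$. This is precisely the hard (worst-case-noise) direction of the paper's Lemma 5.6, which the paper does not prove at all but cites from \cite{Vishwanath2002}: for $\mu \ge 1$, $\sup_{(R_p,R_c)\in\mathcal{C}_{SMBC}} \mu R_p + R_c = \inf_{\mathbf{\Sigma_z}} \sup_{(R_p,R_c)\in\mathcal{C}_{SMBCA}} \mu R_p + R_c$. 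The paper's proof is then just a chain of identifications: $\mathcal{C}_{BC}^{\alpha} = \mathcal{C}_{SMBC}$ and, for $\mu\ge 1$, its maximum is attained on the licensed-encoded-first DPC subregion $\mathcal{R}_{SMBC,1}^{\alpha}$ (by the Weingarten et al.\ characterization); $\mathcal{R}_{out}^{\alpha,\mathbf{\Sigma_z}} = \mathcal{C}_{SMBCA}$; $\mathcal{R}_{part,out}^{\alpha}\subseteq\mathcal{R}_{SMBC,1}^{\alpha}$; and condition (\ref{eqn : condition}) collapses the chain. You instead attempt to re-derive the cited min-max result from scratch, and that attempt has two unresolved holes.

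First, you never establish that your redundancy-inducing $\mathbf{Q_z^\star}$ exists as a valid correlation: the Markov condition is indeed linear in $\mathbf{Q_z^\star}$, but nothing guarantees its solution satisfies $\mathbf{\Sigma_z^\star}\succeq\mathbf{0}$; in the literature the least-favorable noise is extracted from KKT/saddle-point conditions of the min-max problem itself, not constructed pointwise at an arbitrary BC-optimal input. Second, and fatally, ``concavity upgrades the stationary point to a global maximum'' is false: the cooperative objective $\mu\bigl[\log\left|\mathbf{I}+\mathbf{G_{\alpha}}(\mathbf{Q_p}+\mathbf{Q_c})\mathbf{G_{\alpha}^{\dagger}}\right| - \log\left|\mathbf{I}+\mathbf{G_{\alpha}}\mathbf{Q_c}\mathbf{G_{\alpha}^{\dagger}}\right|\bigr] + \log\left|\mathbf{\Sigma_z}+\overline{\mathbf{K}}\mathbf{Q_c}\overline{\mathbf{K}}^{\dagger}\right|-\log\left|\mathbf{\Sigma_z}\right|$ is a difference of log-determinants and is not concave in $(\mathbf{Q_p},\mathbf{Q_c})$; this non-concavity is exactly the obstruction that forced the channel-enhancement machinery in the MIMO broadcast converse, so first-order stationarity cannot certify global optimality, and a non-block-diagonal $\mathbf{Q_c}$ exploiting $\overline{\mathbf{K}}$ is not ruled out. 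You flag both issues yourself as ``obstacles,'' but they are the crux of the statement, not loose ends: without them the right inequality, hence the theorem, is unproven. If you are permitted to cite the worst-case-noise result of \cite{Vishwanath2002} as the paper does, your argument simplifies to essentially the paper's, and the sandwich becomes unnecessary.
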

The proof of the theorem is described in Section \ref{sec : converse}. Hence, if the condition (\ref{eqn : condition}) is satisfied, the rate region described by $\mathcal{R}_{part, out}^{\alpha}$ is an outer bound on the capacity region of the MCC in terms of maximizing the $\mu$- sum $\mu R_p + R_c$.

Let $(\hat{R}_p, \hat{R}_c)$ be a point on the boundary of the
capacity region $\mathcal{C}_{MCC}$. Then, there exists a $\mu \geq 0$
such that
\begin{displaymath}
(\hat{R}_p, \hat{R}_c) = \arg \max_{(R_p, R_c) \in
\mathcal{C}_{MCC}} \mu R_p + R_c.
\end{displaymath}
The next theorem shows that if $(R_p, R_c)$ lies on the boundary of the achievable region given by
$\mathcal{R}_{in}$, then $(R_p, R_c)$ lies on the boundary of $\mathcal{R}_{part, out}^{\alpha}$ for some $\alpha > 0$. That is, the theorem describes conditions of optimality of the achievable region $\mathcal{R}_{in}$.
\begin{thm}\label{thm : optimality}
For any $\mu > 0$,
\begin{displaymath}
\max_{(R_p, R_c) \in \mathcal{R}_{in}} \mu R_p + R_c = \inf_{\alpha
> 0} \max_{(R_p, R_c) \in \mathcal{R}_{part, out}^{\alpha}} \mu R_p + R_c.
\end{displaymath}
Also, there exists $\alpha^* \in (0, \infty)$, such that for any $\mu \geq 1$, $(R_{p,\mu}, R_{c,\mu}) = \arg \max_{(R_p, R_c) \in \mathcal{R}_{in}} \mu R_p + R_c$ is a point on the boundary of the
capacity region of the MIMO cognitive channel if
the condition given by (\ref{eqn : condition}) is satisfied for $\alpha^*$.
\end{thm}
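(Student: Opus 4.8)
The plan is to prove the first (identity) part by recognizing $\mathcal{R}_{part, out}^{\alpha}$ as the achievable region $\mathcal{R}_{in}$ in which the two individual power constraints $\mathrm{Tr}(\mathbf{\Sigma_p})\le P_p$ and $\mathrm{Tr}(\mathbf{\Sigma_{c,p}}+\mathbf{\Sigma_{c,c}})\le P_c$ have been merged into the single weighted sum-power constraint $\mathrm{Tr}(\mathbf{Q_p})+\mathrm{Tr}(\mathbf{\Sigma_{c,c}})\le P_p+\alpha P_c$, and then to exploit the resulting Lagrangian-type duality. Concretely, I would first record the covariance change of variables
\begin{equation*}
\mathbf{Q_p}=\begin{pmatrix}\mathbf{\Sigma_p} & \sqrt{\alpha}\,\mathbf{Q}\\ \sqrt{\alpha}\,\mathbf{Q^{\dagger}} & \alpha\,\mathbf{\Sigma_{c,p}}\end{pmatrix},\qquad \mathbf{\Sigma_{c,c}}\longmapsto \alpha\,\mathbf{\Sigma_{c,c}},
\end{equation*}
and verify by direct substitution that $\mathbf{G_{\alpha}}\mathbf{Q_p}\mathbf{G_{\alpha}^{\dagger}}=\mathbf{G}\mathbf{\Sigma_{p,net}}\mathbf{G^{\dagger}}$ and that $\frac{1}{\alpha}\mathbf{H_{c,p}}(\alpha\mathbf{\Sigma_{c,c}})\mathbf{H_{c,p}^{\dagger}}=\mathbf{H_{c,p}}\mathbf{\Sigma_{c,c}}\mathbf{H_{c,p}^{\dagger}}$ and $\frac{1}{\alpha}\mathbf{H_{c,c}}(\alpha\mathbf{\Sigma_{c,c}})\mathbf{H_{c,c}^{\dagger}}=\mathbf{H_{c,c}}\mathbf{\Sigma_{c,c}}\mathbf{H_{c,c}^{\dagger}}$, so that the expressions for $R_p$ and $R_c$ defining $\mathcal{R}_{part,conv}^{\alpha}$ and $\mathcal{R}_{ach}$ coincide term by term. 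Since $\mathbf{\Sigma_{p,net}}\mapsto \mathbf{Q_p}$ is a linear bijection of the positive semidefinite cone preserving the rate pair, and since $\mathrm{Tr}(\mathbf{\Sigma_p})\le P_p,\ \mathrm{Tr}(\mathbf{\Sigma_{c,p}}+\mathbf{\Sigma_{c,c}})\le P_c$ forces $\mathrm{Tr}(\mathbf{Q_p})+\mathrm{Tr}(\alpha\mathbf{\Sigma_{c,c}})=\mathrm{Tr}(\mathbf{\Sigma_p})+\alpha\,\mathrm{Tr}(\mathbf{\Sigma_{c,p}}+\mathbf{\Sigma_{c,c}})\le P_p+\alpha P_c$, the feasible tuples of $\mathcal{R}_{in}$ embed into those of $\mathcal{R}_{part, out}^{\alpha}$ for every $\alpha>0$. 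This gives $\mathcal{R}_{in}\subseteq\mathcal{R}_{part, out}^{\alpha}$ and hence $\max_{\mathcal{R}_{in}}\mu R_p+R_c\le\inf_{\alpha>0}\max_{\mathcal{R}_{part, out}^{\alpha}}\mu R_p+R_c$.

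For the reverse inequality I would use a power-balancing argument. For fixed $\mu$ and $\alpha$, let $s^{*}(\alpha)$ attain $f_{part}(\alpha,\mu):=\max_{\mathcal{R}_{part, out}^{\alpha}}\mu R_p+R_c$ (the maximum is attained because the trace-bounded semidefinite feasible set is compact and the objective continuous), and let $p_1^{*}(\alpha)$, $p_2^{*}(\alpha)$ be the individual powers $\mathrm{Tr}(\mathbf{\Sigma_p})$ and $\mathrm{Tr}(\mathbf{\Sigma_{c,p}}+\mathbf{\Sigma_{c,c}})$ obtained after the inverse change of variables. Because adding power to $\mathbf{Q_p}$ strictly increases the first log-determinant in $R_p$ (and leaves the subtracted term and $R_c$ untouched), the sum-power constraint is active at any maximizer, so $p_1^{*}(\alpha)+\alpha\, p_2^{*}(\alpha)=P_p+\alpha P_c$. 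The crux is to exhibit a balancing value $\alpha^{*}\in(0,\infty)$ at which both individual budgets hold exactly, $p_1^{*}(\alpha^{*})=P_p$ and $p_2^{*}(\alpha^{*})=P_c$: as $\alpha\to\infty$ secondary power is comparatively expensive, driving $p_2^{*}<P_c$ and hence (by the active constraint) $p_1^{*}>P_p$, while as $\alpha\to0^{+}$ it is cheap, giving $p_2^{*}>P_c$ and $p_1^{*}<P_p$, and an intermediate-value/continuity argument supplies $\alpha^{*}$. At $\alpha^{*}$ the inverse change of variables maps $s^{*}(\alpha^{*})$ to a tuple feasible for $\mathcal{R}_{in}$ with identical rates, so $f_{part}(\alpha^{*},\mu)\le\max_{\mathcal{R}_{in}}\mu R_p+R_c$; together with the containment direction this pins $\inf_{\alpha>0}f_{part}(\alpha,\mu)=\max_{\mathcal{R}_{in}}\mu R_p+R_c$ and shows the infimum is attained at $\alpha^{*}$.

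The existence of the balancing $\alpha^{*}$ is the step I expect to be the main obstacle. The objective carries the subtracted term $-\log\!\left|\mathbf{I}+\frac{1}{\alpha}\mathbf{H_{c,p}}\mathbf{\Sigma_{c,c}}\mathbf{H_{c,p}^{\dagger}}\right|$, so $f_{part}(\cdot,\mu)$ is not manifestly a convex program and the maximizer $s^{*}(\alpha)$ need not be unique; plain continuity of $\alpha\mapsto p_1^{*}(\alpha)$ is therefore not automatic. I would circumvent this by working with the value function $f_{part}(\alpha,\mu)$, which is finite and continuous in $\alpha$, invoking upper hemicontinuity of the optimizer correspondence (a maximum-theorem argument) together with the active-constraint identity to force the intermediate-value crossing, and, where $s^{*}(\alpha)$ is nonunique, selecting optimizers that realize the extreme powers so that the value $p_2^{*}(\alpha^{*})=P_c$ is genuinely attained.

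For the second part I would assemble a sandwich. Fix $\mu\ge1$ and let $\alpha^{*}$ be the infimum-attaining value produced above; the restriction $\mu\ge1$ is precisely the hypothesis under which Theorem \ref{thm : partial converse} applies. If condition (\ref{eqn : condition}) holds at $\alpha^{*}$, then Theorem \ref{thm : partial converse} yields $\max_{\mathcal{R}_{part, out}^{\alpha^{*}}}\mu R_p+R_c=\inf_{\mathbf{\Sigma_z}}\max_{\mathcal{R}_{out}^{\alpha^{*}, \mathbf{\Sigma_z}}}\mu R_p+R_c$. Chaining the bounds, Theorem \ref{thm : achievability} gives $\max_{\mathcal{R}_{in}}\mu R_p+R_c\le\max_{\mathcal{C}_{MCC}}\mu R_p+R_c$, Theorem \ref{thm : converse} gives $\max_{\mathcal{C}_{MCC}}\mu R_p+R_c\le\inf_{\mathbf{\Sigma_z}}\max_{\mathcal{R}_{out}^{\alpha^{*}, \mathbf{\Sigma_z}}}\mu R_p+R_c$, and the two displayed equalities (from the first part, $\max_{\mathcal{R}_{part, out}^{\alpha^{*}}}=\max_{\mathcal{R}_{in}}$, and from Theorem \ref{thm : partial converse}) identify the two ends of the chain. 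Hence every inequality is an equality and $\max_{\mathcal{R}_{in}}\mu R_p+R_c=\max_{\mathcal{C}_{MCC}}\mu R_p+R_c$, so the achievable weighted-sum-optimal point $(R_{p,\mu},R_{c,\mu})$ supports the same $\mu$-sum as the capacity region and therefore lies on the boundary of $\mathcal{C}_{MCC}$.
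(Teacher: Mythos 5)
Your outer steps are fine and agree with the paper: the change of variables $\mathbf{Q_p}=\left(\begin{smallmatrix}\mathbf{\Sigma_p} & \sqrt{\alpha}\mathbf{Q}\\ \sqrt{\alpha}\mathbf{Q^{\dagger}} & \alpha\mathbf{\Sigma_{c,p}}\end{smallmatrix}\right)$, $\mathbf{\Sigma_{c,c}}\mapsto\alpha\mathbf{\Sigma_{c,c}}$ is exactly the rewriting the paper uses (its equations for $\mathcal{R}_{part,conv}^{\alpha}$ with the weighted trace constraint), and it gives the containment direction $\max_{\mathcal{R}_{in}}\mu R_p+R_c\le\inf_{\alpha>0}\max_{\mathcal{R}_{part,out}^{\alpha}}\mu R_p+R_c$ unconditionally; your concluding sandwich via Theorems \ref{thm : achievability}, \ref{thm : converse} and \ref{thm : partial converse} is also the paper's argument. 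The gap is in your reverse inequality, and it is precisely the step you flag as "the main obstacle": it is not closed by what you propose. Because of the subtracted term $-\log\left|\mathbf{I}+\frac{1}{\alpha}\mathbf{H_{c,p}}\mathbf{\Sigma_{c,c}}\mathbf{H_{c,p}^{\dagger}}\right|$ the feasible set is not convex, so the argmax correspondence can be multivalued and can jump as $\alpha$ varies. Berge-type upper hemicontinuity then only guarantees that at a crossing point $\alpha_0$ the argmax set contains limit points of optimizers from both sides --- say one optimizer with $p_2>P_c$ and another with $p_2<P_c$ --- but no theorem forces it to contain an optimizer with $p_2=P_c$; "selecting optimizers that realize the extreme powers" hands you the two unbalanced points, not the balanced one. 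Time-sharing the two unbalanced optimizers does not rescue the argument either: the averaged rate pair stays in $\mathcal{R}_{part,out}^{\alpha_0}$ (convexified under a sum-power constraint), but it need not lie in $\mathcal{R}_{in}$, since $\mathcal{R}_{in}$ is the convex hull of rate pairs whose covariances \emph{individually} satisfy $\mathrm{Tr}(\mathbf{\Sigma_p})\le P_p$ and $\mathrm{Tr}(\mathbf{\Sigma_{c,p}}+\mathbf{\Sigma_{c,c}})\le P_c$; power averaging across time-sharing branches is not permitted by its definition. If the program were concave the argmax set would be connected and your intermediate-value argument would go through, but that is exactly what cannot be assumed here.

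The paper avoids this primal continuity/selection issue entirely by working in the dual: it converts the individual power constraints into a two-multiplier penalty (Lemma \ref{lem : M = U}), converts the weighted sum-power constraint into a one-multiplier penalty (Lemma \ref{lem : N = V}), compactifies the $\alpha$-axis (Lemma \ref{lem : compactification}), interchanges $\inf_{\alpha}$ and $\sup$ via Ky-Fan's minimax theorem, and then compares the two penalized problems through the substitution $\lambda=\lambda_1$, $\alpha=\lambda_2/\lambda_1$ (Lemma \ref{lem : 6.5}). In other words, the "balancing" parameter $\alpha^*$ emerges as a ratio of dual variables, with no claim ever made about the power split of any primal maximizer. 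To repair your proof you would either have to supply the missing connectedness/selection argument (which the non-convexity blocks) or replace your power-balancing step with a duality or minimax argument of this kind.
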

The proof of the  theorem is described in Section \ref{sec :
optimality} and is based on optimization techniques.

\section{Achievable Region}\label{sec : achievability}
\begin{proof} \textit{of Theorem \ref{thm : achievability}} : In this section, we
show that the rate region $\mathcal{R}_{in}$ given by (\ref{eqn :
achievable region}) is achievable on the MCC.

Encoding rule for Licensed user $(E_p^n)$ : For every message $m_p
\in \{1,\ldots,2^{nR_p}\}$, the licensed encoder generates a $n$
length codeword $\mathbf{X_p}^n(m_p)$, according to the distribution
$p(\mathbf{X_p}^n) = \Pi_{i=1}^n p(\mathbf{X_p}(i))$, and
$X_p(i)\backsim \mathcal{N}(\mathbf{0}, \mathbf{\Sigma_p})$ such
that $\mathbf{\Sigma_p} \succeq \mathbf{0}$ and
$\mathrm{Tr}(\mathbf{\Sigma_p}) \leq P_p$.

Encoding rule for the cognitive user $(E_c^n) $: The cognitive
encoder acts in two stages. For every message pair $(m_p, m_c)$, the
cognitive encoder first generates a codeword
$\mathbf{X_{c,p}}^n(m_p, m_c)$ for the primary message $m_p$
according to $\Pi_{i=1}^n p(\mathbf{X_{c,p}}(i) | \mathbf{X_p}(i))$,
where $p(\mathbf{X_{c,p}}(i)) \backsim \mathcal{N}(\mathbf{0}, \mathbf{\Sigma_{c,p}})$ and the joint distribution of $(\mathbf{X_p}(i), \mathbf{X_{c,p}}(i))$ is given by
\begin{equation}
p(\mathbf{X_p}(i), \mathbf{X_{c,p}}(i)) \backsim \mathcal{N}\Bigg(\mathbf{0}, \left[\begin{array}{cc} \mathbf{\Sigma_p} & \mathbf{Q} \\ \mathbf{Q}^{\dagger} & \mathbf{\Sigma_{c,c}} \end{array}\right]\Bigg).
\end{equation}
Here, $\mathbf{Q}$ denotes the correlation between $\mathbf{X_p}(i)$ and $\mathbf{X_{c,p}}(i)$. In the second
stage, the cognitive encoder generates $\mathbf{X_{c,c}}^n$ which
encodes message $m_c$. The codeword $\mathbf{X_{c,c}}^n$ is
generated using Costa precoding \cite{Costa1983} by treating
$\mathbf{H_{p,p}}\mathbf{X_p}^n + \mathbf{H_{c,c}}
\mathbf{X_{c,p}}^n$ as non causally known interference. A
characteristic feature of Costa's precoding is that
$\mathbf{X_{c,c}}^n $ is independent of $\mathbf{X_{c,p}}^n$, and
$\mathbf{X_{c,c}}^n$ is distributed as $\Pi_{i=1}^n
p(\mathbf{X_{c,c}}(i))$, where $\mathbf{X_{c,c}}(i) \backsim
\mathcal{N}(\mathbf{0}, \mathbf{\Sigma_{c,c}})$. Note that the
codeword $\mathbf{X_{c,p}}^n$ is used to convey message $m_p$ to the
licensed receiver and the codeword $\mathbf{X_{c,c}}^n$ is used to
convey message $m_c$ to the cognitive receiver. The two codewords
$\mathbf{X_{c,p}}^n$ and $\mathbf{X_{ c,c}}^n$ are superimposed to
form the cognitive codeword $\mathbf{X_c}^n = \mathbf{X_{c,p}}^n +
\mathbf{X_{c,c}}^n$. It is clear that $\mathbf{X_c}^n$ is
distributed as $\Pi_{i=1}^n p(\mathbf{X_c}(i)), \ \mathbf{X_c}(i)
\backsim \mathcal{N}(\mathbf{0}, \mathbf{\Sigma_c})$, where
$\mathbf{\Sigma_c} = \mathbf{\Sigma_{c,p}} + \mathbf{\Sigma_{c,c}}$.
The covariance matrices satisfy the constraints
$\mathbf{\Sigma_{c,p}} \succeq \mathbf{0}, \mathbf{\Sigma_{c,c}}
\succeq \mathbf{0}, \mathrm{Tr}(\mathbf{\Sigma_c}) \leq P_c$.

Decoding rule for the licensed receiver $(D_p^n)$ : The licensed
receiver receives $\mathbf{H_{p,p} X_p}^n + \mathbf{H_{c,p}} (
\mathbf{X_{c,p}}^n + \mathbf{X_{c,c}}^n ) + \mathbf{Z_p}^n$. It
treats $\mathbf{H_{p,p} X_p}^n + \mathbf{H_{c,p}  X_{c,p}}^n$ as the
valid codeword and $\mathbf{H_{c,p} X_{c,c}}^n + \mathbf{Z_p}^n$ as
Gaussian noise. Taking $\mathbf{G} = \left[\mathbf{H_{p,p}}\ \
\mathbf{H_{c,p}}\right]$ and $\mathbf{X_{p,net}}^n =
\left[\begin{array}{c} \mathbf{X_p}^n \\
\mathbf{X_{c,p}}^n\end{array}\right]$,  the received vector at the
licensed receiver is
\begin{equation}
\mathbf{Y_p}^n = \mathbf{G} \mathbf{X_{p,net}}^n + \mathbf{H_{c,p}}
\mathbf{X_{c,c}}^n + \mathbf{Z_p}^n.
\end{equation}
The covariance matrix of $\mathbf{X_{p,net}}$ is denoted by
$\mathbf{\Sigma_{p,net}} = \left[\begin{array}{cc}\mathbf{\Sigma_p}
& \mathbf{Q} \\ \mathbf{Q^{\dagger}} &
\mathbf{\Sigma_{c,p}}\end{array}\right]$, where $\mathbf{Q} =
E[\mathbf{X_p} \mathbf{X_{c,p}^{\dagger}}]$. In this setup, we use steps identical to that used for MIMO channel with colored noise in \cite[Section 9.5]{ThomasCover} to show that, for any $\epsilon > 0$,
there exists a block length $n_1$ so that for any $n \geq n_1$, the licensed decoder can
recover the message $m_p$ with probability of error $ < \epsilon$ if
\begin{equation}
\begin{array}{c}
R_p \leq \log\left| \mathbf{I} + \mathbf{G} \mathbf{\Sigma_{p,net}}
\mathbf{G^{\dagger}} + \mathbf{H_{c,p}} \mathbf{\Sigma_{c,c}}
\mathbf{H_{c,p}^{\dagger}}\right| \vspace{0.1cm}\\
- \log\left| \mathbf{I} +
\mathbf{H_{c,p}} \mathbf{\Sigma_{c,c}} \mathbf{H_{c,p}^{\dagger}}\right|.\end{array}
\end{equation}

Decoding rule for the cognitive user $(D_c^n)$ : The cognitive
decoder is the Costa decoder (with the knowledge of the encoder,
$E_c^n$). The cognitive receiver receives $\mathbf{Y_c}^n =
\mathbf{H_{p,c}}\mathbf{X_p}^n + \mathbf{H_{c,c}}
(\mathbf{X_{c,p}}^n + \mathbf{X_{c,c}}^n) + \mathbf{Z_c}^n$. Here,
the non-causally known interference $\mathbf{H_{p,c}}\mathbf{X_p}^n
+ \mathbf{H_{c,c}}\mathbf{X_{c,p}}^n$ is canceled by the Costa
precoder. To show this formally, we follow steps similar to Eqns (3) to (7)  in \cite{Costa1983}. We get that, for any $\epsilon_2 > 0$, there exists $n_2$ such
that for $n \geq n_2$, the cognitive decoder can recover the message
$m_c$ with probability of error $ < \epsilon_2$ if
\begin{equation}
R_c \leq \log\left| \mathbf{I} + \mathbf{H_{c,c}} \mathbf{\Sigma_{c,c}}
\mathbf{H_{c,c}^{\dagger}}\right|.
\end{equation}

Note that the achievable scheme holds for all possible covariance
matrices $\mathbf{\Sigma_p}, \mathbf{\Sigma_{c,p}},
\mathbf{\Sigma_{c,c}}$ that are positive semidefinite and satisfy
the power constraints $\mathrm{Tr}(\mathbf{\Sigma_p}) \leq P_p,
\mathrm{Tr}(\mathbf{\Sigma_{c,p}} + \mathbf{\Sigma_{c,c}}) \leq P_c$. Hence,
$\mathcal{R}_{in}$, which is the  set of all achievable rate pairs
described by (\ref{eqn : achievable region}), is achievable for any code length $n \geq \max(n_1, n_2)$.
\end{proof}

\section{Outer Bound on the Capacity Region}\label{sec : converse}
In this section, we prove that the rate region described by
$\mathcal{R}_{out}^{\alpha, \mathbf{\Sigma_z}}$ is an outer bound on the capacity
region of the Gaussian MIMO cognitive channel.  The proof proceeds
by a series of channel transformations where each transformation
creates an outer bound on the channel at the previous stage. At the
final stage, we obtain a physically degraded broadcast channel. The
capacity region of this channel is now known \cite{Weingarten2006}\cite{Mohseni2006}\cite{TieLiuSubmitted} and is used as the outer bound for
the capacity region of the MIMO cognitive channel. Figure $2$ depicts the various channel configurations considered,
and the system equations of all the configurations.
$\mathbf{\hat{Z}_p}^n$ shown in Figures $2$c, $2$d and $2$e has the
same distribution as $\mathbf{Z_p}^n$, but has an arbitrary
correlation with $\mathbf{Z_c}^n$.
\begin{figure*}[!th]\label{fig : converse}
\centering
\includegraphics[width = 6 in]{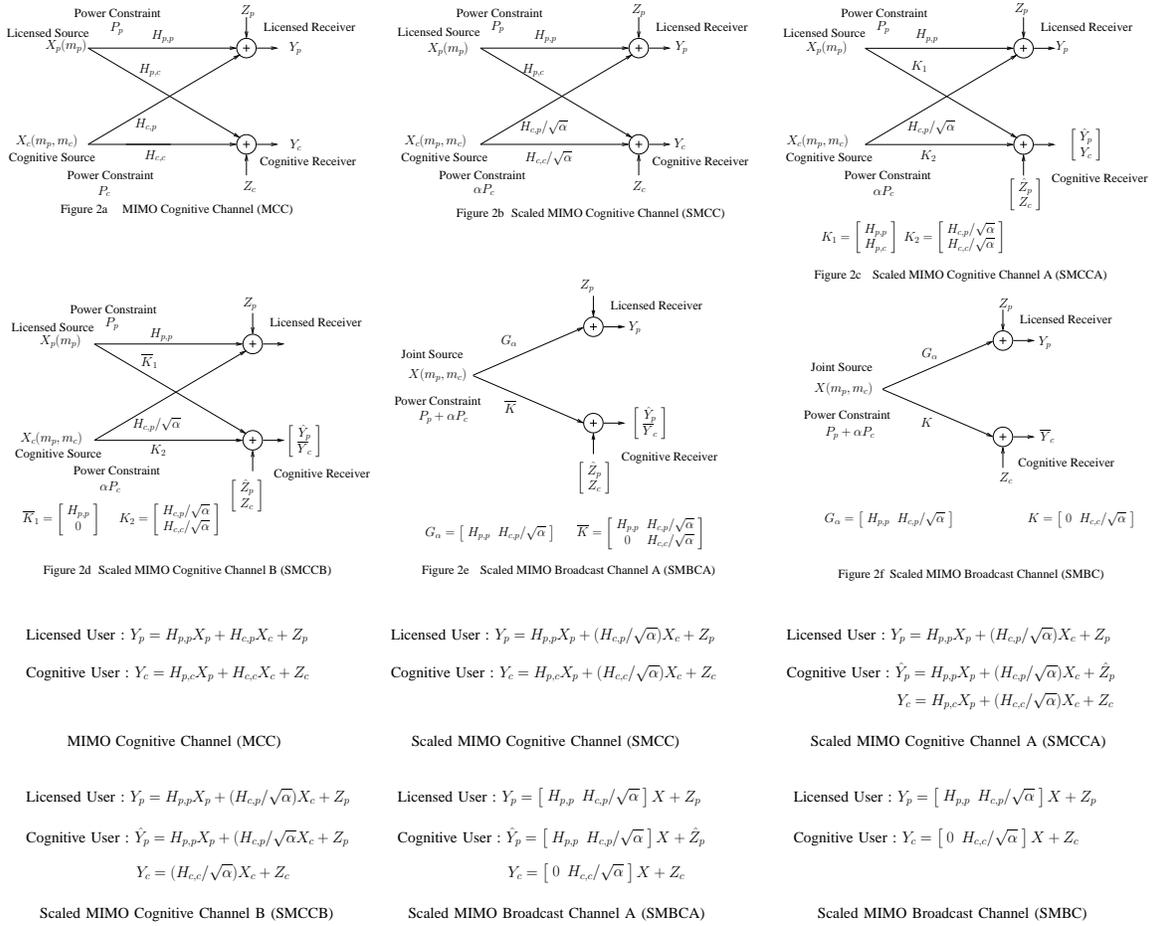}
\caption{Channel Configurations and their System Equations}
\end{figure*}
Before proving Theorem \ref{thm : converse}, we prove the following
lemmas.

Transformation 1 (MIMO Cognitive Channel (MCC) $\rightarrow$ Scaled
MIMO cognitive channel) : The scaled MIMO cognitive channel is
defined in Figure $2\textrm{b}$ and Figure $3$. In this transformation, the
channel matrices $\mathbf{H_{c,p}}$ and $\mathbf{H_{c,c}}$ are
scaled by $1/\sqrt{\alpha}$. Also, the power constraint at the
cognitive transmitter is changed to $\alpha P_c$.

\begin{lem} The capacity region of the MIMO cognitive channel is
equal to the capacity region of the scaled MIMO cognitive
channel (SMCC) for any $0 < \alpha < \infty$. \end{lem}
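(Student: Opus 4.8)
The plan is to prove the two capacity regions coincide by exhibiting an explicit invertible, power-preserving correspondence between coding schemes on the MCC and schemes on the scaled MIMO cognitive channel (SMCC), so that any rate pair achievable on one is achievable on the other. Since the transformation only rescales the cognitive-to-primary and cognitive-to-cognitive channel matrices by $1/\sqrt{\alpha}$ and simultaneously rescales the cognitive power budget from $P_c$ to $\alpha P_c$, the natural idea is to absorb the scaling into the cognitive transmitter's codeword.

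First I would set up the two directions as a change of variables. Given any code for the MCC with cognitive codeword $\mathbf{X_c}^n$ satisfying $\sum_i \mathrm{Tr}(\mathbf{\Sigma_c}(i)) \le n P_c$, define a code for the SMCC by $\mathbf{\tilde{X}_c}^n = \sqrt{\alpha}\,\mathbf{X_c}^n$ (keeping $\mathbf{X_p}^n$ unchanged). Then the SMCC channel outputs become
\begin{eqnarray*}
\mathbf{Y_p}(i) &=& \mathbf{H_{p,p}}\mathbf{X_p}(i) + \frac{\mathbf{H_{c,p}}}{\sqrt{\alpha}}\,\mathbf{\tilde{X}_c}(i) + \mathbf{Z_p}(i) \\
\mathbf{Y_c}(i) &=& \mathbf{H_{p,c}}\mathbf{X_p}(i) + \frac{\mathbf{H_{c,c}}}{\sqrt{\alpha}}\,\mathbf{\tilde{X}_c}(i) + \mathbf{Z_c}(i),
\end{eqnarray*}
and substituting $\mathbf{\tilde{X}_c}(i) = \sqrt{\alpha}\,\mathbf{X_c}(i)$ recovers exactly the MCC outputs, so the received signals (hence the decoders and their error probabilities) are identical. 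The power check is the crux of the bookkeeping: the new covariance is $\mathbf{\tilde{\Sigma}_c}(i) = \alpha\,\mathbf{\Sigma_c}(i)$, so $\sum_i \mathrm{Tr}(\mathbf{\tilde{\Sigma}_c}(i)) = \alpha \sum_i \mathrm{Tr}(\mathbf{\Sigma_c}(i)) \le \alpha n P_c = n(\alpha P_c)$, which is precisely the SMCC budget. The reverse map $\mathbf{X_c}^n = \mathbf{\tilde{X}_c}^n/\sqrt{\alpha}$ is its exact inverse and the same computation runs backward, so the two achievability statements are equivalent and the capacity regions are equal. I would note this holds for every $0 < \alpha < \infty$ because the map and its inverse are well-defined whenever $\sqrt{\alpha}$ is a positive finite scalar.

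I expect the main (and essentially only) obstacle to be rhetorical rather than mathematical: because the channel is a degraded-message-set interference channel, one must verify that the scaling does not disturb the message structure. Specifically, the licensed codeword $\mathbf{X_p}^n$, its message $m_p$, the encoders $E_p^n, E_c^n$, the decoders, and the primary power constraint $\mathrm{Tr}(\mathbf{\Sigma_p}) \le P_p$ are all left untouched, and $\mathbf{X_c}^n$ remains a function of $(m_p, m_c)$ after scaling — so the degraded-message-set encoding requirement is preserved under the bijection. Once that is observed, the argument is a one-line substitution; the only care needed is to state that the noise statistics $\mathcal{N}(0,\mathbf{I})$ are unaffected (the scaling acts on the signal, not the noise) and that the arbitrary correlation between $\mathbf{Z_p}^n$ and $\mathbf{Z_c}^n$ plays no role here, exactly as in the model setup.
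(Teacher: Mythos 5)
Your proposal is correct and follows essentially the same route as the paper: scaling the cognitive codeword by $\sqrt{\alpha}$ (leaving the licensed encoder, decoders, and messages untouched), checking that the scaled covariance meets the $\alpha P_c$ budget and that the received signals coincide, and then inverting the map for the converse direction. The paper's proof is exactly this bijective change of variables, stated in terms of reusing the encoder--decoder pairs on the scaled channel, so there is nothing to add.
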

\begin{figure*}[!th]\label{fig : transformation 1}
\centering
\includegraphics[width = 4.6 in]{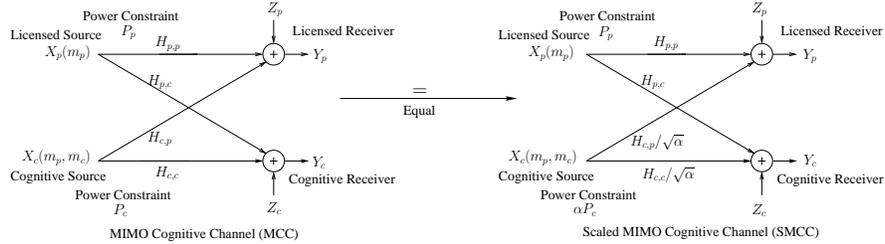}
\caption{Capacity Region of MCC = Capacity Region of SMCC}
\end{figure*}
\begin{proof} :  Let $(R_p, R_c)$ be a rate pair that is achievable on
the MCC. That is, for all $\epsilon_1, \epsilon_2 > 0$, there exists
a $n$ and a sequence of encoder decoder pairs at the licensed and
cognitive transmitter and receiver $(E_p^n : m_p \rightarrow
\mathbf{X_p}^n, D_p^n : \mathbf{Y_p}^n \rightarrow \hat{m}_p, E_c^n
: (m_p, m_c) \rightarrow \mathbf{X_c}^n, D_c^n : \mathbf{Y_c}^n
\rightarrow \hat{m}_c)$ such that the codewords $\mathbf{X_p}^n$ and
$\mathbf{X_c}^n$ satisfy the power constraints given by (\ref{eqn :
power constraint}) and the probability of decoding error is small
$(Pr(m_p \neq \hat{m}_p) \leq \epsilon_1, Pr(m_c \neq \hat{m}_c)
\leq \epsilon_2)$. We use the following encoder decoder pairs at the
licensed and cognitive transmitters and receivers of the scaled MIMO
cognitive channel. $E_p^n : m_p \rightarrow \mathbf{X_p}^n,\  D_p^n
: \mathbf{Y_p}^n \rightarrow \hat{m}_p,\  E_c^n : (m_p, m_c)
\rightarrow \sqrt{\alpha}\mathbf{X_c}^n,\  D_c^n : \mathbf{Y_c}^n
\rightarrow \hat{m}_c$. It follows that using these encoder
and decoder pairs, the licensed and cognitive codewords satisfy the
new power constraints of $P_p$ and $\alpha P_c$ respectively. Also,
the system equation is the same as that of the MCC and $Pr(m_p \neq
\hat{m}_p) \leq \epsilon_1$ and $Pr(m_c \neq \hat{m}_c) \leq
\epsilon_2$. Hence, the rate pair $(R_p,R_c)$ is achievable on the
scaled MIMO cognitive channel. Hence, the capacity region of the
SMCC is a superset of the capacity region of the MCC.

Similarly, we can also establish this in the other direction, namely we can treat the MCC as the scaled version of the SMCC
(scaling by $1/\alpha$). Therefore, it can be shown that the
capacity region of the MCC is a superset of the capacity region
of the SMCC.

Hence, the capacity region of the MCC is equal to the capacity
region of the SMCC.
\end{proof}

Transformation 2 (scaled MIMO cognitive channel (SMCC) $\rightarrow$
scaled MIMO cognitive channel A (SMCCA)) : The scaled MIMO cognitive
channel A (SMCCA) is described in Figure $2$c and Figure $4$. In this
transformation, we provide a modified version of $\mathbf{Y_p}^n$, which is $\mathbf{\hat{Y}_p}^n$ to the cognitive receiver. $\mathbf{\hat{Y}_p}^n$ is corrupted by noise $\mathbf{\hat{Z}_p}^n$, which has the same
probability distribution as that of $\mathbf{Z_p}^n$ (i.e., complex Gaussian with zero mean and identity covariance matrix), but is permitted to be correlated with $\mathbf{Z_p}^n$ or $\mathbf{Z_c}^n$. In fact, we assume that the joint probability distribution of $(\mathbf{\hat{Z}_p}(i), \mathbf{Z_c}(i))$ is given by
\begin{equation}
p(\mathbf{\hat{Z}_p}(i), \mathbf{Z_c}(i)) = \mathcal{N}(0, \mathbf{\Sigma_z}),
\end{equation}
where $\mathbf{\Sigma_z}$ has the form given by (\ref{eqn : sigma_z}). The
received vector $\mathbf{\hat{Y}_p}^n$ is made available to the
cognitive receiver by transforming the channel matrices
$\mathbf{H_{p,c}}$ and $\mathbf{H_{c,c}}/\sqrt{\alpha}$ to $K_1 = \left[\begin{array}{c}\mathbf{H_{p,p}} \\
\mathbf{H_{p,c}} \end{array}\right]$ and $K_2 =
\left[\begin{array}{c} \mathbf{H_{c,p}}/\sqrt{\alpha} \\
\mathbf{H_{c,c}}/\sqrt{\alpha} \end{array}\right]$ respectively.
Hence, the received vector at the cognitive receiver is
$\left[\begin{array}{c}\mathbf{\hat{Y}_p}^n \\
\mathbf{Y_c}^n\end{array}\right]$.
\begin{lem} The capacity region of the scaled MIMO cognitive channel A (SMCCA) is a superset of the capacity region of the scaled MIMO cognitive channel (SMCC). \end{lem}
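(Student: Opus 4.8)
The plan is to prove this inclusion by a standard genie/side-information argument: every rate pair achievable on the SMCC is shown to be achievable on the SMCCA by reusing the SMCC encoders verbatim and equipping the cognitive receiver with a decoder that simply discards the extra observation $\mathbf{\hat{Y}_p}^n$ supplied by the genie. Since handing a receiver additional observations can never force it to perform worse, the achievable set can only grow, which is exactly the claimed superset relation $\mathcal{C}_{SMCC} \subseteq \mathcal{C}_{SMCCA}$.

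Concretely, I would begin with an arbitrary achievable rate pair $(R_p, R_c)$ on the SMCC, together with an encoder/decoder sequence $(E_p^n, D_p^n, E_c^n, D_c^n)$ meeting the power constraints $P_p$ and $\alpha P_c$ and achieving $Pr(m_p \neq \hat{m}_p) \leq \epsilon_1$ and $Pr(m_c \neq \hat{m}_c) \leq \epsilon_2$. On the SMCCA I keep the same encoders $E_p^n$ and $E_c^n$; since the transmitted codewords $\mathbf{X_p}^n$ and $\mathbf{X_c}^n$ are unchanged, the power constraints remain satisfied. The licensed receiver in the SMCCA is identical to the one in the SMCC, observing $\mathbf{Y_p}^n$ corrupted by $\mathbf{Z_p}^n$ through the same matrices, so I reuse $D_p^n$ and the licensed error probability is literally unchanged. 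For the cognitive receiver, which now observes the stacked pair $(\mathbf{\hat{Y}_p}^n, \mathbf{Y_c}^n)$, I define a new decoder that throws away the $\mathbf{\hat{Y}_p}^n$ block and applies the original $D_c^n$ to the $\mathbf{Y_c}^n$ block alone.

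The one point that genuinely requires care—and the place where the correlation structure of $\mathbf{\hat{Z}_p}^n$ must be addressed—is verifying that the marginal distribution of $\mathbf{Y_c}^n$ is the same in both channels. This holds because the genie's noise $\mathbf{\hat{Z}_p}^n$ is permitted only to be correlated with $\mathbf{Z_c}^n$, which leaves the marginal law of $\mathbf{Z_c}^n$ unchanged (still i.i.d. $\mathcal{N}(0, \mathbf{I}_{n_{c,r}})$), and the matrices $\mathbf{H_{p,c}}$ and $\mathbf{H_{c,c}}/\sqrt{\alpha}$ feeding $\mathbf{Y_c}^n$ are untouched by the transformation (the new matrices $K_1, K_2$ merely stack $\mathbf{\hat{Y}_p}^n$ on top). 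Consequently the cognitive receiver sees exactly the same joint statistics of $(m_p, m_c, \mathbf{Y_c}^n)$ as on the SMCC, so the discard-then-$D_c^n$ decoder incurs the identical error probability $\epsilon_2$. Hence $(R_p, R_c)$ is achievable on the SMCCA, and since the pair was arbitrary the SMCCA capacity region contains that of the SMCC. I expect no substantive obstacle: the only diligence needed is to confirm that introducing the correlated genie observation at the cognitive receiver leaves both the licensed link and the marginal $\mathbf{Y_c}^n$ statistics intact, so that both original decoders continue to meet their error targets.
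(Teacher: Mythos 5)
Your proposal is correct and follows essentially the same route as the paper's proof: reuse the SMCC encoders and the licensed decoder verbatim, and have the cognitive receiver discard the genie observation $\mathbf{\hat{Y}_p}^n$ and apply the original $D_c^n$ to $\mathbf{Y_c}^n$. Your explicit check that the correlation of $\mathbf{\hat{Z}_p}^n$ leaves the marginal law of $\mathbf{Z_c}^n$ (and hence of $\mathbf{Y_c}^n$) unchanged is a detail the paper leaves implicit, but it does not constitute a different argument.
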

\begin{figure*}[!th]\label{fig : transformation 2}
\centering
\includegraphics[width=4.6in]{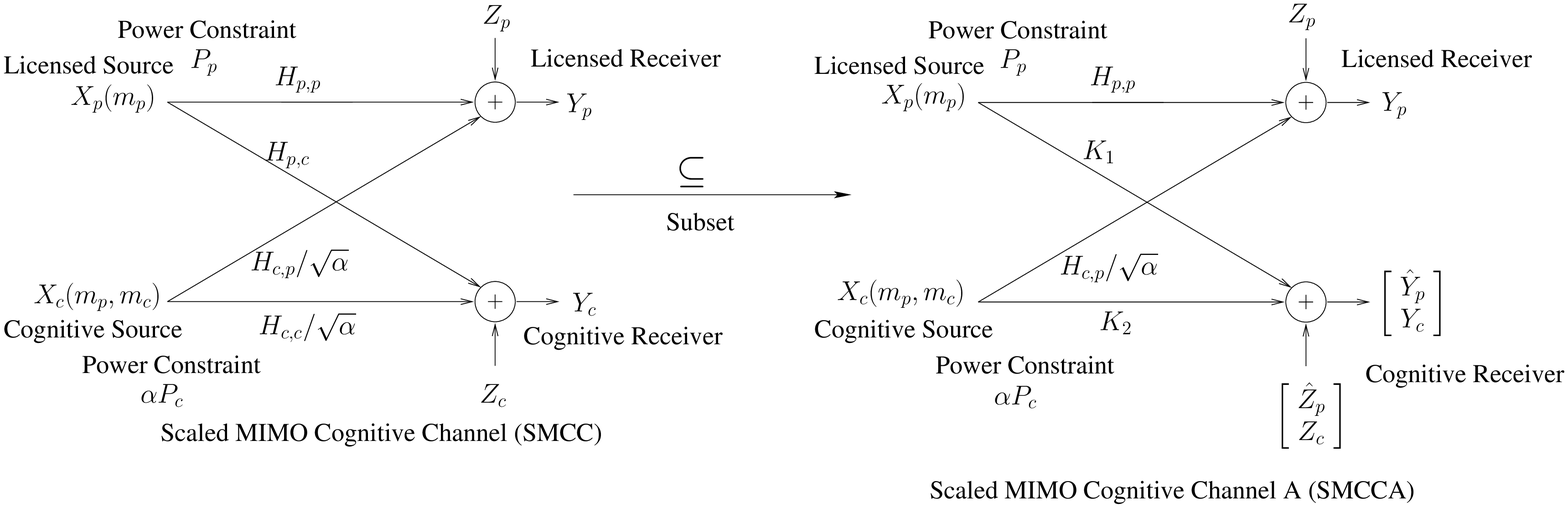}
\caption{Capacity Region of SMCC $\subseteq$ Capacity Region of SMCCA}
\end{figure*}
\begin{proof} : Let the rate pair $(R_p, R_c)$ be achievable on the SMCC. That is, for all $\epsilon_1, \epsilon_2 > 0$, there exists
a $n$ and a sequence of encoder decoder pairs at the licensed and
cognitive transmitter and receiver $(E_p^n : m_p \rightarrow
\mathbf{X_p}^n, D_p^n : \mathbf{Y_p}^n \rightarrow \hat{m}_p, E_c^n
: (m_p, m_c) \rightarrow \mathbf{X_c}^n, D_c^n : \mathbf{Y_c}^n
\rightarrow \hat{m}_c)$ such that the codewords $\mathbf{X_p}^n$ and
$\mathbf{X_c}^n$ satisfy the power constraints and the probability of decoding error is small
$(Pr(m_p \neq \hat{m}_p) \leq \epsilon_1, Pr(m_c \neq \hat{m}_c)
\leq \epsilon_2)$. In the SMCCA, we can use the same encoder decoder pair $E_p^n$ and $D_p^n$ at the licensed transmitter and receiver to achieve a rate $R_p$ with probability of decoding error $< \epsilon_1$. Also, by ignoring the received vector $\mathbf{\hat{Y}_p}^n$ at the cognitive receiver, we can use $E_c^n$ and $D_c^n$ at the cognitive transmitters and receivers to achieve a rate $R_c$ with the decoding probability of error $< \epsilon_2$. Hence, the rate pair $(R_p, R_c)$ is achievable on the scaled MIMO cognitive channel A (SMCCA). Therefore, the capacity region of the SMCCA is a superset of the capacity region of the SMCC.
\end{proof}

Transformation 3 (scaled MIMO cognitive channel A (SMCCA)
$\rightarrow$ scaled MIMO cognitive channel B (SMCCB) ) : The scaled
MIMO cognitive channel (B) is described in Figure $2$d and Figure $5$.
The channel matrix from the licensed transmitter to the cognitive
receiver is modified from $\mathbf{K_1} = \left[\begin{array}{c}\mathbf{H_{p,p}} \\
\mathbf{H_{p,c}} \end{array}\right]$ to $\overline{\mathbf{K_1}} =
\left[\begin{array}{c} \mathbf{H_{p,p}} \\ \mathbf{0}
\end{array}\right]$. Hence, the received vector at the cognitive
receiver is given by $\left[\begin{array}{c} \mathbf{\hat{Y}_p}^n \\
\overline{\mathbf{Y_c}^n}\end{array}\right]$ where
$\overline{\mathbf{Y_c}^n} = \frac{\mathbf{H_{c,c}}}{\sqrt{\alpha}}
\mathbf{X_c}^n + \mathbf{Z_c}^n$. The intuition behind the
transformation is to remove the original interference caused by the
licensed transmitter to the cognitive receiver.
\begin{lem} The capacity region of the scaled MIMO cognitive channel B (SMCCB) is equal
to the capacity region of the scaled MIMO cognitive channel A (SMCCA). \end{lem}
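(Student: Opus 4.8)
The plan is to prove both inclusions between the two capacity regions, exploiting the fact that the SMCCA and SMCCB differ \emph{only} in the cognitive receiver's observation. In both channels the first component is identical, $\mathbf{\hat{Y}_p}^n = \mathbf{H_{p,p}}\mathbf{X_p}^n + \frac{1}{\sqrt{\alpha}}\mathbf{H_{c,p}}\mathbf{X_c}^n + \mathbf{\hat{Z}_p}^n$, while the second component changes from $\mathbf{Y_c}^n = \mathbf{H_{p,c}}\mathbf{X_p}^n + \frac{1}{\sqrt{\alpha}}\mathbf{H_{c,c}}\mathbf{X_c}^n + \mathbf{Z_c}^n$ in the SMCCA to $\overline{\mathbf{Y_c}^n} = \frac{1}{\sqrt{\alpha}}\mathbf{H_{c,c}}\mathbf{X_c}^n + \mathbf{Z_c}^n$ in the SMCCB. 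Since these differ by the single deterministic term $\mathbf{H_{p,c}}\mathbf{X_p}^n$, the crux of the argument is that if the cognitive receiver can regenerate $\mathbf{X_p}^n$, it can convert either observation into the other and then reuse the cognitive decoder of the companion channel verbatim.

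The key step is therefore to show that the cognitive receiver can reliably decode $m_p$ from $\mathbf{\hat{Y}_p}^n$ alone. I would observe that $\mathbf{\hat{Y}_p}^n$ has exactly the same signal part as the licensed receiver's observation $\mathbf{Y_p}^n$ in the SMCC, and that its noise $\mathbf{\hat{Z}_p}^n$ shares the same marginal distribution as $\mathbf{Z_p}^n$, differing only in its correlation with $\mathbf{Z_c}^n$ (a correlation the cognitive receiver never exploits). Because the licensed decoder $D_p^n$ is a function of its own observation only, and its decision regions depend solely on the marginal law of the noise, applying $D_p^n$ to $\mathbf{\hat{Y}_p}^n$ yields $\hat{m}_p = m_p$ with exactly the error probability $D_p^n$ achieves on the licensed link, namely at most $\epsilon_1$. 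Given a correct $\hat{m}_p$, the cognitive receiver reruns the known encoder $E_p^n$ to recover $\mathbf{X_p}^n$ and hence $\mathbf{H_{p,c}}\mathbf{X_p}^n$.

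With this subroutine in hand, both inclusions follow symmetrically. To show that every rate pair achievable on the SMCCA is achievable on the SMCCB, I keep $E_p^n, D_p^n, E_c^n$ unchanged and modify only the cognitive decoder: it first decodes $m_p$ from $\mathbf{\hat{Y}_p}^n$, then forms $\mathbf{Y_c}^n = \overline{\mathbf{Y_c}^n} + \mathbf{H_{p,c}}\mathbf{X_p}^n$ and applies the original SMCCA cognitive decoder $D_c^n$. For the reverse inclusion I start from an SMCCB code and instead form $\overline{\mathbf{Y_c}^n} = \mathbf{Y_c}^n - \mathbf{H_{p,c}}\mathbf{X_p}^n$ on the SMCCA before applying the SMCCB cognitive decoder. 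Since the licensed receiver is identical in the two channels, the rate $R_p$ transfers trivially, and the cognitive error probability is bounded by a union bound over the event $\hat{m}_p \neq m_p$ (probability at most $\epsilon_1$) and the companion decoder's own error (probability at most $\epsilon_2$), giving a total cognitive error at most $\epsilon_1 + \epsilon_2$, which is arbitrarily small.

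The main obstacle I anticipate is making the $m_p$-decoding step fully rigorous, in particular justifying that the freedom in the correlation of $\mathbf{\hat{Z}_p}^n$ with $\mathbf{Z_c}^n$ leaves the single-link error probability of $D_p^n$ untouched, and cleanly accounting for the error propagation when the reconstructed $\mathbf{X_p}^n$ is fed into the interference-cancellation step. Once these points are settled, the deterministic term $\mathbf{H_{p,c}}\mathbf{X_p}^n$ is removed or reinstated exactly on the good event $\hat{m}_p = m_p$, the two transferred codes achieve the original rate pair, and the capacity regions of the SMCCA and the SMCCB coincide.
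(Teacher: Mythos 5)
Your proposal is correct and follows essentially the same argument as the paper: in both directions the cognitive receiver applies the licensed user's decoder $D_p^n$ to $\mathbf{\hat{Y}_p}^n$ (valid since $\mathbf{\hat{Z}_p}^n$ has the same marginal law as $\mathbf{Z_p}^n$), reconstructs $\mathbf{X_p}^n$ via $E_p^n$, adds or subtracts the deterministic interference $\mathbf{H_{p,c}}\mathbf{X_p}^n$ to convert one observation into the other, and then applies the companion channel's cognitive decoder, with a union bound over the two error events. The only difference is cosmetic bookkeeping: the paper tracks error probabilities via $\min(\delta_1/2,\delta_2/2)$ choices of codes rather than your $\epsilon_1+\epsilon_2$ bound, which is equivalent since the tolerances are arbitrary.
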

\begin{figure*}[!th]\label{fig : transformation 3}
\centering
\includegraphics[width=4.6in]{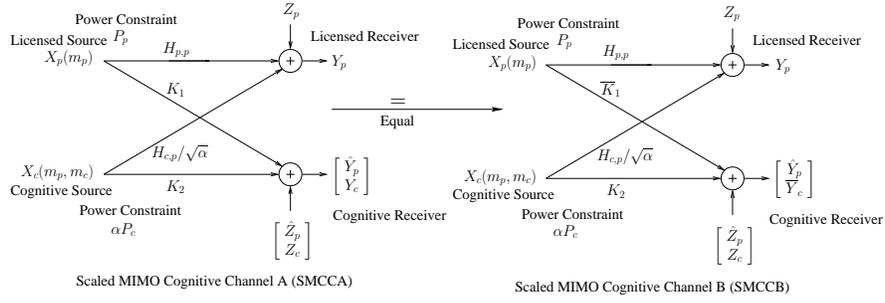}
\caption{Capacity Region of SMCCA = Capacity Region of SMCCB}
\end{figure*}
\begin{proof} :
Let the rate pair $(R_p, R_c)$ be achievable on the SMCCA. This
implies that for every $\epsilon_1, \epsilon_2 > 0$, there exists
encoder-decoder pair for the licensed user $(E_p^n(\epsilon_1),
D_p^n(\epsilon_1))$ and for the cognitive user $(E_c^n(\epsilon_2),
D_c^n(\epsilon_2))$ such that the probability of decoding error is
less than $\epsilon_1$ and $\epsilon_2$ respectively for the
licensed and cognitive user. Let $\delta_1, \delta_2 \in (0,1)$. In
SMCCB, the licensed user can employ $E_p^n(\min (\delta_1/2,
\delta_2/2)),$ $D_p^n(\min (\delta_1/2, \delta_2/2))$ to decode $m_p$
with a probability of error $\leq \delta_1/2 < \delta_1$. The
cognitive receiver uses $E_p^n(\min (\delta_1/2, \delta_2/2)), $
$D_p^n(\min (\delta_1/2, \delta_2/2))$ on $\mathbf{\hat{Y}_p}^n$ to
obtain $m_p$ with probability of error $\leq \delta_1/2$. The
cognitive receiver can now construct $\mathbf{X_p}^n$ and hence
$\mathbf{H_{p,c}} \mathbf{X_p}^n$. Thus, the cognitive receiver
recovers $\mathbf{Y_c}^n = \mathbf{H_{p,c}} \mathbf{X_p}^n +
\frac{\mathbf{H_{c,c}}}{\sqrt{\alpha}} \mathbf{X_{c,c}^n} +
\mathbf{Z_c}^n$. Now, it uses, $E_c^n(\delta_2/2), D_c^n(\delta_2/2)$
to obtain $m_c$ with probability of error $\leq \delta_2/2$.
Clearly, the probability of error in recovering $m_c$ is less than
$\delta_2$. Hence, the rate pair $(R_p, R_c)$ is achievable on
SMCCB. Therefore, the capacity region of SMCCB is a superset of the capacity
region of SMCCA.

Let the rate pair $(R_p, R_c)$ be achievable on SMCCB. Then, for every $\epsilon_1, \epsilon_2 > 0$, there exists
encoder-decoder pair for the licensed user $(E_p^n(\epsilon_1),
D_p^n(\epsilon_1))$ and for the cognitive user $(E_c^n(\epsilon_2),
D_c^n(\epsilon_2))$ such that the probability of decoding error is
less than $\epsilon_1$ and $\epsilon_2$ respectively for the
licensed and cognitive user. Let $\delta_1, \delta_2 > 0$. In SMCCA, the licensed user can employ $E_p^n(\min (\delta_1/2,
\delta_2/2)), D_p^n(\min (\delta_1/2, \delta_2/2))$ to decode $m_p$
with a probability of error $\leq \delta_1/2 < \delta_1$. The cognitive user employs $E_p^n(\min (\delta_1/2, \delta_2/2)), D_p^n(\min (\delta_1/2, \delta_2/2))$ on $\mathbf{\hat{Y}_p}^n$ to
obtain $m_p$ with probability of error $\leq \delta_2/2$. The cognitive receiver can now construct $\mathbf{X_p}^n$ and hence $\mathbf{H_{p,c}}\mathbf{X_p}^n$. Hence, the cognitive receiver subtracts $\mathbf{H_{p,c}}\mathbf{X_p}^n$ from $\mathbf{Y_c}^n$ to obtain $\mathbf{\overline{Y}_c}^n$. The cognitive receiver can now use $E_c^n(\delta_2/2), D_c^n(\delta_2/2)$ to obtain $m_c$ with probability of error $< \delta_2$. Thus, the rate pair $(R_p, R_c)$ is achievable on SMCCA.

Therefore, the capacity region of the SMCCA is equal to the capacity region of the SMCCB.
\end{proof}

Transformation 4 (scaled MIMO cognitive channel (B) $\rightarrow$
scaled MIMO broadcast channel A (SMBCA)): The scaled MIMO broadcast
channel A (SMBCA) is depicted in Figure $2$e and Figure $6$. We let
the two transmitters to co-operate and transform it into a broadcast
channel with a sum power constraint of $P_p + \alpha P_c$. The new
channel matrices from the combined transmitters to the licensed and
cognitive receivers are given by $\mathbf{G_{\alpha}} =
\left[\begin{array}{cc} \mathbf{H_{p,p}} &
\mathbf{H_{c,p}}/\sqrt{\alpha} \end{array}\right]$ and
$\overline{\mathbf{K}} = \left[\begin{array}{cc}\mathbf{H_{p,p}} &
\mathbf{H_{c,p}}/\sqrt{\alpha} \\ \mathbf{0} &
\mathbf{H_{c,c}}/\sqrt{\alpha}\end{array}\right]$ respectively.
\begin{lem} The capacity region of the scaled MIMO broadcast channel A (SMBCA) is a superset of the capacity region of scaled MIMO cognitive channel B (SMCCB).\end{lem}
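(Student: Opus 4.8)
The plan is to follow the template of the preceding transformations and show that any code achieving $(R_p,R_c)$ on the SMCCB can be reused verbatim on the SMBCA, so that every achievable rate pair of the former is achievable on the latter. The conceptual content is that the SMBCA is obtained from the SMCCB by two relaxations: the two physically separate transmitters are permitted to cooperate into a single transmitter whose input is the stacked vector $\mathbf{X_{net}}^n$ built from $\mathbf{X_p}^n$ and $\mathbf{X_c}^n$, and the two individual power constraints $\mathrm{Tr}(\mathbf{\Sigma_p})\le P_p$ and $\mathrm{Tr}(\mathbf{\Sigma_c})\le \alpha P_c$ are merged into the single sum constraint $\mathrm{Tr}(\mathbf{\Sigma_p})+\mathrm{Tr}(\mathbf{\Sigma_c})\le P_p+\alpha P_c$. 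Each relaxation only enlarges the set of admissible transmission strategies, hence can only enlarge the capacity region.

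Concretely, I would start from a rate pair $(R_p,R_c)$ achievable on the SMCCB, so that for all $\epsilon_1,\epsilon_2>0$ there is a blocklength $n$ and encoder--decoder pairs $(E_p^n,D_p^n)$, $(E_c^n,D_c^n)$ meeting the two individual power constraints and the error bounds $Pr(m_p\neq\hat{m}_p)\le\epsilon_1$, $Pr(m_c\neq\hat{m}_c)\le\epsilon_2$. On the SMBCA the single cooperating transmitter, which has both $m_p$ and $m_c$ available, sets $\mathbf{X_p}^n=E_p^n(m_p)$ and $\mathbf{X_c}^n=E_c^n(m_p,m_c)$ and transmits their stacking $\mathbf{X_{net}}^n$. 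The key observation is that, with the matrices $\mathbf{G_{\alpha}}$ and $\overline{\mathbf{K}}$ as defined, the licensed output $\mathbf{Y_p}^n=\mathbf{G_{\alpha}}\mathbf{X_{net}}^n+\mathbf{Z_p}^n$ and the stacked cognitive output $\overline{\mathbf{K}}\mathbf{X_{net}}^n$ corrupted by $\mathbf{\hat{Z}_p}^n$ and $\mathbf{Z_c}^n$ coincide signal-for-signal and noise-for-noise with the corresponding outputs of the SMCCB. Consequently the same decoders $D_p^n$ and $D_c^n$ produce exactly the same error probabilities, and $(R_p,R_c)$ is achievable on the SMBCA.

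The remaining points require bookkeeping rather than any real ingenuity, so I do not expect a genuine obstacle here. First, the messages $m_p$ and $m_c$ remain independent under the construction, so the induced scheme is a legitimate two-user broadcast code with independent messages and the later comparison with $\mathcal{C}_{BC}^{\alpha}$ is meaningful. Second, I would check that the cooperative input is always admissible: summing the two per-codeword constraints inherited from the SMCCB gives $\mathrm{Tr}(\mathbf{\Sigma_p})+\mathrm{Tr}(\mathbf{\Sigma_c})\le P_p+\alpha P_c$, which is precisely the sum power constraint of the SMBCA. Establishing that achievability on the SMCCB implies achievability on the SMBCA yields the claimed superset relation; combined with the earlier transformations this reduces the outer bound to the capacity region of a MIMO broadcast channel under a sum power constraint, whose known characterization in \cite{Weingarten2006}\cite{Mohseni2006}\cite{TieLiuSubmitted} then furnishes the bound.
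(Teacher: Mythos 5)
Your proposal is correct and follows essentially the same route as the paper: the paper's proof likewise observes that the SMBCA transmitter can simply forgo cooperation, emulate the two SMCCB encoders under their separate power constraints (which automatically satisfy the sum constraint), and reuse the same decoders, since the channel outputs under $\mathbf{G_{\alpha}}$ and $\overline{\mathbf{K}}$ coincide with those of the SMCCB. Your write-up merely spells out the signal-matching and power bookkeeping that the paper leaves implicit.
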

\begin{figure*}[!th]\label{fig : transformation 4}
\centering
\includegraphics[width=4.6in]{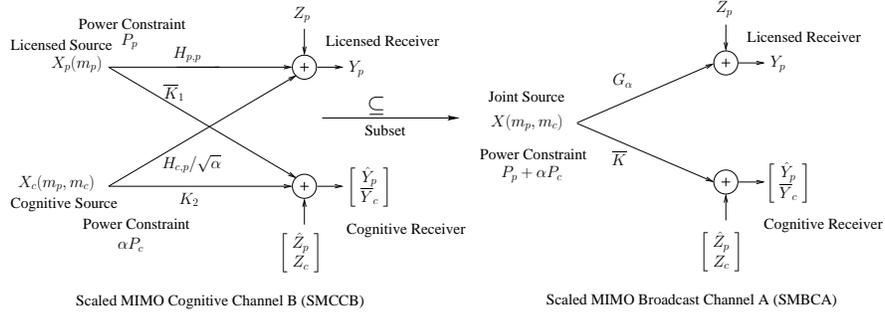}
\caption{Capacity Region of SMCCB $\subseteq$ Capacity Region of SMBCA}
\end{figure*}
\begin{proof} :
Let the rate pair $(R_p, R_c)$ be achievable on the SMCCB. In the SMBCA, using no collaboration between the two transmitters and using separate power constraints of $P_p$ and $\alpha P_c$ respectively, we reduce the SMBCA to the SMCCB. Hence, the rate pair $(R_p, R_c)$ is achievable on the SMBCA. Therefore, the capacity region of the SMBCA is a superset of the capacity region of the SMCCB.
\end{proof}

We have showed that for any $\alpha > 0$, $\mathcal{C}_{MCC} =
\mathcal{C}_{SMCC} \subseteq \mathcal{C}_{SMCCA} =
\mathcal{C}_{SMCCB} \subseteq \mathcal{C}_{SMBCA}$. Hence, the capacity region of the scaled MIMO broadcast channel A (SMBCA) is a superset of the capacity region of the MIMO cognitive channel (MCC).

\begin{proof} \textit{of Theorem \ref{thm : converse}} : In the SMBCA, let $\mathbf{Q_p}$ denote the covariance matrix of the codeword for the licensed user and let $\mathbf{Q_c}$ denote the covariance matrix for the cognitive user. The SMBCA is a physically degraded broadcast channel. Hence, the capacity region of the SMBCA (as given by \cite{Weingarten2006}) denoted by $\mathcal{C}_{SMBCA}$ is the closure of the convex hull of the set of rate pairs described by
\begin{align}
\left\{\begin{array}{l} (R_p, R_c) : R_p \geq 0, R_c \geq 0 \vspace{0.15cm}\\
R_p \leq \log\left| \mathbf{I} + \mathbf{G_{\alpha}} \mathbf{Q_p} \mathbf{G_{\alpha}^{\dagger}} + \mathbf{G_{\alpha}} \mathbf{Q_c} \mathbf{G_{\alpha}^{\dagger}}\right| \\
\qquad\quad - \log\left| \mathbf{I} + \mathbf{G_{\alpha}} \mathbf{Q_c} \mathbf{G_{\alpha}^{\dagger}}\right| \vspace{0.15cm}\\
R_c \leq \log\left| \mathbf{\Sigma_z} + \overline{\mathbf{K}} \mathbf{Q_c} \overline{\mathbf{K^{\dagger}}}\right| - \log\left|\mathbf{\Sigma_z}\right|\vspace{0.15cm}\\
\forall \mathbf{Q_p} \succeq \mathbf{0}, \mathbf{Q_c} \succeq \mathbf{0}\vspace{0.15cm}\\
\mathrm{Tr}(\mathbf{Q_p}) + \mathrm{Tr}(\mathbf{Q_c}) \leq P_p + \alpha P_c \end{array}\right\}.
\end{align}

Also, this is the outer bound of the MCC. Hence, $\mathcal{R}_{out}^{\alpha, \mathbf{\Sigma_z}}$ described by (\ref{eqn : converse region}) is an outer bound on the capacity region of the MCC. Hence, $\mathcal{C}_{MCC} \subseteq \mathcal{R}_{out}^{\alpha, \mathbf{\Sigma_z}}$. Also, $\mathcal{C}_{MCC} \subseteq \mathcal{R}_{out}$, where $\mathcal{R}_{out}$ is described in (9).
\end{proof}

Transformation 5 (scaled MIMO broadcast channel A (SMBCA)
$\rightarrow$ scaled MIMO broadcast channel (SMBC)) : The scaled
MIMO broadcast channel (SMBC) is depicted in Figure $2$f and Figure
$7$. We change the received vector at the cognitive receiver from
$\left[\begin{array}{c} \mathbf{\hat{Y}_p}^n \\
\overline{\mathbf{Y_c}^n}
\end{array}\right]$ to $\overline{\mathbf{Y_c}^n}$. This is done by
changing the channel matrix from the joint transmitters to the
cognitive receiver to $\mathbf{K} = \left[\begin{array}{cc}
\mathbf{0} & \mathbf{H_{c,c}}/\sqrt{\alpha} \end{array}\right]$.
\begin{lem}[\cite{Vishwanath2002}] The capacity region of the SMBCA is a superset of the capacity region of the scaled MIMO broadcast channel (SMBC).
\end{lem}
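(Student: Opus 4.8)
The plan is to recognize that the SMBC is obtained from the SMBCA by withholding part of the cognitive receiver's observation, so that this is simply an instance of the principle that a better-informed receiver can only enlarge the capacity region. In both channels the cooperating transmitter, the sum power constraint $P_p + \alpha P_c$, the licensed receiver, and the channel $\mathbf{G_{\alpha}}$ to the licensed receiver are identical; the two channels differ only in what the cognitive receiver observes. In the SMBCA the cognitive receiver sees the stacked vector $\left[\mathbf{\hat{Y}_p}^n;\ \overline{\mathbf{Y_c}^n}\right]$, whereas in the SMBC it sees only $\overline{\mathbf{Y_c}^n}$. Since the extra block $\mathbf{\hat{Y}_p}^n$ is merely additional side information at the cognitive decoder, discarding it cannot help, which yields exactly the claimed containment $\mathcal{C}_{SMBC} \subseteq \mathcal{C}_{SMBCA}$.

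Concretely, I would fix a rate pair $(R_p, R_c)$ achievable on the SMBC together with, for every $\epsilon_1, \epsilon_2 > 0$, a transmitter codebook and decoding rules $(E^n, D_p^n, D_c^n)$ meeting the sum power constraint and achieving error probabilities below $\epsilon_1$ and $\epsilon_2$ at the licensed and cognitive receivers respectively. I would then transplant this scheme onto the SMBCA without modification: the same codebook $E^n$ and the same licensed decoder $D_p^n$ are used, and because the licensed receiver's channel law is unchanged, its error probability is again at most $\epsilon_1$. At the cognitive receiver of the SMBCA I would simply ignore the $\mathbf{\hat{Y}_p}^n$ block and feed the retained block $\overline{\mathbf{Y_c}^n}$ into the original cognitive decoder $D_c^n$.

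The only point that needs checking is that this preserves the cognitive error probability, and this reduces to the observation that the conditional law of $\overline{\mathbf{Y_c}^n}$ given the transmitted codeword is identical in the two channels. Indeed, in both channels $\overline{\mathbf{Y_c}^n} = \mathbf{K}\mathbf{X}^n + \mathbf{Z_c}^n$ with the same matrix $\mathbf{K} = [\mathbf{0}\ \ \mathbf{H_{c,c}}/\sqrt{\alpha}]$ and the same noise distribution $\mathcal{N}(\mathbf{0}, \mathbf{I})$ for $\mathbf{Z_c}^n$, and the arbitrary correlation between $\mathbf{\hat{Z}_p}^n$ and $\mathbf{Z_c}^n$ becomes irrelevant once $\mathbf{\hat{Y}_p}^n$ is discarded, since it does not alter the marginal of $\mathbf{Z_c}^n$. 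Hence the cognitive error probability stays below $\epsilon_2$, so $(R_p, R_c)$ is achievable on the SMBCA and the superset relation follows. I expect no genuine obstacle here: the argument is the standard ignore-the-extra-observation step already used in Transformation $2$, and the sole subtlety is the marginal-distribution check just described.
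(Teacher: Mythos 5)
Your proof is correct and follows essentially the same route as the paper: transplant the SMBC scheme onto the SMBCA unchanged and have the cognitive receiver discard $\mathbf{\hat{Y}_p}^n$ and decode from $\overline{\mathbf{Y_c}}^n$ alone. Your additional check that the marginal law of $\overline{\mathbf{Y_c}}^n$ given the codeword is unaffected by the correlation of $\mathbf{\hat{Z}_p}^n$ with $\mathbf{Z_c}^n$ is a sound refinement that the paper leaves implicit.
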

\begin{figure*}[!th]\label{fig : transformation 5}
\centering
\includegraphics[width=4.6in]{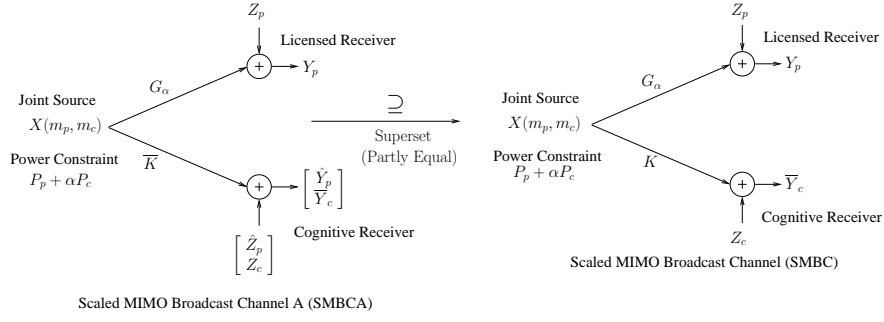}
\caption{Capacity Region of SMBCA $\supseteq$ Capacity Region of SMBC}
\end{figure*}
\begin{proof} :
Let the rate pair $(R_p, R_c)$ be achievable on the SMBC. That is, for all $\epsilon_1, \epsilon_2 > 0$, there exists
a $n$ and a sequence of encoder decoder pairs at the transmitter and the two receivers $(E^n : (m_p, m_c) \rightarrow
\mathbf{X}^n$, $D_p^n : \mathbf{Y_p}^n \rightarrow \hat{m}_p, D_c^n : \mathbf{Y_c}^n
\rightarrow \hat{m}_c)$ such that the codeword $\mathbf{X}^n$ satisfies the power constraint of $P_p + \alpha P_c$ and the probability of decoding error is small
$(Pr(m_p \neq \hat{m}_p) \leq \epsilon_1, Pr(m_c \neq \hat{m}_c)
\leq \epsilon_2)$.

In the SMBCA, the transmitter and the receivers use the same coding strategy. The licensed receiver can decode message $m_p$ at a rate $R_p$. The cognitive receiver can ignore $\mathbf{\hat{Y}_p}^n$ and use just $\mathbf{\overline{Y}_c}^n$ to decode message $m_c$ at a rate $R_c$. Hence, the rate pair $(R_p, R_c)$ is achievable in the SMBCA. Hence, the capacity region of the SMBCA is in general a superset of the capacity region of the SMBC.
\end{proof}

\vspace{0.5cm}
We describe one more lemma whose result will be used in the proof of Theorem (\ref{thm : partial converse}).
\vspace{0.5cm}
\begin{lem}[\cite{Vishwanath2002}] Let $\mathcal{C}_{SMBC}$ denote the capacity region of the scaled MIMO broadcast channel described in Figure $2$f. Then, for any $\mu \geq 1$,
\begin{displaymath}\sup_{(R_p, R_c) \in \mathcal{C}_{SMBC}} \mu R_p + R_c = \inf_{\mathbf{\Sigma_z}} \sup_{(R_p, R_c) \in \mathcal{C}_{SMBCA}} \mu R_p + R_c.\end{displaymath}
\end{lem}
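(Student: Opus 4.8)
The plan is to prove the two inequalities separately, the forward one being immediate and the reverse one carrying all the content. For
\[
\sup_{(R_p,R_c)\in\mathcal{C}_{SMBC}}\mu R_p + R_c \;\le\; \inf_{\mathbf{\Sigma_z}}\;\sup_{(R_p,R_c)\in\mathcal{C}_{SMBCA}}\mu R_p + R_c,
\]
I would invoke the containment $\mathcal{C}_{SMBC}\subseteq\mathcal{C}_{SMBCA}$ proved in Transformation 5. Since this holds for \emph{every} admissible $\mathbf{\Sigma_z}$, maximizing the linear functional $\mu R_p+R_c$ over the smaller region is dominated by maximizing over the larger region for each $\mathbf{\Sigma_z}$, and the bound is preserved under the infimum over $\mathbf{\Sigma_z}$.

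For the reverse inequality it suffices to produce a single least-favorable $\mathbf{\Sigma_z^{*}}$ with $\sup_{\mathcal{C}_{SMBCA}}\mu R_p+R_c\le\sup_{\mathcal{C}_{SMBC}}\mu R_p+R_c$; because the opposite containment of regions can never hold, this forces equality of the two weighted sums at $\mathbf{\Sigma_z^{*}}$. Since $\mu R_p+R_c$ is linear in $(R_p,R_c)$, the closure-of-convex-hull operation is irrelevant, and $\sup_{\mathcal{C}_{SMBCA}}\mu R_p+R_c$ equals the maximum of
\[
J(\mathbf{Q_p},\mathbf{Q_c},\mathbf{\Sigma_z})=\mu\left[\log\left|\mathbf{I}+\mathbf{G_\alpha}(\mathbf{Q_p}+\mathbf{Q_c})\mathbf{G_\alpha^{\dagger}}\right|-\log\left|\mathbf{I}+\mathbf{G_\alpha}\mathbf{Q_c}\mathbf{G_\alpha^{\dagger}}\right|\right]+\log\left|\mathbf{\Sigma_z}+\overline{\mathbf{K}}\mathbf{Q_c}\overline{\mathbf{K}}^{\dagger}\right|-\log\left|\mathbf{\Sigma_z}\right|
\]
over $\mathbf{Q_p},\mathbf{Q_c}\succeq\mathbf{0}$ with $\mathrm{Tr}(\mathbf{Q_p})+\mathrm{Tr}(\mathbf{Q_c})\le P_p+\alpha P_c$. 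The object to analyze is therefore the minimax $\inf_{\mathbf{\Sigma_z}}\max_{\mathbf{Q_p},\mathbf{Q_c}}J$. I would first record that $J$ is convex in $\mathbf{\Sigma_z}$ for fixed input, because $\mathbf{\Sigma}\mapsto\log\left|\mathbf{\Sigma}+\mathbf{A}\right|-\log\left|\mathbf{\Sigma}\right|$ is convex on the positive-definite cone for any $\mathbf{A}\succeq\mathbf{0}$ (writing it as $\int_0^1\mathrm{Tr}\big((\mathbf{\Sigma}+t\mathbf{A})^{-1}\mathbf{A}\big)\,dt$ and using operator convexity of the inverse), while $R_p$ does not depend on $\mathbf{\Sigma_z}$ at all. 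Restricting $\mathbf{\Sigma_z}$ of the form (\ref{eqn : sigma_z}) to the compact set $\mathbf{\Sigma_z}\succeq\epsilon\mathbf{I}$ and the inputs to their compact power-constrained set, the minimax is attained.

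The difficulty --- and the \emph{main obstacle} --- is that $J$ is not jointly concave in $(\mathbf{Q_p},\mathbf{Q_c})$: the dirty-paper term $-\mu\log\left|\mathbf{I}+\mathbf{G_\alpha}\mathbf{Q_c}\mathbf{G_\alpha^{\dagger}}\right|$ is convex, so Sion's theorem cannot be applied to $J$ directly and the inner maximization is itself nonconvex. The plan to remove this obstruction is to pass to the dual multiple-access channel via the uplink--downlink duality of \cite{Vishwanath2002}: after whitening the cognitive receiver's noise by $\mathbf{\Sigma_z}^{-1/2}$, the inner supremum $\max_{\mathbf{Q_p},\mathbf{Q_c}}J$ equals the optimal weighted sum rate of the dual MAC under the sum power constraint $P_p+\alpha P_c$, an expression that \emph{is} jointly concave in the MAC input covariances. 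On the dual side the objective is concave in the MAC covariances and convex in $\mathbf{\Sigma_z}$, so Sion's minimax theorem now applies and yields both the exchange $\inf_{\mathbf{\Sigma_z}}\max_{\mathbf{Q_p},\mathbf{Q_c}}J=\max_{\mathbf{Q_p},\mathbf{Q_c}}\inf_{\mathbf{\Sigma_z}}J$ and a saddle point $(\mathbf{\Sigma_z^{*}},\cdot)$. Translating the saddle back to the broadcast channel, the least-favorable $\mathbf{\Sigma_z^{*}}$ makes the extra observation $\mathbf{\hat{Y}_p}^n$ at the cognitive receiver uninformative for the saddle-optimal input $\mathbf{Q_c^{*}}$: the degraded term $\log\left|\mathbf{\Sigma_z^{*}}+\overline{\mathbf{K}}\mathbf{Q_c^{*}}\overline{\mathbf{K}}^{\dagger}\right|-\log\left|\mathbf{\Sigma_z^{*}}\right|$ collapses to the SMBC rate $\log\left|\mathbf{I}+\mathbf{K}\mathbf{Q_c^{*}}\mathbf{K}^{\dagger}\right|$ while the $R_p$ term is unchanged, so the SMBCA weighted sum at $\mathbf{\Sigma_z^{*}}$ is achieved by an input that is also optimal for, and achievable on, the non-degraded SMBC. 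This yields $\inf_{\mathbf{\Sigma_z}}\sup_{\mathcal{C}_{SMBCA}}\mu R_p+R_c\le\sup_{\mathcal{C}_{SMBC}}\mu R_p+R_c$, which together with the first inequality proves the lemma. The hypothesis $\mu\ge1$ enters by fixing the relevant orientation of the degraded enhancement --- the licensed message playing the role of the cloud-center (coarse) layer --- so that the SMBCA superposition structure matches the optimal dirty-paper order on the SMBC; the same argument with the users interchanged would be needed for $\mu<1$.
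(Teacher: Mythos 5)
The paper itself never proves this lemma --- it defers entirely to \cite[Section 5.1]{Vishwanath2002} --- so what you are attempting is a reconstruction of that external argument. Your architecture (easy direction from the containment $\mathcal{C}_{SMBC}\subseteq\mathcal{C}_{SMBCA}$ of Transformation 5; hard direction via BC--MAC duality, a minimax exchange, and a least-favorable noise covariance) does match the cited reference in spirit, and the convexity fact you quote for $\mathbf{\Sigma}\mapsto\log\left|\mathbf{\Sigma}+\mathbf{A}\right|-\log\left|\mathbf{\Sigma}\right|$ is correct. The genuine gap is at the decisive step: a saddle point produced by Sion's theorem does \emph{not} by itself imply your ``collapse'' claim that $\log\left|\mathbf{\Sigma_z^{*}}+\overline{\mathbf{K}}\mathbf{Q_c^{*}}\overline{\mathbf{K}}^{\dagger}\right|-\log\left|\mathbf{\Sigma_z^{*}}\right|=\log\left|\mathbf{I}+\mathbf{K}\mathbf{Q_c^{*}}\mathbf{K}^{\dagger}\right|$. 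Note that for \emph{every} input and \emph{every} admissible $\mathbf{\Sigma_z}$ the enhanced rate dominates the SMBC rate, since the extra observation $\hat{\mathbf{Y}}_p$ cannot decrease mutual information while the $R_p$ term does not involve $\mathbf{\Sigma_z}$; hence $\sup_{\mathbf{Q}}\inf_{\mathbf{\Sigma_z}}J\geq\sup_{(R_p,R_c)\in\mathcal{C}_{SMBC}}\mu R_p+R_c$, so the exchange $\inf\sup=\sup\inf$ merely re-derives the easy inequality and gives nothing in the reverse direction. The reverse inequality is exactly the assertion that at the optimizing input there exists an admissible $\mathbf{Q_z}$ making $\mathbf{X}\rightarrow\overline{\mathbf{Y}}_c\rightarrow\hat{\mathbf{Y}}_p$ a Markov chain, i.e., $\mathrm{Cov}(\hat{\mathbf{Y}}_p,\mathbf{X}\,|\,\overline{\mathbf{Y}}_c)=\mathbf{0}$. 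This amounts to the existence of a matrix $\mathbf{B}$ with $\mathbf{G_{\alpha}}\mathbf{Q_c^{*}}=\mathbf{B}\mathbf{K}\mathbf{Q_c^{*}}$ and $\mathbf{B}\mathbf{B}^{\dagger}\preceq\mathbf{I}$, which is \emph{not} solvable for arbitrary inputs; showing it is solvable at the saddle-optimal input, via the first-order (KKT) conditions tying $\mathbf{Q_c^{*}}$ to $\mathbf{\Sigma_z^{*}}$, is the entire content of the cited result (and of the Yu--Cioffi least-favorable-noise argument). Your proposal asserts this step rather than proving it, so the proof does not close.

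A secondary issue: your Sion application requires the objective to be convex in $\mathbf{\Sigma_z}$ and concave in the inputs \emph{in the same parametrization}. On the primal side concavity in $(\mathbf{Q_p},\mathbf{Q_c})$ fails (as you observed); on the dual side, $\mathbf{\Sigma_z}$ enters through the whitened channel $\mathbf{\Sigma_z}^{-1/2}\overline{\mathbf{K}}$ inside the log-determinants, and convexity in $\mathbf{\Sigma_z}$ of those terms is not the clean fact you quoted for the primal form --- it needs its own proof. You cannot import convexity from one side and concavity from the other. Finally, the infimum over $\mathbf{\Sigma_z}$ need not be attained (worst-case correlations may be singular, which is why the statement carries an $\inf$), so your restriction to $\mathbf{\Sigma_z}\succeq\epsilon\mathbf{I}$ requires a limiting argument that is also missing.
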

\vspace{0.3cm}
The proof is described in \cite[Section 5.1]{Vishwanath2002} and is omitted here.

We now give the proof for Theorem (\ref{thm : partial converse}).

\begin{proof} \textit{of Theorem \ref{thm : partial converse}} : It was shown in \cite{Weingarten2006} that Gaussian codebooks (i.e., codebooks generated using i.i.d. realizations of an appropriate Gaussian random variable) achieve the capacity region for the MIMO broadcast channel.
In SMBC, let $\mathbf{Q_p}$ denote the covariance of codeword $\mathbf{X}^n$ for the licensed user and $\mathbf{Q_c}$ denote the covariance matrix for the cognitive user. The covariance matrices satisfy the joint power constraint $\mathrm{Tr}(\mathbf{Q_p} + \mathbf{Q_c}) \leq P_p + \alpha P_c$.
Let $\mathcal{R}_{SMBC,1}^{\alpha}$ denote the closure of the convex hull of the set of rate pairs described by
\begin{align}
\left\{\begin{array}{l}(R_p, R_c) : R_p \geq 0, R_c \geq 0 \vspace{0.15cm}\\
R_p \leq \log\left| \mathbf{I} + \mathbf{G_{\alpha}} \mathbf{Q_p} \mathbf{G_{\alpha}^{\dagger}} + \mathbf{G_{\alpha}} \mathbf{Q_c} \mathbf{G_{\alpha}^{\dagger}} \right|\vspace{0.05cm}\\
\qquad\quad - \log\left| \mathbf{I} + \mathbf{G_{\alpha}} \mathbf{Q_c} \mathbf{G_{\alpha}^{\dagger}}\right|\vspace{0.15cm}\\
R_c \leq \log\left|\mathbf{I} + \mathbf{K} \mathbf{Q_c} \mathbf{K^{\dagger}}\right|\vspace{0.15cm}\\
\forall \mathbf{Q_p} \succeq \mathbf{0}, \mathbf{Q_c} \succeq \mathbf{0}\vspace{0.15cm}\\
\mathrm{Tr}(\mathbf{Q_p}) + \mathrm{Tr}(\mathbf{Q_c}) \leq P_p + \alpha
P_c\end{array}\right\}.
\end{align}
Similarly, let $\mathcal{R}_{SMBC,2}^{\alpha}$ denote the closure of the convex hull of the set of rate pairs described by
\begin{align}
\left\{\begin{array}{l}(R_p, R_c) : R_p \geq 0, R_c \geq 0 \vspace{0.15cm}\\
R_p \leq \log\left|\mathbf{I} + \mathbf{G_{\alpha}} \mathbf{Q_p} \mathbf{G_{\alpha}^{\dagger}}\right|\vspace{0.15cm}\\
R_c \leq \log\left| \mathbf{I} + \mathbf{K} \mathbf{Q_p} \mathbf{K^{\dagger}} + \mathbf{K}\mathbf{Q_c}\mathbf{K^{\dagger}}\right|\vspace{0.05cm}\\
\qquad\quad - \log\left| \mathbf{I} + \mathbf{K} \mathbf{Q_p} \mathbf{K^{\dagger}}\right|\vspace{0.15cm}\\
\forall \mathbf{Q_p} \succeq \mathbf{0}, \mathbf{Q_c} \succeq \mathbf{0}, \vspace{0.15cm}\\
\mathrm{Tr}(\mathbf{Q_p}) + \mathrm{Tr}(\mathbf{Q_c}) \leq P_p + \alpha P_c \end{array}\right\}.
\end{align}
The capacity region of SMBC, $\mathcal{C}_{SMBC}$ is the closure of the convex hull of $\mathcal{R}_{SMBC,1}^{\alpha} \cup \mathcal{R}_{SMBC,2}^{\alpha}$. That is,
\begin{equation}
\mathcal{C}_{SMBC} = \mathrm{Cl}(\mathrm{Co}(\mathcal{R}_{SMBC,1}^{\alpha} \cup \mathcal{R}_{SMBC,2}^{\alpha})).
\end{equation}
$\mathcal{R}_{SMBC,1}^{\alpha}$ denotes the portion of the capacity region of SMBC where the licensed user's message is encoded first. That is, the cognitive receiver sees no interference. Hence, for $\mu \geq 1$, we have
\begin{displaymath}
\max_{(R_p, R_c) \in \mathcal{R}_{SMBC,1}^{\alpha}} \mu R_p + R_c = \max_{(R_p, R_c) \in \mathcal{C}_{SMBC}} \mu R_p + R_c.
\end{displaymath}
Therefore, from Lemma 5.6, we have that for $\mu \geq 1$,
\begin{displaymath}
\max_{(R_p, R_c) \in \mathcal{R}_{SMBC,1}^{\alpha}} \mu R_p + R_c = \inf_{\mathbf{\Sigma_z}}\max_{(R_p, R_c) \in \mathcal{C}_{SMBCA}} \mu R_p + R_c.
\end{displaymath}
We can see that, $\mathcal{R}_{part, out}^{\alpha}$ described in (\ref{eqn : partial converse region}) is a subset of $\mathcal{R}_{SMBC,1}^{\alpha}$ formed by restricting the covariance matrix $\mathbf{Q_c}$ to have the form
\begin{displaymath}
\mathbf{Q_c} = \left[\begin{array}{cc}\mathbf{0} & \mathbf{0} \\ \mathbf{0} & \mathbf{\Sigma_{c,c}}\end{array}\right].
\end{displaymath}
It can also be seen that $\mathcal{R}_{out}^{\alpha, \mathbf{\Sigma_z}}$ described in (\ref{eqn : converse region}) equals $\mathcal{C}_{SMBCA}$. Hence, it follows that for any $\mu \geq 1$ and for $\alpha > 0$, if
\begin{displaymath}
\max_{(R_p, R_c) \in \mathcal{R}_{part, out}^{\alpha}} \mu R_p + R_c = \max_{(R_p, R_c) \in \mathcal{C}_{BC}^{\alpha}} \mu R_p + R_c,
\end{displaymath}
then we have that
\begin{displaymath}
\max_{(R_p, R_c) \in \mathcal{R}_{part, out}^{\alpha}} \mu R_p + R_c = \inf_{\mathbf{\Sigma_z}}\max_{(R_p, R_c) \in \mathcal{R}_{out}^{\alpha, \mathbf{\Sigma_z}}} \mu R_p + R_c.
\end{displaymath}
\end{proof}
\section{Optimality of the Achievable Region} \label{sec : optimality}
In this section, we describe conditions under which the achievable region described by
$\mathcal{R}_{in}$ in (\ref{eqn : achievable region}) is optimal for a
portion of the capacity region. In particular, we show that if $(R_p, R_c)$ lies on the boundary of the achievable region given by
$\mathcal{R}_{in}$, then $(R_p, R_c)$ lies on the boundary of $\mathcal{R}_{part, out}^{\alpha}$ given by (\ref{eqn : partial converse region}) for some $\alpha > 0$. That is, for any $\mu > 0$,
\begin{displaymath}
\sup_{(R_p, R_c) \in \mathcal{R}_{in}} \mu R_p + R_c = \inf_{\alpha > 0} \sup_{(R_p, R_c) \in \mathcal{R}_{part, out}^{\alpha}} \mu R_p + R_c.
\end{displaymath}
Then there exists $\alpha^* \in (0, \infty)$ such that, for any $\mu \geq 1$, $(R_{p,\mu}, R_{c,\mu}) = \arg \max_{(R_p, R_c) \in
\mathcal{R}_{in}} \mu R_p + R_c$ is a point on the boundary of the
capacity region of the MIMO cognitive channel if the condition (\ref{eqn : condition}) is satisfied for $\alpha^*$.

We denote by $\mathcal{R}_{ach,rate}$, the set of all
$((R_p, R_c), \mathbf{\Sigma_p}, \mathbf{\Sigma_{c,p}}, \mathbf{\Sigma_{c,c}}, \mathbf{Q})$ given by
\begin{align} \label{eqn : describe achievable region rate}
\left\{\begin{array}{l}\bigg((R_p, R_c), \mathbf{\Sigma_p}, \mathbf{\Sigma_{c,p}}, \mathbf{\Sigma_{c,c}}, \mathbf{Q}\bigg) : \vspace{0.15cm}\\
R_p \geq 0, R_c \geq 0,\mathbf{\Sigma_p} \succeq \mathbf{0}, \mathbf{\Sigma_{c,p}} \succeq \mathbf{0}, \mathbf{\Sigma_{c,c}} \succeq \mathbf{0}\vspace{0.15cm}\\
R_p \leq \log\left| \mathbf{I} + \mathbf{G} \mathbf{\Sigma_{p,net}} \mathbf{G^{\dagger}} + \mathbf{H_{c,p}} \mathbf{\Sigma_{c,c}} \mathbf{H_{c,p}^{\dagger}}\right|\vspace{0.05cm}\\
\qquad\quad  - \log\left| \mathbf{I} + \mathbf{H_{c,p}\Sigma_{c,c} H_{c,p}^{\dagger}}\right|\vspace{0.15cm}\\
R_c \leq \log\left| \mathbf{I} + \mathbf{H_{c,c} \Sigma_{c,c} H_{c,c}^{\dagger}}\right| \vspace{0.15cm}\\
\mathbf{\Sigma_{p,net}} = \left(\begin{array}{ll} \mathbf{\Sigma_p} & \mathbf{Q} \\ \mathbf{Q^{\dagger}}  & \mathbf{\Sigma_{c,p}} \end{array}\right) \succeq \mathbf{0} \end{array}\right\}.
\end{align}
The rate pair that maximizes $\mu R_p + R_c$ in the achievable region is given by solving the optimization problem
\begin{align} \label{eqn : optimization 1}
\sup_{((R_p, R_c), \mathbf{\Sigma_p}, \mathbf{\Sigma_{c,p}}, \mathbf{\Sigma_{c,c}}, \mathbf{Q})}& \mu R_p + R_c\\
\textrm{such that}\quad\qquad & ((R_p, R_c), \mathbf{\Sigma_p}, \mathbf{\Sigma_{c,p}}, \mathbf{\Sigma_{c,c}}, \mathbf{Q})\nonumber\\
& \qquad\qquad\in \mathcal{R}_{ach,rate}\nonumber\\
& \mathrm{Tr}(\mathbf{\Sigma_{p}}) \leq P_p\nonumber\\
& \mathrm{Tr}(\mathbf{\Sigma_{c,p}} + \mathbf{\Sigma_{c,c}}) \leq P_c. \nonumber
\end{align}
We define the functions $L(R_p, R_c, \mathbf{\Sigma_p}, \mathbf{\Sigma_{c,p}}, \mathbf{\Sigma_{c,c}},  \lambda_1, \lambda_2)$ and $g(R_p, R_c, \mathbf{\Sigma_p}, \mathbf{\Sigma_{c,p}}, \mathbf{\Sigma_{c,c}})$ as follows
\begin{align}\label{eqn : L}
L(R_p, R_c, \mathbf{\Sigma_p}, \mathbf{\Sigma_{c,p}}, \mathbf{\Sigma_{c,c}},  \lambda_1, \lambda_2) = \qquad\qquad\qquad\nonumber\\
\qquad\mu R_p + R_c -\lambda_1(\mathrm{Tr}(\mathbf{\Sigma_p}) - P_p) -\\
\qquad\quad\lambda_2 (\mathrm{Tr}(\mathbf{\Sigma_{c,p}} + \mathbf{\Sigma_{c,c}}) - P_c))\nonumber
\end{align}
\begin{align}\label{eqn : g}
g(R_p, R_c, \mathbf{\Sigma_p}, \mathbf{\Sigma_{c,p}}, \mathbf{\Sigma_{c,c}}) = \qquad\qquad\qquad\qquad\qquad\nonumber\\
\qquad\qquad \min _{\lambda_1 \geq 0, \lambda_2 \geq 0}  L(R_p, R_c, \mathbf{\Sigma_p}, \mathbf{\Sigma_{c,p}}, \mathbf{\Sigma_{c,c}}, \lambda_1, \lambda_2).
\end{align}
The optimization problem given by
\begin{align}\label{eqn : optimization 2}
\max_{(R_p, R_c, \mathbf{\Sigma_p}, \mathbf{\Sigma_{c,p}}, \mathbf{\Sigma_{c,c}}, \mathbf{Q})} &\  g(R_p, R_c, \mathbf{\Sigma_p}, \mathbf{\Sigma_{c,p}}, \mathbf{\Sigma_{c,c}})\\
\textrm{such that}\quad\qquad & ((R_p, R_c), \mathbf{\Sigma_p}, \mathbf{\Sigma_{c,p}}, \mathbf{\Sigma_{c,c}}, \mathbf{Q})\nonumber\\
& \qquad\qquad \in \mathcal{R}_{ach,rate}\nonumber
\end{align}
has the same optimum value as that of (\ref{eqn : optimization 1}). This is formally stated in the lemma below.
\begin{lem}\label{lem : M = U}
Let $M$ denote the optimal value of the optimization problem defined
in (\ref{eqn : optimization 1}), and $U$ denote the optimal value of
the optimization problem defined in (\ref{eqn : optimization 2}).
Then, $M = U$.
\end{lem}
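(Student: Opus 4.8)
The plan is to recognize that Lemma \ref{lem : M = U} is not a strong-duality statement requiring any convexity, but a direct consequence of evaluating in closed form the inner minimization that defines $g$. The crucial observation is that problem (\ref{eqn : optimization 2}) has the structure $\sup_x \inf_{\lambda \geq 0} L$, with the minimization over the multipliers taken \emph{first}; this reconstructs the constraint barrier exactly and recovers the constrained objective of (\ref{eqn : optimization 1}), so no min--max interchange is ever needed.

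First I would fix a tuple $x = \big((R_p, R_c), \mathbf{\Sigma_p}, \mathbf{\Sigma_{c,p}}, \mathbf{\Sigma_{c,c}}, \mathbf{Q}\big) \in \mathcal{R}_{ach,rate}$ and compute $g$ pointwise. Writing $a_1 = \mathrm{Tr}(\mathbf{\Sigma_p}) - P_p$ and $a_2 = \mathrm{Tr}(\mathbf{\Sigma_{c,p}} + \mathbf{\Sigma_{c,c}}) - P_c$, the Lagrangian (\ref{eqn : L}) is affine and separable in $(\lambda_1, \lambda_2)$,
\[
L = (\mu R_p + R_c) - \lambda_1 a_1 - \lambda_2 a_2,
\]
so the minimization defining $g$ in (\ref{eqn : g}) decouples into two independent scalar problems. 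Since $\min_{\lambda_i \geq 0}(-\lambda_i a_i)$ equals $0$ when $a_i \leq 0$ and $-\infty$ when $a_i > 0$, I obtain
\[
g(x) = \begin{cases} \mu R_p + R_c & \text{if } a_1 \leq 0 \text{ and } a_2 \leq 0, \\ -\infty & \text{otherwise.} \end{cases}
\]
In words, $g$ is the objective $\mu R_p + R_c$ augmented with a $\{0,-\infty\}$ barrier that enforces exactly the two trace constraints appearing in (\ref{eqn : optimization 1}).

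Next I would take the supremum of $g$ over $\mathcal{R}_{ach,rate}$. Because every tuple violating a power constraint contributes $-\infty$, and because the feasible set of (\ref{eqn : optimization 1}) is nonempty (for instance $\mathbf{\Sigma_p} = \mathbf{\Sigma_{c,p}} = \mathbf{\Sigma_{c,c}} = \mathbf{0}$ with $R_p = R_c = 0$ is feasible and yields $g = 0$), those barrier points are irrelevant to the supremum. Hence $U = \sup_{x \in \mathcal{R}_{ach,rate}} g(x) = \sup\{\mu R_p + R_c : x \in \mathcal{R}_{ach,rate},\ a_1 \leq 0,\ a_2 \leq 0\}$, which is precisely the optimal value $M$ of (\ref{eqn : optimization 1}).

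I do not anticipate a genuine obstacle: the identity holds with no convexity hypothesis exactly because the minimization over the multipliers is performed before the supremum over $x$. The only points demanding care are the clean decoupling of the joint minimization over $(\lambda_1, \lambda_2)$ into two scalar minimizations, the correct bookkeeping of the extended-real value $-\infty$, and confirming nonemptiness of the feasible region so that the common value equals $M$ and is not spuriously driven to $-\infty$.
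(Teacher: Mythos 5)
Your proof is correct and follows essentially the same route as the paper: you evaluate the inner minimization over $(\lambda_1,\lambda_2)$ explicitly, showing it acts as a $\{0,-\infty\}$ barrier that enforces the two trace constraints, so the supremum of $g$ over $\mathcal{R}_{ach,rate}$ coincides with the constrained optimum $M$. Your version is in fact slightly tighter than the paper's case analysis, since you note the separability of the Lagrangian and the nonemptiness of the feasible set explicitly, but the underlying argument is the same.
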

\begin{proof} :
We show that for any set of covariance matrices $(\mathbf{\Sigma_p}, \mathbf{\Sigma_{c,p}}, \mathbf{\Sigma_{c,c}})$ that do not satisfy the power
constraints given by (\ref{eqn : power constraint}), $g(R_p, R_c, \mathbf{\Sigma_p}, \mathbf{\Sigma_{c,p}}, \mathbf{\Sigma_{c,c}}) = -\infty$. The power constraints can be violated by three means :
\begin{itemize}
\item $\mathrm{Tr}(\mathbf{\Sigma_p}) > P_p$ and $\mathrm{Tr}(\mathbf{\Sigma_{c,p}}) + \mathrm{Tr}(\mathbf{\Sigma_{c,c}}) \leq P_c$ : In this case, $\lambda_1$ will take an arbitrarily large value and $\lambda_2 = 0$ to drive $g(R_p, R_c, \mathbf{\Sigma_p}, \mathbf{\Sigma_{c,p}}, \mathbf{\Sigma_{c,c}})$ to $-\infty$.
\item $\mathrm{Tr}(\mathbf{\Sigma_p}) \leq P_p$ and $\mathrm{Tr}(\mathbf{\Sigma_{c,p}}) + \mathrm{Tr}(\mathbf{\Sigma_{c,c}}) > P_c$ : In this case, $\lambda_1 = 0$ and $\lambda_2$ will take an arbitrarily large value to drive $g(R_p, R_c, \mathbf{\Sigma_p}, \mathbf{\Sigma_{c,p}}, \mathbf{\Sigma_{c,c}})$ to $-\infty$.
\item $\mathrm{Tr}(\mathbf{\Sigma_p}) > P_p$ and $\mathrm{Tr}(\mathbf{\Sigma_{c,p}}) + \mathrm{Tr}(\mathbf{\Sigma_{c,c}}) > P_c$ : In this case, $\lambda_1$ and $\lambda_2$ will take arbitrarily large values to drive $g(R_p, R_c, \mathbf{\Sigma_p}, \mathbf{\Sigma_{c,p}}, \mathbf{\Sigma_{c,c}})$ to $-\infty$.
\end{itemize}
When both the covariance matrices satisfy the power constraints with inequality, then $\lambda_1 = \lambda_2 = 0$. This is because, $\mathrm{Tr}(\mathbf{\Sigma_p}) - P_p$ and $\mathrm{Tr}(\mathbf{\Sigma_{c,p}} + \mathbf{\Sigma_{c,c}}) - P_c$ are both negative. Hence, for any positive value of $\lambda_1$ or $\lambda_2$, $L(R_p, R_c, \mathbf{\Sigma_p}, \mathbf{\Sigma_{c,p}}, \mathbf{\Sigma_{c,c}},  \lambda_1, \lambda_2) \geq L(R_p, R_c, \mathbf{\Sigma_p}, \mathbf{\Sigma_{c,p}}, \mathbf{\Sigma_{c,c}}, 0, 0)$.

When one of the power constraint is satisfied with equality, say $\mathrm{Tr}(\mathbf{\Sigma_p}) - P_p = 0$ and the other power constraint is satisfied with inequality $\mathrm{Tr}(\mathbf{\Sigma_{c,p}} + \mathbf{\Sigma_{c,c}}) - P_c < 0$, then, we will have $\lambda_2 = 0$ and $\lambda_1$ will be some real number. In any case, we still have $\lambda_1 (\mathrm{Tr}(\mathbf{\Sigma_p}) - P_p) = \lambda_2 (\mathrm{Tr}(\mathbf{\Sigma_{c,p}} + \mathbf{\Sigma_{c,c}}) - P_c) = 0$.

Similarly, when the first constraint is satisfied with inequality, and the second constraint satisfied with equality, we have $\lambda_1 = 0$ and $\lambda_2$ is some non negative real number. We have $\lambda_1 (\mathrm{Tr}(\mathbf{\Sigma_p}) - P_p) = \lambda_2 (\mathrm{Tr}(\mathbf{\Sigma_{c,p}} + \mathbf{\Sigma_{c,c}}) - P_c) = 0$.

Finally, if both the power constraints are satisfied with equality, $\lambda_1$ and $\lambda_2$ are some non-negative real numbers. And $\lambda_1 (\mathrm{Tr}(\mathbf{\Sigma_p}) - P_p) = \lambda_2 (\mathrm{Tr}(\mathbf{\Sigma_{c,p}} + \mathbf{\Sigma_{c,c}}) - P_c) = 0$.

Hence, in all the cases, the complementary slackness conditions are satisfied. Hence, the optimal solution of the optimization problem
(\ref{eqn : optimization 2}) satisfy the power constraints and the
objective function reduces to that of optimization problem (\ref{eqn
: optimization 1}). Hence, both the optimization problems have the
same optimal values. That is, $M = U$.
\end{proof}

Next, we find the optimum value of $\mu R_p + R_c$ over all the rate pairs that are in the region $\mathcal{R}_{part,out}^{\alpha}$ described by (\ref{eqn : partial converse region}). This is done by solving the following optimization problem:
\begin{align} \label{eqn : optimization 3}
\sup_{((R_p, R_c), \mathbf{Q_p}, \mathbf{\Sigma_{c,c}})} & \ \  \mu R_p + R_c \\
\textrm{such that}\qquad & ((R_p, R_c), \mathbf{Q_p}, \mathbf{\Sigma_{c,c}}) \in \mathcal{R}_{part, conv, rate}^{\alpha}\nonumber\\
&\mathrm{Tr}(\mathbf{\Sigma_{c,c}}) + \mathrm{Tr}(\mathbf{Q_p}) \leq \alpha P_c + P_p,\nonumber
\end{align}
where $\mathcal{R}_{part, conv, rate}^{\alpha}$ is the set of quadruples $((R_p, R_c), \mathbf{Q_p}, \mathbf{\Sigma_{c,c}})$ described by
\begin{align}\label{eqn : Rpart,conv,rate}
\left\{\begin{array}{l} ((R_p, R_c), \mathbf{Q_p}, \mathbf{\Sigma_{c,c}}) : \vspace{0.15cm}\\
R_p \geq 0, R_c \geq 0, \mathbf{Q_p} \succeq \mathbf{0}, \mathbf{\Sigma_{c,c}} \succeq \mathbf{0}\vspace{0.15cm}\\
R_p \leq \log\left| \mathbf{I} + \mathbf{G_{\alpha}} \mathbf{Q_p} \mathbf{G_{\alpha}^{\dagger}} + \frac{1}{\alpha}\mathbf{H_{c,p}} \mathbf{\Sigma_{c,c}} \mathbf{H_{c,p}^{\dagger}}\right|\vspace{0.05cm}\\
\qquad\quad - \log\left| \mathbf{I} + \frac{1}{\alpha}\mathbf{H_{c,p}\Sigma_{c,c} H_{c,p}^{\dagger}}\right|\vspace{0.15cm}\\
R_c \leq \log\left| \mathbf{I} + \frac{1}{\alpha}\mathbf{H_{c,c} \Sigma_{c,c} H_{c,c}^{\dagger}}\right| \end{array}\right\}.
\end{align}
We let the optimal solution of (\ref{eqn : optimization 3}) to be
denoted by $N(\alpha)$.
%As the optimization is over the outer bound $\mathcal{R}_{part, out}^{\alpha}$,
%the solution $N(\alpha) \geq M, \forall \alpha > 0$.
Let $N = \min_{\alpha > 0} N(\alpha)$ and
\begin{equation}\label{eqn : alpha^*}
\alpha^* = \arg\min_{\alpha > 0} N(\alpha).
\end{equation}
We show in Lemma 6.2 that $\alpha^* \in (0, \infty)$ exists. Then, $N$ is given by the optimum value of the following $\inf \sup$ optimization problem
\begin{align} \label{eqn : optimization 4}
\inf_{\alpha > 0}\ \ \ & \qquad\sup_{((R_p, R_c), \mathbf{Q_p}, \mathbf{\Sigma_{c,c}})}\ \  \mu R_p + R_c \\
\textrm{such that}& \qquad ((R_p, R_c), \mathbf{Q_p}, \mathbf{\Sigma_{c,c}}) \in \mathcal{R}_{part, conv, rate}^{\alpha}\nonumber\\
&\qquad\mathrm{Tr}(\mathbf{\Sigma_{c,c}}) + \mathrm{Tr}(\mathbf{Q_p}) \leq \alpha P_c + P_p.\nonumber
\end{align}
The infimum constraint $\alpha > 0$ is not a compact set. We modify the constraint on $\alpha$ to $\alpha \in
\mathbb{R}^+ \cup \{0, \infty\}$. This is done to compactify the set
by adding two extra symbols $0$ and $\infty$. The point zero is added to make the set closed. The process of adding the point $\infty$ is called one point compactification. Details on one point compactification can be found in \cite[Section 2.8]{DudleyRealAnalysisProbability}. The new space $\alpha \in \mathbb{R}^+ \cup \{0, \infty\}$ is compact and Hausdorff.

The optimization problem after changing the constraint set on $\alpha$ becomes
\begin{align} \label{eqn : optimization 4a}
N_1 = \inf_{\alpha \in \mathbb{R}^+ \cup \{0, \infty\}} & \quad\sup_{((R_p, R_c), \mathbf{Q_p}, \mathbf{\Sigma_{c,c}})}\quad \mu R_p + R_c \\
\textrm{such that}\quad & \quad((R_p, R_c), \mathbf{Q_p}, \mathbf{\Sigma_{c,c}}) \in \mathcal{R}_{part, conv, rate}^{\alpha}\nonumber\\
&\quad\mathrm{Tr}(\mathbf{\Sigma_{c,c}}) + \mathrm{Tr}(\mathbf{Q_p}) \leq \alpha P_c + P_p.\nonumber
\end{align}
We show that adding the two points $0$ and $\infty$ to the constraint set on $\alpha$ does not change the optimum value of the optimization problem. This result is formally stated and proved in the following lemma.

\begin{lem} \label{lem : compactification}
The optimum value of the optimization problem given by (\ref{eqn : optimization 4}), $N$ is equal to the optimum value of the optimization problem described by (\ref{eqn : optimization 4a}), $N_1$. That is, $N = N_1$.
\end{lem}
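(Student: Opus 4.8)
The plan is to treat $N_1$ as the infimum of the same inner value $N(\alpha)$ (the optimal value of (\ref{eqn : optimization 3})) over a domain that merely adjoins the two points $\alpha=0$ and $\alpha=\infty$ to $\mathbb{R}^+$, and to argue that these two points cannot lower the infimum. First I would record that $N$ is finite: for any fixed $\alpha_0>0$, $N(\alpha_0)$ is the maximum of a continuous objective over the compact set cut out by $\mathbf{Q_p},\mathbf{\Sigma_{c,c}}\succeq\mathbf{0}$ and $\mathrm{Tr}(\mathbf{Q_p})+\mathrm{Tr}(\mathbf{\Sigma_{c,c}})\le P_p+\alpha_0 P_c$, hence finite, so $0\le N\le N(\alpha_0)<\infty$. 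Since $\mathbb{R}^+\subset\mathbb{R}^+\cup\{0,\infty\}$, enlarging the feasible set of the outer infimum can only decrease it, which gives $N_1\le N$ with no work.

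The reverse inequality $N_1\ge N$ is the crux and reduces to showing that the inner value assigned at the two adjoined points is at least $N$. Interpreting $N(\cdot)$ at these points through its continuous (one-point-compactification) extension, I would compute the two boundary values as the limits $\lim_{\alpha\to\infty}N(\alpha)$ and $\lim_{\alpha\to0^+}N(\alpha)$ and show both equal $+\infty$. For $\alpha\to\infty$ the joint power budget $P_p+\alpha P_c$ grows without bound, so the feasible choice $\mathbf{\Sigma_{c,c}}=\mathbf{0}$, $\mathbf{Q_p}=\mathrm{diag}\!\big(\tfrac{P_p+\alpha P_c}{n_{p,t}}\mathbf{I_{n_{p,t}}},\,\mathbf{0}\big)$ yields $R_p=\log\big|\mathbf{I}+\tfrac{P_p+\alpha P_c}{n_{p,t}}\mathbf{H_{p,p}}\mathbf{H_{p,p}^{\dagger}}\big|\to\infty$, and since $\mu>0$ the objective diverges. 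For $\alpha\to0^+$ the effective cognitive gain $\mathbf{H_{c,c}}/\sqrt{\alpha}$ blows up, so fixing any $\mathbf{\Sigma_{c,c}}\succeq\mathbf{0}$ with $\mathrm{Tr}(\mathbf{\Sigma_{c,c}})\le P_p$ and $\mathbf{Q_p}=\mathbf{0}$ (feasible, since the budget is at least $P_p$) gives $R_c=\log\big|\mathbf{I}+\tfrac1\alpha\mathbf{H_{c,c}}\mathbf{\Sigma_{c,c}}\mathbf{H_{c,c}^{\dagger}}\big|\to\infty$. Hence the values at $\alpha=0$ and $\alpha=\infty$ are $+\infty>N$, and therefore $N_1=\min\{N,\,+\infty,\,+\infty\}=N$.

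To obtain the companion assertion that the minimizer $\alpha^*$ of (\ref{eqn : alpha^*}) lies in the open interval $(0,\infty)$, I would note that $\mathbb{R}^+\cup\{0,\infty\}$ is compact and Hausdorff and that $\alpha\mapsto N(\alpha)$ is continuous on $(0,\infty)$ and lower semicontinuous on the compactification (the inner value is a supremum of functions jointly continuous in $(\alpha,\mathbf{Q_p},\mathbf{\Sigma_{c,c}})$, and at the two boundary points the divergence just established makes the liminf condition vacuous). A standard compactness/lower-semicontinuity argument then guarantees that the infimum $N_1=N$ is attained, and the attaining point cannot be $0$ or $\infty$ because $N(\cdot)$ diverges there while $N$ is finite; thus $\alpha^*\in(0,\infty)$.

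The step I expect to be the main obstacle is the pair of boundary-limit computations, and in particular the $\alpha\to0^+$ divergence, where one must control the $1/\alpha$ singularity that appears in both determinant terms of (\ref{eqn : Rpart,conv,rate}) and verify feasibility of the chosen covariances uniformly in $\alpha$. The clean divergence arguments above use the nondegeneracy $\mathbf{H_{c,c}}\neq\mathbf{0}$ (and $\mathbf{H_{p,p}}\neq\mathbf{0}$ for $\alpha\to\infty$); should a channel matrix vanish, I would fall back on the weaker, but for $N=N_1$ still sufficient, observation that each boundary value, being a limit of quantities $N(\alpha)\ge N$, is automatically $\ge N$, so that $N_1=N$ persists regardless.
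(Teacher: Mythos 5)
Your proposal is correct and follows essentially the same route as the paper: both interpret the inner value at the adjoined points $\alpha=0$ and $\alpha=\infty$ as the limiting value and show it is $+\infty$ by exhibiting the same feasible choices (all power in $\mathbf{\Sigma_{c,c}}$ as $\alpha\to 0$, all power in the primary block as $\alpha\to\infty$), so the two added points cannot lower the infimum. Your extra touches --- the trivial direction $N_1\le N$, finiteness of $N$, the degenerate-channel fallback, and the attainment argument giving $\alpha^*\in(0,\infty)$ --- add rigor beyond the paper's write-up but do not change the approach.
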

\begin{proof} : For any $\alpha \in \mathbb{R}^+ \cup \{0, \infty\}$, we let $h(\alpha)$ to denote the value of the inner $\sup$ problem. That is,
\begin{align}\label{eqn : h(alpha)}
h(\alpha) =& \sup_{((R_p, R_c), \mathbf{Q_p}, \mathbf{\Sigma_{c,c}})} \quad\mu R_p + R_c \\
\textrm{such that }&\quad ((R_p, R_c), \mathbf{Q_p}, \mathbf{\Sigma_{c,c}}) \in \mathcal{R}_{part, conv, rate}^{\alpha}\nonumber\\
&\quad\mathrm{Tr}(\mathbf{\Sigma_{c,c}}) + \mathrm{Tr}(\mathbf{Q_p}) \leq P_p + \alpha P_c.\nonumber
\end{align}
We show that $\liminf_{\alpha \rightarrow 0} h(\alpha) = \liminf_{\alpha \rightarrow \infty} h(\alpha) = \infty$.

Letting $\alpha \rightarrow 0$, we put all the power in $\mathbf{\Sigma_{c,c}}$. That is, we choose $\mathbf{\Sigma_{p}} = \mathbf{0}$, $\mathbf{\Sigma_{c,p}} = \mathbf{0}$, $\mathbf{Q} = \mathbf{0}$ and $\mathbf{\Sigma_{c,c}} = \frac{P_p + \alpha P_c}{n_{c,t}} \mathbf{I}_{n_{c,t}}$. Also, we take
\begin{displaymath}
R_p = 0 \textrm{ and } R_c = \log\left| \mathbf{I} + \frac{1}{\alpha}\frac{P_p + \alpha P_c}{n_{c,t}} \mathbf{H_{c,c}}\mathbf{H_{c,c}^{\dagger}}\right|.
\end{displaymath}
It follows from (\ref{eqn : Rpart,conv,rate}) that $((R_p, R_c), \mathbf{Q_p}, \mathbf{\Sigma_{c,c}}) \in \mathcal{R}_{part, conv, rate}^{\alpha}$. Also, $\mathrm{Tr}(\mathbf{Q_p}) + \mathrm{Tr}(\mathbf{\Sigma_{c,c}}) = P_p + \alpha P_c$. Hence, $((R_p, R_c), \mathbf{Q_p}, \mathbf{\Sigma_{c,c}}) $ satisfy all the necessary constraints of (\ref{eqn : h(alpha)}). Substituting these particular values of $((R_p, R_c), \mathbf{Q_p}, \mathbf{\Sigma_{c,c}})$, we get a lower bound on $h(\alpha)$. That is,
\begin{align}
\liminf_{\alpha \rightarrow 0} h(\alpha)  \geq &\liminf_{\alpha \rightarrow 0} \log\left| \mathbf{I} + \frac{1}{\alpha}\frac{P_p + \alpha P_c}{n_{c,t}} \mathbf{H_{c,c}}\mathbf{H_{c,c}^{\dagger}}\right|\nonumber\\
= &\qquad\infty.
\end{align}
Next, we look at the situation when $\alpha \rightarrow \infty$. In this case, we put all the power in $\mathbf{\Sigma_p}$. That is, we choose $\mathbf{\Sigma_p} = \frac{P_p + \alpha P_c}{n_{p,t}} \mathbf{I}_{n_{p,t}}$, $\mathbf{\Sigma_{c,p}} =  \mathbf{0}$, $\mathbf{\Sigma_{c,c}} = \mathbf{0}$ and $\mathbf{Q} = \mathbf{0}$. We also choose
\begin{displaymath}
R_c = 0 \textrm{ and } R_p = \log\left| \mathbf{I} + \frac{P_p + \alpha P_c}{n_{p,t}} \mathbf{H_{p,p}}\mathbf{H_{p,p}^{\dagger}}\right|.
\end{displaymath}
These values of $((R_p, R_c), \mathbf{Q_p}, \mathbf{\Sigma_{c,c}})$ satisfy all the necessary constraints of (\ref{eqn : h(alpha)}). Hence, we have
\begin{align}
\liminf_{\alpha \rightarrow \infty} h(\alpha) \geq &\liminf_{\alpha \rightarrow \infty} \mu \log\left| \mathbf{I} + \frac{P_p + \alpha P_c}{n_{p,t}} \mathbf{H_{p,p}}\mathbf{H_{p,p}^{\dagger}}\right|\nonumber\\
=& \qquad\infty.
\end{align}
Hence, $h(\alpha) = \infty$ when $\alpha = 0$ or $\alpha = \infty$. Also, when $\alpha \in \mathbb{R}^+$, $h(\alpha) < \infty$. Hence, the optimum value of (\ref{eqn : optimization 4a}) is reached when $\alpha$ is neither $0 \textrm{ nor } \infty$. Hence, $N = N_1$.
\end{proof}
As $\mathbf{Q_p}$ is the covariance matrix of the codeword
$\mathbf{X}(i), i = 1, \ldots, n$ for the primary user, it can be
written as
\begin{equation}
\mathbf{Q_p} = \left(\begin{array}{cc} \mathbf{\Sigma_p} &
\mathbf{Q} \\ \mathbf{Q^{\dagger}} &
\mathbf{\Sigma_{c,p}}\end{array}\right).
\end{equation}
It is easy to see that the set $R_{part, conv}^{\alpha}$ described in (11) can also be written as
\vspace{0.1cm}
\begin{eqnarray}\label{eqn : describe converse region 1}
\left\{\begin{array}{l}\bigg((R_p, R_c), \mathbf{\Sigma_p}, \mathbf{\Sigma_{c,p}}
, \mathbf{Q}, \mathbf{\Sigma_{c,c}}\bigg) : \vspace{0.15cm}\\
R_p \geq 0, R_c \geq 0, \mathbf{\Sigma_p}\succeq \mathbf{0}, \mathbf{\Sigma_{c,p}} \succeq \mathbf{0}, \mathbf{\Sigma_{c,c}} \succeq \mathbf{0}\vspace{0.15cm}\\
R_p \leq \log\left| \mathbf{I} + \mathbf{G} \mathbf{Q_p} \mathbf{G^{\dagger}} + \mathbf{H_{c,p}} \mathbf{\Sigma_{c,c}}\mathbf{H_{c,p}^{\dagger}}\right|\vspace{0.05cm}\\
\qquad\quad - \log\left| \mathbf{I} + \mathbf{H_{c,p}\Sigma_{c,c} H_{c,p}^{\dagger}}\right|\vspace{0.15cm}\\
R_c \leq \log\left| \mathbf{I} + \mathbf{H_{c,c} \Sigma_{c,c} H_{c,c}^{\dagger}}\right| \vspace{0.15cm}\\
\mathrm{Tr}(\mathbf{\Sigma_p}) + \alpha \mathrm{Tr}(\mathbf{\Sigma_{c,p}}) + \alpha \mathrm{Tr}(\mathbf{\Sigma_{c,c}}) \leq P_p + \alpha
P_c\end{array}\right\}.
\end{eqnarray}
where $\mathbf{G} = \left[\mathbf{H_{p,p}}\ \ \mathbf{H_{c,p}}\right]$. This is done by transforming $\mathbf{Q}, \mathbf{\Sigma_{c,p}}, \mathbf{\Sigma_{c,c}}$ into $\sqrt{\alpha} \mathbf{Q}, \alpha \mathbf{\Sigma_{c,p}}, \alpha \mathbf{\Sigma_{c,c}}$ respectively. We define $\mathcal{R}_{part, conv, rate}$ as the set described by
\vspace{0.1cm}
\begin{align}
\left\{\begin{array}{l} ((R_p, R_c), \mathbf{\Sigma_p}, \mathbf{\Sigma_{c,p}}, \mathbf{Q}, \mathbf{\Sigma_{c,c}}) : \vspace{0.15cm}\\
R_p \geq 0, R_c \geq 0, \mathbf{\Sigma_p} \succeq \mathbf{0}, \mathbf{\Sigma_{c,p}} \succeq \mathbf{0}, \mathbf{\Sigma_{c,c}} \succeq \mathbf{0}\vspace{0.15cm}\\
R_p \leq \log\left| \mathbf{I} + \mathbf{G} \mathbf{Q_p} \mathbf{G^{\dagger}} + \mathbf{H_{c,p}} \mathbf{\Sigma_{c,c}} \mathbf{H_{c,p}^{\dagger}}\right| \vspace{0.05cm}\\
\qquad\quad - \log\left| \mathbf{I} + \mathbf{H_{c,p}\Sigma_{c,c} H_{c,p}^{\dagger}}\right|\vspace{0.15cm}\\
R_c \leq \log\left| \mathbf{I} + \mathbf{H_{c,c} \Sigma_{c,c} H_{c,c}^{\dagger}}\right|, \vspace{0.15cm}\\
\mathbf{Q_p} = \left(\begin{array}{cc} \mathbf{\Sigma_p} &
\mathbf{Q} \\ \mathbf{Q^{\dagger}} &
\mathbf{\Sigma_{c,p}}\end{array}\right)\end{array}\right\}.
\end{align}
Hence, the optimization problem (\ref{eqn : optimization 4a}) can be written as
\begin{align} \label{eqn : optimization 5}
N = &\inf_{\alpha \in \mathbb{R}^+ \cup \{0, \infty\}}\ \sup_{((R_p, R_c), \mathbf{\Sigma_p}, \mathbf{\Sigma_{c,p}}, \mathbf{Q}, \mathbf{\Sigma_{c,c}})}\ \mu R_p + R_c\\
\textrm{such that}&\quad ((R_p, R_c), \mathbf{\Sigma_p}, \mathbf{\Sigma_{c,p}}, \mathbf{Q}, \mathbf{\Sigma_{c,c}}) \in \mathcal{R}_{part,conv,rate}\nonumber\\
&\quad\mathrm{Tr}(\mathbf{\Sigma_p}) + \alpha \mathrm{Tr}(\mathbf{\Sigma_{c,p}}) + \alpha \mathrm{Tr}(\mathbf{\Sigma_{c,c}}) \leq P_p + \alpha P_c.\nonumber
\end{align}
We state the following lemma for switching $\min$ and $\max$ in minimax problems. The lemma is described and proved in Theorem 2 in \cite{Ky-Fan}.
\begin{lem} (Ky-Fan's minimax switching theorem \cite[Thm. 2]{Ky-Fan})
Let $X$ be a compact Hausdorff space and $Y$ an arbitrary set (not topologized). Let $f$ be a real-valued function on $X \times Y$ such that, for every $y \in Y$, $f(x, y)$ is lower semi continuous on $X$. If $f$ is convex on $X$ and concave on $Y$, then
\begin{equation}
\inf_{x \in X}\  \sup_{y \in Y} f(x, y) = \sup_{y \in Y}\  \inf_{x \in X} f(x, y).\footnote{In (49), the $\inf$ can be replaced with $\min$, but we use $\inf$ throughout to maintain continuity and to avoid confusion.}
\end{equation}
\end{lem}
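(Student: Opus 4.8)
The plan is to establish the two inequalities separately. The inequality $\sup_{y}\inf_{x} f(x,y) \leq \inf_{x}\sup_{y} f(x,y)$ (weak duality) is immediate and requires no hypotheses: for any fixed $y$ and any $x$ we have $\inf_{x'} f(x',y) \leq f(x,y) \leq \sup_{y'} f(x,y')$, so $\inf_{x'} f(x',y) \leq \inf_{x}\sup_{y'} f(x,y')$, and taking the supremum over $y$ gives the claim. All of the content lies in the reverse inequality, which I would prove by contradiction.

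Write $a = \sup_{y}\inf_{x} f(x,y)$ and $b = \inf_{x}\sup_{y} f(x,y)$, and suppose $a < b$; fix $c$ with $a < c < b$. Since $b = \inf_{x}\sup_{y} f(x,y)$, every $x$ satisfies $\sup_{y} f(x,y) > c$, so there is a $y$ with $f(x,y) > c$, and hence the sets $V_y = \{x \in X : f(x,y) > c\}$ cover $X$. Lower semicontinuity of $f(\cdot,y)$ makes each $V_y$ open, and compactness of $X$ then yields a finite subfamily $y_1,\dots,y_n$ with $X = V_{y_1}\cup\cdots\cup V_{y_n}$; equivalently $\max_{1\leq i\leq n} f(x,y_i) \geq c$ for every $x \in X$. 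This single step consumes both topological hypotheses (lower semicontinuity makes the superlevel sets open, compactness makes the cover finite) and reduces the problem to finitely many ``columns.''

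Next I would convert this finite covering statement into one convex combination. Consider $\phi : X \to \mathbb{R}^n$, $\phi(x) = (f(x,y_1),\dots,f(x,y_n))$, and the set $A = \phi(X) + \mathbb{R}^n_{\geq 0}$. The convex-like property of $f$ in its first argument shows $A$ is convex, while the covering statement shows $A$ is disjoint from the open convex set $B = \{v \in \mathbb{R}^n : v_i < c\ \forall i\}$ (a point of $A \cap B$ would give an $x$ with $\max_i f(x,y_i) < c$). A separating-hyperplane argument then produces a nonzero $\lambda \in \mathbb{R}^n$; because $A$ is upward closed one gets $\lambda \geq \mathbf{0}$, and after normalizing $\sum_i \lambda_i = 1$ the separation forces $\sum_i \lambda_i f(x,y_i) \geq c$ for every $x \in X$. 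Finally, the concave-like property of $f$ in $y$ supplies a single $y_0 \in Y$ with $f(x,y_0) \geq \sum_i \lambda_i f(x,y_i) \geq c$ for all $x$, whence $\inf_x f(x,y_0) \geq c$ and therefore $a \geq c$, contradicting $c > a$.

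The main obstacle is the finite minimax/separation step, since that is where the linear structure does its work and it must be carried out purely through the combinatorial convex-like and concave-like conditions: $X$ is only assumed compact Hausdorff and carries no a priori vector-space structure, so ``convex on $X$'' and ``concave on $Y$'' must be read as the requirement that suitable pointwise-dominating points exist in $X$ and $Y$. Secondary care is needed in the direction of the semicontinuity hypothesis (lower semicontinuity is exactly what opens the sets $V_y$) and in the strict-versus-nonstrict inequalities when separating $A$ from $B$. I would close by remarking that this statement is precisely Theorem~2 of \cite{Ky-Fan}, so for the purposes of this paper it suffices to invoke that reference.
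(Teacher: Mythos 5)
Your proposal is correct, but note that the paper itself offers no proof of this lemma at all: it simply cites Theorem 2 of \cite{Ky-Fan} (``the lemma is described and proved in Theorem 2 in \cite{Ky-Fan}''), and your own closing remark concedes that this citation suffices for the paper's purposes. What you have written out is the standard self-contained proof of Fan's minimax theorem, and it goes through: weak duality is trivial; for the reverse inequality, lower semicontinuity makes the superlevel sets $V_y=\{x\in X : f(x,y)>c\}$ an open cover of $X$, compactness extracts $y_1,\dots,y_n$, and the resulting finite problem is settled by separating the upward-closed convex set $A=\phi(X)+\mathbb{R}^n_{\geq 0}$ from the open convex corner $B=\{v\in\mathbb{R}^n : v_i<c\ \forall i\}$, after which concave-likeness in $y$ recombines the separating weights $\lambda_i$ into a single $y_0$ with $\inf_x f(x,y_0)\geq c$, contradicting $c>a$. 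This is also a genuinely different route from Fan's original argument, which deduces his Theorem 2 from his Theorem 1 on systems of convex-like inequalities rather than from a hyperplane separation in $\mathbb{R}^n$. Your approach buys self-containedness and makes visible exactly where each hypothesis is consumed (compactness and semicontinuity only in the finite-cover reduction, the convexity-type hypotheses only in the finite step); the paper's citation buys brevity, which is reasonable for a literature result used as a tool. Two details to pin down if you write this in full: extending the two-point concave-like property to the arbitrary finite convex combination $\sum_i \lambda_i f(\cdot,y_i)$ needs a short induction, since the separation weights $\lambda_i$ are arbitrary nonnegative reals; and, as you correctly flag, because $X$ is a bare compact Hausdorff space, ``convex on $X$'' and ``concave on $Y$'' must be interpreted in Fan's convex-like/concave-like sense, which is precisely the form your construction of $A$ and your final recombination step use.
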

We see that the objective function $\mu R_p + R_c$ is concave with respect to the maximizing
variables $((R_p, R_c, \mathbf{Q_p}, \mathbf{\Sigma_{c,c}})$ and convex with respect to the minimizing variable $\alpha$. The constraint space $\alpha \in \mathbb{R}^+ \cup \{0, \infty\}$ is compact and Hausdorff \cite[Section 2.8]{DudleyRealAnalysisProbability}.

Hence, all the conditions of the lemma are satisfied. Hence, by Ky-Fan's mini-max switching theorem \cite{Ky-Fan}, we can interchange the $\sup$ and $\inf$ without affecting the optimum value. Hence,
\begin{align} \label{eqn : optimization 6}
N = &\sup_{((R_p, R_c), \mathbf{\Sigma_p}, \mathbf{\Sigma_{c,p}},
\mathbf{Q}, \mathbf{\Sigma_{c,c}})}
\ \inf_{\alpha \in \mathbb{R}^+ \cup \{0, \infty\}}\ \mu R_p + R_c \\
\textrm{such that} &\  ((R_p, R_c), \mathbf{\Sigma_p}, \mathbf{\Sigma_{c,p}},
\mathbf{Q}, \mathbf{\Sigma_{c,c}})\in \mathcal{R}_{part, conv,rate}\nonumber\\
& \mathrm{Tr}(\mathbf{\Sigma_p}) + \alpha \mathrm{Tr}(\mathbf{\Sigma_{c,p}})
+ \alpha \mathrm{Tr}(\mathbf{\Sigma_{c,c}}) \leq P_p + \alpha P_c.\nonumber
\end{align}
Similar to the functions $L$ and $g$ defined in (\ref{eqn : L}) and
(\ref{eqn : g}), we define the functions $L_1(R_p, R_c,
\mathbf{\Sigma_p}, \mathbf{\Sigma_{c,p}}, \mathbf{\Sigma_{c,c}},
\lambda,\alpha)$ and $g_1(R_p, R_c, \mathbf{\Sigma_p},
\mathbf{\Sigma_{c,p}}, \mathbf{\Sigma_{c,c}}, \alpha)$ as follows
\begin{align}\label{eqn : L1}
L_1(R_p, R_c, \mathbf{\Sigma_p}, \mathbf{\Sigma_{c,p}}, \mathbf{\Sigma_{c,c}}, \lambda,\alpha) = \mu R_p + R_c - \qquad\ \nonumber\\
\lambda\bigg(\mathrm{Tr} (\mathbf{\Sigma_p}) + \alpha \mathrm{Tr}(\mathbf{\Sigma_{c,p}}) + \alpha \mathrm{Tr}(\mathbf{\Sigma_{c,c}}) - P_p - \alpha P_c\bigg),
\end{align}
\begin{align}\label{eqn : g1}
g_1(R_p, R_c, \mathbf{\Sigma_p}, \mathbf{\Sigma_{c,p}}, \mathbf{\Sigma_{c,c}},\alpha) = \qquad\qquad\qquad\qquad\nonumber\\
 \qquad\qquad\inf _{\lambda \geq 0}  L_1(R_p, R_c, \mathbf{\Sigma_p}, \mathbf{\Sigma_{c,p}}, \mathbf{\Sigma_{c,c}},  \lambda, \alpha).
\end{align}
We define the following optimization problem
\begin{align}\label{eqn : optimization 7}
V = \sup_{(R_p, R_c, \mathbf{\Sigma_p}, \mathbf{\Sigma_{c,p}},
\mathbf{Q}, \mathbf{\Sigma_{c,c}})}
\inf_{\alpha}  g_1(R_p, R_c, \mathbf{\Sigma_p}, \mathbf{\Sigma_{c,p}},
\mathbf{\Sigma_{c,c}},\alpha)\\
\textrm{such that}\quad  ((R_p, R_c), \mathbf{\Sigma_p}, \mathbf{\Sigma_{c,p}},
\mathbf{Q}, \mathbf{\Sigma_{c,c}}) \in \mathcal{R}_{part,conv,rate}\qquad\nonumber\\
\quad \alpha \in \mathbb{R}^+ \cup \{0, \infty\}.\qquad\qquad\qquad\qquad\qquad\qquad\nonumber
\end{align}
\begin{lem}\label{lem : N = V}
The optimum value of optimization problem (\ref{eqn : optimization
6}), N is equal to the optimum value of the optimization problem
(\ref{eqn : optimization 7}), V.
\end{lem}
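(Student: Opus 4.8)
The plan is to prove $N = V$ by showing that, for every fixed choice of the maximizing variables $((R_p, R_c), \mathbf{\Sigma_p}, \mathbf{\Sigma_{c,p}}, \mathbf{Q}, \mathbf{\Sigma_{c,c}}) \in \mathcal{R}_{part, conv, rate}$, the inner infimum over $\alpha$ takes exactly the same value in the constrained problem (\ref{eqn : optimization 6}) and in the Lagrangian-relaxed problem (\ref{eqn : optimization 7}). Once the two integrands agree pointwise in the maximizing variables, the outer suprema coincide and $N = V$ follows. Throughout I abbreviate the constraint slack by $\Phi(\alpha) = \mathrm{Tr}(\mathbf{\Sigma_p}) + \alpha\,\mathrm{Tr}(\mathbf{\Sigma_{c,p}}) + \alpha\,\mathrm{Tr}(\mathbf{\Sigma_{c,c}}) - P_p - \alpha P_c$, which is affine in $\alpha$, and observe that $L_1$ in (\ref{eqn : L1}) is simply $\mu R_p + R_c - \lambda\,\Phi(\alpha)$.

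First I would evaluate $g_1$ defined in (\ref{eqn : g1}). Since $L_1$ is affine in $\lambda$ with slope $-\Phi(\alpha)$, the single-constraint Lagrangian minimization over $\lambda \geq 0$ gives $g_1 = \mu R_p + R_c$ when $\Phi(\alpha) \leq 0$ (the minimizer is $\lambda = 0$, and the complementary slackness identity $\lambda\,\Phi(\alpha) = 0$ holds exactly as in the proof of Lemma \ref{lem : M = U}) and $g_1 = -\infty$ when $\Phi(\alpha) > 0$ (taking $\lambda \to \infty$). I would then take the infimum over $\alpha \in \mathbb{R}^+ \cup \{0, \infty\}$. Because $\Phi$ is affine in $\alpha$, the requirement $\Phi(\alpha) \leq 0$ for all $\alpha$ in the compactified domain is governed entirely by the two endpoints: evaluating at $\alpha = 0$ yields $\mathrm{Tr}(\mathbf{\Sigma_p}) \leq P_p$, and letting $\alpha \to \infty$ yields $\mathrm{Tr}(\mathbf{\Sigma_{c,p}}) + \mathrm{Tr}(\mathbf{\Sigma_{c,c}}) \leq P_c$. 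Hence $\inf_{\alpha} g_1 = \mu R_p + R_c$ precisely when both original power constraints hold, and $\inf_{\alpha} g_1 = -\infty$ otherwise. This also makes transparent the role of the one-point compactification introduced in Lemma \ref{lem : compactification}: the two added points $0$ and $\infty$ are exactly what force both power constraints simultaneously.

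Next I would verify that the inner infimum in (\ref{eqn : optimization 6}) produces the same integrand. In the sup-inf form obtained after the Ky-Fan interchange, a maximizing choice that violates $\Phi(\alpha) \leq 0$ at some $\alpha$ was already excluded by the supremum in the pre-interchange problem, so after the swap the corresponding value is $-\infty$ rather than the naive $+\infty$ of an infimum over an empty set; for $\alpha$ with $\Phi(\alpha) \leq 0$ the objective $\mu R_p + R_c$ is independent of $\alpha$ and contributes its own value. Thus the inner integrand of (\ref{eqn : optimization 6}) is $\mu R_p + R_c$ if $\Phi(\alpha) \leq 0$ and $-\infty$ if $\Phi(\alpha) > 0$, which is identical to $g_1$. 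Taking $\inf_\alpha$ and then $\sup$ over $\mathcal{R}_{part, conv, rate}$ of identical integrands yields $N = V$; equivalently, both reduce to $\sup \{\mu R_p + R_c : (\ldots) \in \mathcal{R}_{part, conv, rate},\ \mathrm{Tr}(\mathbf{\Sigma_p}) \leq P_p,\ \mathrm{Tr}(\mathbf{\Sigma_{c,p}}) + \mathrm{Tr}(\mathbf{\Sigma_{c,c}}) \leq P_c\}$.

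The main obstacle I anticipate is the careful bookkeeping of the infeasibility convention: the hard-constrained inner infimum in (\ref{eqn : optimization 6}) must be shown to assign $-\infty$ to any $\alpha$ violating the power constraint, which is precisely the ambiguity that the Lagrangian $g_1$ is designed to resolve. Establishing that $g_1$ reproduces this modified objective at every $\alpha$, including the two compactified endpoints, and that complementary slackness $\lambda\,\Phi(\alpha) = 0$ holds in the feasible case so that the relaxed value equals the true objective, is the crux; the remaining manipulations are routine and parallel the proof of Lemma \ref{lem : M = U}.
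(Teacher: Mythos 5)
Your proof is correct and follows essentially the same route as the paper's: both arguments establish that the single-multiplier Lagrangian $g_1$ is an exact penalty --- equal to $\mu R_p + R_c$ whenever the $\alpha$-dependent power constraint is satisfied (with complementary slackness $\lambda\,\Phi(\alpha)=0$) and equal to $-\infty$ whenever it is violated --- so the inner values of (\ref{eqn : optimization 6}) and (\ref{eqn : optimization 7}) agree pointwise and the outer suprema coincide. Your added details (the reduction of $\inf_\alpha g_1$ to the two original power constraints via the endpoints $\alpha = 0$ and $\alpha = \infty$, and the explicit $-\infty$ convention for infeasible pairs after the Ky-Fan interchange) merely spell out what the paper leaves implicit.
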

\begin{proof} :
The proof of the lemma is along the same lines as the proof of Lemma \ref{lem : M = U}. We show that for any set of covariance matrices $\mathbf{\Sigma_p}$, $\mathbf{\Sigma_{c,p}}$ and $\mathbf{\Sigma_{c,c}}$ that do not satisfy the power constraint $\mathrm{Tr}(\mathbf{\Sigma_p}) + \alpha \mathrm{Tr}(\mathbf{\Sigma_{c,p}}) + \alpha \mathrm{Tr}(\mathbf{\Sigma_{c,c}}) \leq P_p + \alpha P_c$, $g_1(R_p, R_c, \mathbf{\Sigma_p}, \mathbf{\Sigma_{c,p}}, \mathbf{\Sigma_{c,c}}, \alpha) = -\infty$. This is because, $\mathrm{Tr}(\mathbf{\Sigma_p}) + \alpha \mathrm{Tr}(\mathbf{\Sigma_{c,p}}) + \alpha \mathrm{Tr}(\mathbf{\Sigma_{c,c}}) - P_p -\alpha P_c$ is positive, and hence, $\lambda$ will take an arbitrarily high value to drive $g_1(R_p, R_c, \mathbf{\Sigma_p}, \mathbf{\Sigma_{c,p}}, \mathbf{\Sigma_{c,c}}, \alpha)$ to $-\infty$. Hence, the outer supremization problem will ensure that the power constraint is satisfied.

Moreover, when the power constraints are satisfied with inequality, then $\mathrm{Tr}(\mathbf{\Sigma_p}) + \alpha \mathrm{Tr}(\mathbf{\Sigma_{c,p}}) + \alpha \mathrm{Tr}(\mathbf{\Sigma_{c,c}}) - P_p -\alpha P_c$ is negative. Therefore, for any $\lambda > 0$,
we have $L_1(R_p, R_c, \mathbf{\Sigma_p}, \mathbf{\Sigma_{c,p}}, \mathbf{\Sigma_{c,c}}, \lambda,\alpha)  > L_1(R_p, R_c, \mathbf{\Sigma_p}, \mathbf{\Sigma_{c,p}}, \mathbf{\Sigma_{c,c}}, 0, \alpha) $. Hence, $\lambda$ will take the value zero.
When the power constraint is satisfied with equality, then $\mathrm{Tr}(\mathbf{\Sigma_p}) + \alpha \mathrm{Tr}(\mathbf{\Sigma_{c,p}}) + \alpha \mathrm{Tr}(\mathbf{\Sigma_{c,c}}) - P_p -\alpha P_c = 0$. Then, $\lambda$ will take some non negative real number. Hence, the complementary slackness condition is satisfied.
Hence, the optimal solution of the optimization problem satisfy the power constraint and the objective function reduces to that of (\ref{eqn : optimization 6}). It follows that, the optimum value of the optimization problem (\ref{eqn : optimization 6}), $N$ is the same as the optimum value of the optimization problem (\ref{eqn : optimization 7}), $V$.
\end{proof}
Next, we show that the optimum value of the optimization problem
(\ref{eqn : optimization 2}), $U$ is an upper bound on the optimal
value of the optimization problem (\ref{eqn : optimization 7}), $V$.
\begin{lem}\label{lem : 6.5} The optimal value of (\ref{eqn : optimization 2}), $U$ is an upper bound on the optimal value of (\ref{eqn : optimization 6}), $V$. \end{lem}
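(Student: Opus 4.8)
The plan is to reduce the claim to a \emph{pointwise} comparison of the two inner dual functions, after observing that the outer suprema in (\ref{eqn : optimization 2}) and (\ref{eqn : optimization 7}) range over the \emph{same} set of tuples. Indeed, after the scaling substitution that produced the region in (\ref{eqn : describe converse region 1}), the rate constraints defining $\mathcal{R}_{part,conv,rate}$ coincide verbatim with those defining $\mathcal{R}_{ach,rate}$ in (\ref{eqn : describe achievable region rate}): the block covariance $\mathbf{Q_p}$, with diagonal blocks $\mathbf{\Sigma_p}, \mathbf{\Sigma_{c,p}}$ and off-diagonal block $\mathbf{Q}$, plays exactly the role of $\mathbf{\Sigma_{p,net}}$, and the $R_p, R_c$ bounds are identical. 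Hence it suffices to prove, for every fixed feasible tuple $(R_p, R_c, \mathbf{\Sigma_p}, \mathbf{\Sigma_{c,p}}, \mathbf{\Sigma_{c,c}}, \mathbf{Q})$, the inequality $\inf_\alpha g_1(R_p, R_c, \mathbf{\Sigma_p}, \mathbf{\Sigma_{c,p}}, \mathbf{\Sigma_{c,c}}, \alpha) \le g(R_p, R_c, \mathbf{\Sigma_p}, \mathbf{\Sigma_{c,p}}, \mathbf{\Sigma_{c,c}})$; taking the supremum over the common feasible set then yields $V \le U$.

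To establish the pointwise bound I would first record the identity $L_1(\cdot, \lambda, \alpha) = L(\cdot, \lambda, \lambda\alpha)$, obtained by expanding (\ref{eqn : L1}) and regrouping the trace terms as $\lambda(\mathrm{Tr}(\mathbf{\Sigma_p}) - P_p) + \lambda\alpha(\mathrm{Tr}(\mathbf{\Sigma_{c,p}} + \mathbf{\Sigma_{c,c}}) - P_c)$; this exhibits $(\lambda, \alpha)$ as the restricted family of multipliers $(\lambda_1, \lambda_2) = (\lambda, \lambda\alpha)$ for $L$. I would then argue by cases on the two original power constraints, mirroring the complementary-slackness reasoning in the proof of Lemma \ref{lem : M = U}. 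If both $\mathrm{Tr}(\mathbf{\Sigma_p}) \le P_p$ and $\mathrm{Tr}(\mathbf{\Sigma_{c,p}} + \mathbf{\Sigma_{c,c}}) \le P_c$ hold, then both bracketed terms in (\ref{eqn : L}) are nonpositive, so the minimization in (\ref{eqn : g}) is attained at $\lambda_1 = \lambda_2 = 0$ and $g = \mu R_p + R_c$; on the other side $g_1(\cdot, \alpha) \le L_1(\cdot, 0, \alpha) = \mu R_p + R_c$ for every $\alpha$, whence $\inf_\alpha g_1 \le \mu R_p + R_c = g$. If instead $\mathrm{Tr}(\mathbf{\Sigma_p}) > P_p$, then $g = -\infty$, and choosing $\alpha = 0$ collapses the combined constraint to $\mathrm{Tr}(\mathbf{\Sigma_p}) \le P_p$, which is violated, so letting $\lambda \to \infty$ in (\ref{eqn : g1}) gives $g_1(\cdot, 0) = -\infty$ and hence $\inf_\alpha g_1 = -\infty \le g$. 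Finally, if $\mathrm{Tr}(\mathbf{\Sigma_{c,p}} + \mathbf{\Sigma_{c,c}}) > P_c$, the coefficient of $\alpha$ in the combined constraint is strictly positive, so for all sufficiently large $\alpha$ (or at the compactified point $\alpha = \infty$) the constraint fails and $g_1(\cdot, \alpha) = -\infty$, again forcing $\inf_\alpha g_1 = -\infty$.

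Taking the supremum of this pointwise inequality over the common feasible region gives $V \le U$, which is the assertion of the lemma. I expect the last two cases to be the only real subtlety: a single combined power constraint, weighted by $\lambda$ and the scalar $\alpha$, must reproduce the effect of the \emph{two} independent constraints that $\lambda_1$ and $\lambda_2$ enforce in the definition of $U$. This works precisely because $\alpha$ is allowed to range over the whole compactified half-line, so that the extremes $\alpha = 0$ and $\alpha \to \infty$ isolate the primary and the cognitive power constraints respectively; the one-point compactification introduced just before (\ref{eqn : optimization 4a}) is exactly what lets the $\alpha \to \infty$ case be realized as an infimum rather than merely a limit. I would also note that the same case analysis in fact yields $\inf_\alpha g_1 = g$ pointwise, so $U = V$; only the direction $U \ge V$ is needed here.
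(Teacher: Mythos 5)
Your proof is correct, and it shares the paper's skeleton: both arguments observe that $\mathcal{R}_{part,conv,rate} = \mathcal{R}_{ach,rate}$, reduce the lemma to a pointwise comparison of the inner dual values over this common feasible set, and exploit the identity $L_1(\cdot,\lambda,\alpha) = L(\cdot,\lambda,\lambda\alpha)$. Where you differ is in how the pointwise inequality is established. The paper's proof is a one-line multiplier substitution: for any $(\lambda_1,\lambda_2)$ it sets $\lambda=\lambda_1$ and $\alpha=\lambda_2/\lambda_1$, so that every value $L(\cdot,\lambda_1,\lambda_2)$ appearing in the inner minimization of (\ref{eqn : optimization 2}) is also attained by the family $L_1$, whence $\inf_{\lambda,\alpha}L_1 \le \inf_{\lambda_1,\lambda_2} L$ pointwise and $V \le U$ follows. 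You instead evaluate both dual functions explicitly through a complementary-slackness case analysis. Your route is longer but buys two things: (i) it is rigorous on the ray $\lambda_1=0,\ \lambda_2>0$, where the paper's substitution $\alpha=\lambda_2/\lambda_1$ is undefined --- this ray matters precisely in your third case, where only the cognitive power constraint is violated, and your large-but-finite-$\alpha$ argument covers it without any $0\cdot\infty$ convention or density/continuity appeal; (ii) it yields the pointwise equality $\inf_\alpha g_1 = g$, hence $U = V$, which is strictly stronger than the stated $V \le U$ and is in fact what the unconditional first equality claimed in Theorem \ref{thm : optimality} requires (the paper's own proof of that theorem gets the reverse inequality $M \le N$ only via condition (\ref{eqn : condition})). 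One caution: your aside describing $(\lambda,\lambda\alpha)$ as a ``restricted family'' of multipliers for $L$ should not be leaned on as a containment argument --- containment of that family in the nonnegative quadrant by itself gives an inequality in the \emph{wrong} direction; it is your case analysis (equivalently, the density of the family $\{(\lambda,\lambda\alpha):\lambda>0,\ \alpha\ge 0\}$ in the quadrant together with continuity of $L$) that actually carries the proof, and you correctly relied on that rather than on containment.
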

\begin{proof} :
Both the optimization problems are $\sup \min$ problems. For any
$\lambda_1 \geq 0$ and $\lambda_2 \geq 0$, we can choose $\lambda =
\lambda_1$ and $\alpha = \lambda_2/ \lambda_1$ so that $L_1(R_p,
R_c, \mathbf{\Sigma_p}, \mathbf{\Sigma_{c,p}},
\mathbf{\Sigma_{c,c}}, \lambda, \alpha) = L(R_p, R_c,
\mathbf{\Sigma_p}, \mathbf{\Sigma_{c,p}}, \mathbf{\Sigma_{c,c}},
\lambda_1, \lambda_2)$. Hence, for any $((R_p, R_c),
\mathbf{\Sigma_p}, \mathbf{\Sigma_{c,p}}, \mathbf{\Sigma_{c,c}})$,
\begin{align}
\inf_{\lambda \geq 0, \alpha \in \mathbb{R}^+ \cup \{0, \infty\}}
L_1(R_p, R_c, \mathbf{\Sigma_p}, \mathbf{\Sigma_{c,p}},
\mathbf{\Sigma_{c,c}}, \lambda, \alpha) \leq\nonumber\\
\qquad \inf_{\lambda_1 \geq 0,
\lambda_2 \geq 0} L(R_p, R_c, \mathbf{\Sigma_p},
\mathbf{\Sigma_{c,p}}, \mathbf{\Sigma_{c,c}}, \lambda_1, \lambda_2).
\end{align}
Also, $\mathcal{R}_{part, conv, rate} = \mathcal{R}_{ach, rate}$. Hence,
it follows that $V \leq U$.
\end{proof}

We can now prove Theorem \ref{thm : optimality}.

\begin{proof} \textit{of Theorem \ref{thm : optimality} : } Let $\mu \geq 1$. The proof of the theorem
follows directly from Lemmas \ref{lem : M = U}, \ref{lem : N = V} and \ref{lem : 6.5}. From Lemma \ref{lem : M = U}, we have that the optimum value of the optimization problem (\ref{eqn : optimization 1}), $M$ equals the optimum value of optimization problem (\ref{eqn : optimization 2}), $U$. From Lemma \ref{lem : N = V}, we have that the optimum value of optimization problem (\ref{eqn : optimization 6}), $N$ equals the optimum value of the optimization problem (\ref{eqn : optimization 7}), $V$. $M$ is the solution of the optimum $\mu R_p + R_c$ over the achievable region and $N$ is the solution of the optimum $\mu R_p + R_c$ over $\mathcal{R}_{part,out}^{\alpha}$ described in (\ref{eqn : partial converse region}). Hence if the condition given by (\ref{eqn : condition}) is satisfied for $\alpha^*$ given by (\ref{eqn : alpha^*}), $M \leq N$. From Lemma \ref{lem : 6.5}, we also have $V \leq U$. Hence, we have that the optimal value of the original optimization problem
(\ref{eqn : optimization 1}), $M$ is equal to the optimal value of
the optimization problem described by (\ref{eqn : optimization 6}),
$N$. Hence, the achievable region
$\mathcal{R}_{in}$ is $\mu$-sum optimal.
\end{proof}
\section{Numerical Results}\label{sec : numerical results}
In this section, we provide some numerical results on the capacity region of the MIMO cognitive channel. We consider a MIMO cognitive system where the licensed and cognitive transmitters have one antenna each, and the licensed and cognitive receivers have one and two antennas respectively. We assume that the channel coefficients are real and also restrict ourself to real inputs and outputs. We generate the channel values randomly
\begin{align}
\mathbf{H_{p,p}} = 1.4435, &\quad \mathbf{H_{p,c}} = \left[\begin{array}{c} -0.3510 \\ 0.6232 \end{array}\right],\nonumber\\
\mathbf{H_{c,p}} = 0.799, &\quad \mathbf{H_{c,c}} = \left[\begin{array}{c} 0.9409 \\ -0.9921 \end{array}\right].\nonumber
\end{align}
We assume a power constraint of $5$ at the licensed and cognitive transmitters. In Figure $8$, we plot the achievable region, $\mathcal{R}_{in}$ and the region $\mathcal{R}_{part, out}^{\alpha}$ for different values of $\alpha$.
\begin{figure}[hbtp]
\label{pl1}
\centering
\includegraphics[width=3 in]{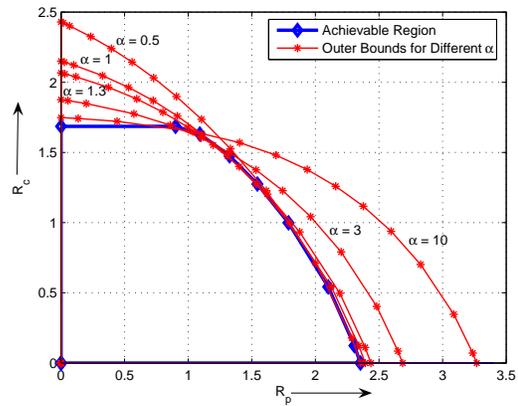}
\caption{Plot of Achievable Region $\mathcal{R}_{in}$ and partial outer bounds $\mathcal{R}_{part, out}^{\alpha}$ for different values of $\alpha$.}
\end{figure}
Figure $8$ shows how $\mathcal{R}_{part, out}^{\alpha}$ intersects with $\mathcal{R}_{in}$ at different points for different values of $\alpha$.

Next, we find the maximum value of rate than can be supported by the licensed user in the example we considered. In both the achievable region and the outer bound, this corresponds to maximizing the $\mu$-sum $\mu R_p + R_c$ when $\mu \rightarrow \infty$. This would correspond to using all the power to support the licensed user. Note that the maximum value of $R_p$ in the set described by $\mathcal{R}_{part,out}^{\alpha}$ is an upper bound on the maximum value of $R_p$ in the set $\mathcal{R}_{in}$ for all values of $\alpha > 0$, irrespective of the channel parameters.

Maximizing $R_p$ over $\mathcal{R}_{in}$ : The cognitive transmitter uses all its power for helping the licensed user. That is $\mathrm{Tr}(\mathbf{\Sigma_{c,p}}) = P_c$. This then reduces to a MIMO channel with channel matrix given by $\mathbf{G} = \left[\begin{array}{cc} \mathbf{H_{p,p}} & \mathbf{H_{c,p}}\end{array}\right]$. The licensed transmitter has a power constraint of $P_p$ and the cognitive transmitter has a power constraint of $P_c$. Applying this to our example channel, we have $\mathbf{G} = \left[\begin{array}{cc}1.4435 & 0.799\end{array}\right]$. The optimum covariance matrix is of the form
\begin{displaymath}
\mathbf{\Sigma_{p, net}} = \left[\begin{array}{cc} 5 & 5\rho \\ 5\rho & 5\end{array}\right],
\end{displaymath}
where $\rho$ is the correlation between the two transmitters. Therefore, the rate achieved by the licensed user is
\begin{displaymath}
R_p (\rho) = \frac{1}{2}\log(1 + \mathbf{G} \mathbf{\Sigma_{p, net}} \mathbf{G}^{\dagger}).
\end{displaymath}
The maximum rate is attained at $\rho = 1$ and the maximum value of $R_p$ is $2.3542$.

Maximizing $R_p$ over $\mathcal{R}_{part, out}^{\alpha}$ : For a given $\alpha$, this reduces to a single user MIMO channel with $\mathbf{G_{\alpha}} = \left[\begin{array}{cc}\mathbf{H_{p,p}} & \mathbf{H_{c,p}}/\sqrt{\alpha}\end{array}\right]$ and a sum power constraint of $P_p + \alpha P_c$. Note that, there is a significant difference between the two single user MIMO channels. The MIMO channel that we considered when solving the maximum value of $R_p$ in the achievable region had individual power constraints at the licensed and cognitive transmitters. However, the MIMO channel we obtain when solving for the maximum value of $R_p$ over $\mathcal{R}_{part, out}^{\alpha}$ has a sum power constraint. This is a conventional MIMO channel and the optimum covariance matrix is obtained by water-filling. For a given $\alpha$, the best $R_p$ is got by
\begin{eqnarray}
\begin{array}{c} \max R_p (\alpha) = \frac{1}{2}\log\left| \mathbf{I} + \mathbf{G_{\alpha}}\mathbf{\Sigma_{p, net}}\mathbf{G_{\alpha}}\right| \\
\textrm{such that } \mathrm{Tr}(\mathbf{\Sigma_{p, net}}) \leq P_p + \alpha P_c. \end{array}\nonumber
\end{eqnarray}
It is easy to solve this problem if we look at the flipped channel $\mathbf{G_{\alpha}^{\dagger}}$. The capacity of the flipped channel is given by
\begin{align}
R_p(\alpha) &= \frac{1}{2}\log\left| \mathbf{I} + \mathbf{G_{\alpha}^{\dagger}} (P_p + \alpha P_c) \mathbf{G_{\alpha}}\right|\nonumber\\
 &= \frac{1}{2}\log\left(1 + (P_p + \alpha P_c) \mathbf{G_{\alpha}}\mathbf{G_{\alpha}^{\dagger}}\right).\nonumber
\end{align}
Note that $R_p(\alpha)$ is an outer bound on the maximum value of $R_p$. The best upper bound is got by minimizing over all possible values of $\alpha$. The optimum value of $\alpha$ is got by solving a cubic equation $ 2(0.799)^2 \alpha^3 + (0.799)^2\alpha^2 - 1.4435^2 = 0$, and its approximate value is $0.9689$.
\section{Conclusions}\label{sec : conclusions}
In this paper, we derived an achievable region, $\mathcal{R}_{in}$ given by (\ref{eqn : achievable region})
and an outer bound, $\mathcal{R}_{out}^{\alpha, \mathbf{\Sigma_z}}$ given by (\ref{eqn : converse region}) for the MIMO
cognitive channel. We describe conditions when the achievable region is $\mu$-sum
optimal for any $\mu \geq 1$. In particular, for any $\mu \geq 1$, there exists $\alpha^* \in (0, \infty)$, such that if the region given by $\mathcal{R}_{part,out}^{\alpha^*}$ optimizes the $\mu-$ sum rate of the SMBC (for that particular $\alpha^*$), then the achievable region achieves the $\mu$-sum capacity of the MCC.

\end{document}